\def\rset{\mathbb R}
\def\zset{\mathbb Z}
\def\eqsp{\;}
\newcommand{\pscal}[2]{\left\langle#1,#2\right\rangle}
\newcommand{\eqdef}{\ensuremath{\stackrel{\mathrm{def}}{=}}}
\def\Xset{\mathcal{X}} 
\def\B{\mathcal{B}} 
\def\e{\mathcal{E}}
\def\m{\mathsf{m}}
\def\G{\mathcal{G}}
\def\D{\mathcal{D}}
\def\A{\mathcal{A}}
\def\H{\mathcal{H}}
\newcommandx\sequence[3][2=t,3=\zset]
\def\PP{\mathbb{P}} 
\newcommand{\CPP}[3][]
{\ifthenelse{\equal{#1}{}}{{\mathbb P}\left(\left. #2 \, \right| #3 \right)}{{\mathbb P}_{#1}\left(\left. #2 \, \right | #3 \right)}}
\def\PE{\mathbb{E}} 
\newcommand{\CPE}[3][]
{\ifthenelse{\equal{#1}{}}{{\mathbb E}\left[\left. #2 \, \right| #3 \right]}{{\mathbb E}_{#1}\left[\left. #2 \, \right | #3 \right]}}
\def\wtilde{\widetilde} 
\def\tv{\mathrm{tv}}
\def\Cset{\mathcal{C}} 
\newcommand{\abs}[1]{\left\vert#1\right\vert}
\theoremstyle{plain}
\newtheorem{theorem}{Theorem}
\newtheorem{assumption}{H\hspace{-3pt}}
\newaliascnt{proposition}{theorem}
\newtheorem{proposition}[proposition]{Proposition}
\newaliascnt{lemma}{theorem}
\newtheorem{lemma}[lemma]{Lemma}
\newaliascnt{corollary}{theorem}
\theoremstyle{definition}
\newaliascnt{definition}{theorem}
\newtheorem{algorithm}{Algorithm}
\newaliascnt{remark}{theorem}
\newtheorem{remark}[remark]{Remark}
\newaliascnt{example}{theorem}
\def\rmd{\mathrm{d}}
\def\1{\mathbbm{1}}
\DeclareMathOperator*{\argmin}{{\mathsf{Argmin}}}
\DeclareMathOperator*{\argmax}{{\mathsf{Argmax}}}
\begin{document}

  \title[The One-step Laplace Approximation]{Laplace approximation for Bayesian variable selection via Le Cam's one-step procedure}
  \thanks{This work is partially supported by the NSF grant DMS 2210664}
  
  \author{Tianrui Hou}\thanks{Tianrui Hou: Boston University, 675 Commonwealth Av., Boston 02215 MA, United States. {\em E-mail address:} tianruih@bu.edu}
  \author{Liwei Wang}\thanks{Liwei Wang: Boston University, 675 Commonwealth Av., Boston 02215 MA, United States. {\em E-mail address:} wlwfoo@bu.edu}
  \author{Yves Atchad\'e}\thanks{ Y. Atchad\'e: Boston University, 675 Commonwealth Av., Boston 02215 MA, United States. {\em E-mail address:} atchade@bu.edu}
  
  \subjclass[2010]{62F15, 62Jxx}
  
  \keywords{Regression models, Variable selection, High-dimensional Bayesian inference, Laplace approximation,  Markov Chain Monte Carlo mixing times}
  
  \maketitle

  \begin{center} (May 2024) \end{center}
  
  \begin{abstract}
Variable selection in high-dimensional spaces is a pervasive challenge in contemporary scientific exploration and decision-making. However, existing approaches  that are known to enjoy strong statistical guarantees often struggle to cope with the computational demands arising from the high dimensionality. To address this issue, we propose a novel Laplace approximation method based on Le Cam's one-step procedure (\textsf{OLAP}), designed to effectively tackles the computational burden. Under some classical high-dimensional assumptions we show that \textsf{OLAP} is a statistically consistent variable selection procedure. Furthermore, we show that the approach produces a posterior distribution that can be explored in polynomial time using a simple Gibbs sampling algorithm. Toward that polynomial complexity result, we also made some general, noteworthy contributions to the mixing time analysis of Markov chains. We illustrate the method using logistic and Poisson regression models applied to simulated and real data examples.
\end{abstract}

  \section{Introduction}
  Variable selection and model selection methods are crucial tools in the advancement on many data-driven scientific problems (\cite{ohara:sill:09,fan2010selective}). However existing methods that are known to be statistically consistent in high-dimensional regimes are computationally expensive to deploy (\cite{chen:chen:08,luo:chen:13,barber2016laplace}). As a result these methods are often implemented using greedy searches that may lack convergence guarantees. 
We develop in this work a new Laplace approximation that yields a variable selection procedure that is both consistent, and computationally inexpensive to implement. Although our methodology applies more generally, we focus the presentation on a class of generalized linear regression (GLM) models with known dispersion parameter. Specifically, we assume that we have a data set $\D\eqdef\{(y_i,{\bf x}_i),\;1\leq i\leq n\}$, with random response $y_i\in\rset$, and nonrandom explanatory variable ${\bf x}_i\in\rset^p$ collected from $n$ independent units. We consider a GLM model where, up to additive constant that we ignore, the log-likelihood function is given by
  \begin{equation}\label{ll:fun}
    \ell(\theta;\D) \eqdef \sum_{i=1}^n y_i\pscal{\theta}{{\bf x}_i} -\psi\left(\pscal{\theta}{{\bf x}_i}\right),\;\;\theta\in\rset^p,\end{equation}
  for some known function $\psi:\rset\to\rset$, and where $\theta\in\rset^p$ is the parameter of interest. We assume throughout that the model is well-specified, with a  true value of the parameter $\theta_\star\in\rset^p$ that satisfies  
  \[\PE\left(y_i - \psi'(\pscal{{\bf x}_i}{\theta_\star})\right) =0,\]
where $\psi'$ denotes the derivative of $\psi$. We consider the classical high-dimensional scenario where $p$ is potentially much larger than the sample size $n$, but $\theta_\star$ is believed to be sparse. The variable selection problem in this context corresponds to finding the support of $\theta_\star$.  We take a Bayesian approach and introduce an additional variable $\delta\in\Delta\eqdef\{0,1\}^p$ that encodes the support of $\theta$, with prior distribution 
  \begin{equation}\label{prior:1}
    \pi(\delta) \propto \left(\frac{1}{p^\mathsf{u}}\right)^{\|\delta\|_0},\;\;\; \delta\in\Delta,\end{equation}
  for some user-defined parameter $\mathsf{u}>0$. Furthermore in our prior specification, we assume that given $\delta$ the components of $\theta$ are independent, and $\theta_j\vert\{\delta_j=1\}$ follows a Gaussian prior $\textbf{N}(0,1)$, whereas $\theta_j\vert\{\delta_j=0\}$ follows a degenerate point-mass distribution with full mass at $0$. The variable selection problem therefore boils down to  sampling from (or computing the mode of) the marginal posterior distribution of $\delta$ given the data $\D$ given by
  \begin{multline}\label{post:Pi}
    \Pi(\delta\vert \D) \propto \left(\frac{1}{p^\mathsf{u}}\sqrt{\frac{1}{2\pi}}\right)^{\|\delta\|_0} \;\int_{\rset^{\|\delta\|_0}}\; \exp\left(-\frac{1}{2}\|w\|_2^2 + \ell^\delta(w;\D)\right)\rmd w,\;\\
    \;\;\;\mbox{ where }\;\;\; \ell^\delta(w;\D) \eqdef  \ell((w,0)_\delta;\D),\end{multline}
  and where the notation $(w,0)_\delta$ denotes the vector of $\rset^p$ obtained from $w$ by adding $0$ at components where $\delta_j = 0$ (see Section \ref{sec:notations} below for a more precise definition).  When the log-likelihood function $\ell$ is quadratic,  the integrals in (\ref{post:Pi}) have an explicit form, and it is possible to sample from (\ref{post:Pi}) using off-the-shelves MCMC methods on $\Delta$. This approach -- sometimes with slightly different priors -- has been a cornerstone of Bayesian variable selection for several decades (\cite{george1993variable,ohara:sill:09,guan:stephens:11,yang:etal:15}). Further MCMC advancements to deal with this type of discrete posterior distributions have appeared recently in the literature, although not considered here (\cite{zanella:roberts:19,griffin:etal:20,chang:quan:2024}).
  
It is worth noting that this marginalization strategy is generally not applicable  beyond the quadratic case. One common solution is data augmentation that samples jointly a model $\delta$ together with its corresponding parameter, using trans-dimensional MCMC algorithms (\cite{green2003trans,lamnisos:etal:09}). However, this strategy is known to scale poorly, particularly because trans-dimensional MCMC algorithms are difficult to tune (\cite{brooksetal03}) and their sampling complexity is poorly understood. A related idea developed in (\cite{carlin:chib:95,AB:19}), consists in complementing the parameter set of each model $\delta$ with additional variables drawn from the so-called pseudo-prior distribution.  In the same vein, several carefully constructed priors that are absolutely continuous with respect to the Lebesgue measure have also been proposed and studied in the literature (\cite{polson:james:11,narissety:he:14,rockova:george:18}; see \cite{bhadra:etal:19} for a review). These different approaches circumvent the trans-dimensional nature of the posterior distribution, but requires additional tuning of the prior distribution. Furthermore, an important benefit of the marginalization strategy in (\ref{post:Pi}) is lost: integrating out $\theta$ and sampling only from $\delta$ (the so-called collapsed Gibbs sampler) typically improves MCMC mixing (see e.g \cite{liu:94}).
  
A popular solution in cases where the integrals in (\ref{post:Pi}) are not available in closed form is the use of Laplace approximation. The modes of the resulting posterior approximation are known to be equivalent to the BIC or e-BIC (\cite{original_bic,chen:chen:08}). In (\cite{barber2016laplace}) it is shown that under some regularity conditions, and for a class of generalized linear models, replacing the integrals in (\ref{post:Pi}) by their Laplace approximations still produces consistent variable selection. Specifically the method replaces the function $\ell^\delta(\cdot;\D)$ by its second order Taylor expansion around the estimator
  \begin{equation}\label{hat:theta}
    \hat\theta^{\delta} \eqdef \argmax_{w\in\rset^{\|\delta\|_0}}\left[-\frac{1}{2}\|w\|_2^2 + \ell^\delta(w;\D)\right].\end{equation}
  The major limitation of this approach is that the Laplace approximation requires the computation of the estimators $\hat\theta^{\delta}$ (as well as the Hessian matrix of $\ell^\delta$ at $\hat\theta^\delta$) for each $\delta$, which is typically computationally expensive. In particular in the GLM considered in this work, these estimators are not available in closed form and requires the use of a numerical solver. 

In the frequentist realm, the high-dimensional variable selection problem in a GLM model with likelihood such as (\ref{ll:fun}) is often framed as the best-subset selection problem that solves the minimization problem
\[\min -\ell(\theta;\D) \;\; \mbox{ subject to }\;\; \|\theta\|_0\leq s,\]
or some relaxation thereof, for a specified sparsity level $s$ (\cite{beck:eldar:13,hastie:etal:15}). These estimators have seen  renewed interest due to recent optimization advances to tackle an otherwise NP-hard problem (\cite{bertsimas:etal:2016,hazimeh:mazumder:20}). One drawback of this approach is the limited understanding of the statistical properties of the resulting estimators and the lack of effective and adaptive methods for selecting the sparsity level $s$.

  \subsection{Main contributions}
In this work we revisit the Laplace approximation method outlined above.  We circumvent the computational challenge of the method by introducing a simple variant of the Laplace approximation based on Le Cam's one-step procedure. The one-step procedure that goes back to (\cite{lecam:69}), is a well-known scheme to improve on a given estimator through a one step Newton-Raphson update. We refer the reader to (\cite{vdv:98}~Section 5.7) for some basic properties. The method has also seen a resurgence in popularity in recent years in the high-dimensional literature as a way to de-bias high-dimensional estimators (\cite{geer:etal:14,javanmard:etal:14,xia2020revisit}). Hence, starting from an initial estimator $\wtilde\theta$ that is computed once, we propose  fast approximations of the estimators $\hat\theta^{\delta}$ for any $\delta$, using Le Cam's one-step procedure. This leads to a new Laplace approximation that we call one-step Laplace approximation (\textsf{OLAP}). Under some basic high-dimension regularity conditions (basically restricted strong convexity, and limited coherence of Hessian matrices) we show that the method has a statistical performance comparable to the standard Laplace approximation with knowledge of the estimators $\hat\theta^{\delta}$, while being an order of magnitude faster.  We also found empirically that \textsf{OLAP} is statistically more accurate than other state-of-the-art methods such as Skinny-Gibbs (\cite{narisetty2018skinny}) and SparseVB (\cite{ray2020spike}) for variable selection in logistic regression.

We also analyze a simple Gibbs sampler to sample from the posterior distribution of \textsf{OLAP}. Under the same regularity assumptions mentioned above, we show that a plain Gibbs sampler applied to the posterior distribution of \textsf{OLAP} has a mixing time that scales polynomially with $(n,p)$. Toward the proof we also derive some general results on the mixing times of Markov Chain Monte Carlo methods that may be of independent interest. In particular we establish in Theorem \ref{lem:cheeger} a new generalization of Cheeger's inequality for Markov chains that connects the $\epsilon$-conductance of \cite{lovasz:simonovits93} and the approximate spectral gap of \cite{atchade:asg}.
  
  \subsection{Outline} 
  We end this introduction with some general notations that are employed throughout the paper. The one-step Laplace approximation of (\ref{post:Pi}) and some basic statistical and computational properties are derived in Section \ref{sec:olap}. Numerical illustrations of the methods are presented in Section \ref{sec:num}, and all the technical proofs are collected in the appendix. We close the paper with a brief summary in Section \ref{sec:conclusion}. Section \ref{sec:mix:mc} contains some new results on the mixing times of MCMC algorithms that we use in the proof of Theorem \ref{thm:mix}.
  
  \subsection{Notations}\label{sec:notations}
  We also collect here  our notations on sparse models.  Throughout our parameter space is $\rset^p$ equipped with its Euclidean norm $\|\cdot\|_2$ and inner product $\pscal{\cdot}{\cdot}$.  We also use $\|\cdot\|_0$ which counts the number of non-zero elements, and $\|\cdot\|_\infty$ which returns the largest absolute value. 
  
  We set $\Delta\eqdef\{0,1\}^p$. Elements of $\Delta$ are called sparsity structures, or supports. For $\delta,\delta'\in\Delta$, we define $\min(\delta,\delta')$ as the component-wise minimum of the vectors $\delta$ and $\delta'$. And we write $\delta\subseteq \delta'$ if $\min(\delta,\delta')=\delta$. Equivalently, $\delta\subseteq\delta'$ if $\delta_j\leq \delta_j'$ for all $1\leq j\leq p$, and we write $\delta\supseteq\delta'$ if $\delta'\subseteq\delta$. 
  
  Given $\delta\in\Delta$, and $\theta\in\rset^p$, we write $\theta_\delta$ to denote the  component-wise product of $\theta$ and $\delta$, and $\delta^c\eqdef 1-\delta$. Assuming $\|\delta\|_0=s$, we will use the notation $[\theta]_\delta$ to denote the vector $(\theta_{j_1},\ldots,\theta_{j_s})$, where $j_i$ is the $i$-th component of $\delta$ that is non-zero.  Conversely, for $w\in\rset^{s}$, we define $(w,0)_\delta$ as the element of $\rset^p$ such that $[(w,0)_\delta]_\delta = w$. At times we will abuse the notation and use $\theta_\delta$ and $[\theta]_\delta$ interchangeably.

  \section{The one-step Laplace approximation}\label{sec:olap}
  With $\ell^\delta$ as in (\ref{post:Pi}), we define
  \[\bar\ell^\delta(w;\D) \eqdef \ell^\delta(w;\D) -\frac{1}{2}\|w\|_2^2,\;\; w\in\rset^{\|\delta\|_0}.\]
  The Laplace approximation of the posterior distribution $\Pi(\cdot\vert\D)$ is formed by replacing $\bar\ell^\delta$ in the posterior (\ref{post:Pi}) by its second order Taylor approximation around the mode $\hat\theta^{\delta}$,  where  $\hat\theta^{\delta}$ is  as defined  in (\ref{hat:theta}). The resulting Gaussian integral has an explicit form, which leads to the approximation of (\ref{post:Pi})  given by
  \begin{equation}\label{basic:lappace:approx}
    \hat\Pi(\delta\vert \D) \propto \left(\frac{1}{p^\mathsf{u}}\right)^{\|\delta\|_0}  \frac{e^{\bar\ell^\delta(\hat\theta^{\delta};\D)}}{\sqrt{\det\left(\hat\H^\delta\right)}},\;\;\mbox{ where }\;\;\hat\H^\delta\eqdef -\nabla^{(2)}\bar\ell^\delta(w;\D)\vert_{w=\hat\theta^{\delta}}. \end{equation}
  \cite{barber2016laplace} studied the statistical properties of $\hat\Pi$ and showed that it is consistent (in the sense that it puts high probability on the true model) under some regularity conditions, as $n,p\to\infty$. However as observed above, in general the estimator $\hat\theta^{\delta}$ is computationally expensive to obtain, making the approach difficult to use in large-scale applications.

  We propose \textsf{OLAP}, a simple modification based on Le Cam's one-step device. The basic idea is as follows. For any $v_0\in\rset^{\|\delta\|_0}$, using a second order Taylor approximation of the function $\bar\ell^\delta(\cdot;\D)$ around $v_0$, we can approximate the integral $\int_{\rset^{\|\delta\|_0}}\; e^{\bar\ell^\delta(w;\D)}\rmd w$ by
  \begin{equation}\label{laplace:approx}
    \frac{(2\pi)^{\frac{\|\delta\|_0}{2}}}{\sqrt{\det(\H^{v_0})}} \exp\left(\bar\ell(v_0;\D) + \frac{1}{2}(\G^{v_0})^{\texttt{T}}(\H^{v_0})^{-1} \G^{v_0}\right).\end{equation}
  In the above expression, $\G^{v_0}\eqdef \nabla\bar\ell^\delta(w;\D)\vert_{w=v_0}$, and $\H^{v_0} \eqdef -\nabla^{(2)}\bar \ell^\delta(w;\D)\vert_{w=v_0}$. The approximation performs at its best when $v_0$ is near $\hat\theta^{\delta}$, the mode of $\bar\ell^\delta$, and this is what the standard Laplace approximation does. Let $\wtilde\theta\in\rset^p$ be some initial estimator of $\theta_\star$. The estimator $\wtilde\theta$ is computed only once, and for any $\delta\in\Delta$, we can take $\wtilde\theta^{\delta}\eqdef[\wtilde\theta]_\delta$ (where $[\wtilde\theta]_\delta$ is the sub-vector of $\tilde\theta$ that collects the components of $\tilde\theta$ for which $\delta_j=1$) as an initial approximation of $\hat\theta^{\delta}$, that we then improve upon using the one-step procedure. Hence our proposed approximation of $\hat\theta^{\delta}$ is 
  \begin{equation}\label{theta:check}
    \check\theta^{\delta}\eqdef   \tilde\theta^{\delta} + (\tilde\H^\delta)^{-1}\tilde \G^\delta,
  \end{equation}
  where
  \begin{equation}\label{def:GH}
    \tilde \G^\delta\eqdef \nabla\bar\ell^\delta(w;\D)\vert_{w=\tilde\theta^{\delta}},\;\;\;\mbox{ and }\;\; \tilde \H^\delta\eqdef -\nabla^{(2)}\bar \ell^\delta(w;\D)\vert_{w=\tilde\theta^{\delta}}.\end{equation}
  The estimator $\check\theta^{\delta}$ is obtained from $\tilde\theta^{\delta}$ by a single Newton-Raphson update. In statistical estimation theory, it is well-known that this single step is enough to improve any rate-optimal initial estimator into an efficient rate optimal estimator, by moving it toward the related M-estimator (\cite{vdv:98,brouste2021onestep}). Versions of the procedure have also appeared in the recent statistics literature to de-bias and otherwise improve upon classical high-dimensional estimators (\cite{geer:etal:14,javanmard:etal:14,xia2020revisit}). Hence we expect similar improvements in our context.
  
  Using $\check\theta^{\delta}$ and (\ref{laplace:approx}) leads to the approximation of $\Pi$ given by
  \begin{equation}\label{Pi:check:pre}
    \delta\mapsto\left(\frac{1}{p^\mathsf{u}}\right)^{\|\delta\|_0}\frac{\exp\left(\bar\ell^\delta(\check\theta^{\delta};\D)+ \frac{1}{2}(\check\G^\delta)^{\texttt{T}}(\check\H^\delta)^{-1}\check\G^\delta\right)}{\sqrt{\det\left(\check\H^\delta\right)}},\;\;\delta\in\Delta,
  \end{equation}
  where $\check \G^\delta$ and $\check\H^\delta$ are defined as in (\ref{def:GH}), but replacing $\tilde\theta^{\delta}$ by $\check\theta^{\delta}$.  We note that if instead of a single update, we iterate (\ref{theta:check}) a large number of times, then naturally $\check\theta^{\delta}\approx \hat\theta^{\delta}$, and $\check\G^\delta\approx 0$, and (\ref{Pi:check:pre}) becomes (\ref{basic:lappace:approx}). Hence we propose to drop the term $(\check\G^\delta)^{\texttt{T}}(\check\H^\delta)^{-1}\check\G^\delta$, as well as the log-determinant term in (\ref{Pi:check:pre}), leading to OLAP, our proposed Laplace approximation 
  \begin{equation}\label{Pi:check:2}
    \check\Pi(\delta\vert \D) \propto \left(\frac{1}{p^\mathsf{u}}\right)^{\|\delta\|_0}  e^{\bar\ell^\delta(\check\theta^{\delta};\D)},\;\;\delta\in\Delta.
  \end{equation}

We then solve the variable selection problem by sampling from the OLAP distribution (\ref{Pi:check:2}).

  \subsubsection{Initial estimator $\tilde\theta$}
Throughout we take the initial estimator $\tilde\theta\in\rset^p$ as either lasso, the ridge regression estimator, or more generally the elastic-net estimator defined as
  \[\argmin_{\theta\in\rset^p}\;\left[\sum_{i=1}^n -y_i\pscal{\theta}{{\bf x}_i} +\psi\left(\pscal{\theta}{{\bf x}_i}\right) + \lambda_1 \|\theta\|_1 + \frac{\lambda_2}{2}\|\theta\|_2^2 \right],\]
  for some well-tuned regularization parameters $\lambda_1\geq 0$, $\lambda_2\geq 0$. We refer the reader to \cite{hastie:etal:15} and the extensive literature on these high-dimensional estimators. 

  \subsection{Statistical properties} We recall that we have assumed that our data $\D=\{(y_i,{\bf x}_i),\;1\leq i\leq n\}$ is generated from a well-specified  GLM model, with  true value of the parameter $\theta_\star$ with support $\delta_\star$, and we set $s_\star\eqdef\|\theta_\star\|_0$. In this section we search for conditions under which the \textsf{OLAP} posterior distribution introduced in (\ref{Pi:check:2}) concentrates around $\delta_\star$. We start with the following general definition.  Let $\{\hat\beta^\delta,\;\delta\in\Delta\}$ be some arbitrary family of estimators, where for $\delta\in\Delta$, $\hat\beta^\delta\in\rset^{\|\delta\|_0}$. We say that the family $\{\hat\beta^\delta,\;\delta\in\Delta\}$ is variable selection consistent if there exists positive constants $c_1,c_2<\infty$ such that the following two properties holds. 
  \begin{enumerate}
    \item For all  $\delta\in\Delta$, and $\delta_0\subseteq\delta$, with $\min(\delta,\delta_\star)=\delta_0$, it holds
\begin{equation}\label{msc:eq1}
\bar\ell(\hat\beta^\delta;\D) \leq \bar\ell(\hat\beta^{\delta_0};\D)  + c_1 (\|\delta\|_0 - \|\delta_0\|_0)\log(p),
\end{equation}
    \item and for all  $\delta_0,\delta\in\Delta$ such that $\delta_0\subseteq\delta\subseteq\delta_\star$, it holds
\begin{equation}\label{msc:eq2}
\bar\ell(\hat\beta^{\delta};\D) \geq \bar\ell(\hat\beta^{\delta_0};\D)  + c_2 (\|\delta\|_0 - \|\delta_0\|_0)n.\end{equation}
  \end{enumerate}
  
  \begin{remark}
    This definition basically says that $\{\hat\beta^\delta,\;\delta\in\Delta\}$ is model selection consistent if increasing a model $\delta_0$ by adding a non-relevant variable increases the  log-likelihood $\bar\ell(\hat\beta^{\delta_0};\D)$ only by a factor at most $\log(p)$, whereas adding a relevant variable increases the log-likelihood by a factor $n$. The behavior depicted in this definition is generally the expected behavior of the (maximum) log-likelihood ratio. However establishing that this behavior prevails for a given model can sometimes be  very challenging, particularly in the high-dimensional setting.
  \end{remark}
  
  Our first result shows that if the one-step family $\{\check\theta^{\delta},\;\delta\in\Delta\}$ is variable selection consistent as defined above, then the OLAP distribution (\ref{Pi:check:2}) puts most of its probability mass on $\delta_\star$. For $j\geq 0$, we define
  \[\A_j\eqdef \left\{\delta\in\Delta:\; \delta\supseteq \delta_\star,\;\mbox{ and }\;\; \|\delta\|_0\leq s_\star +j\right\}.\]

  \begin{theorem}\label{thm:post:contr}
Assume $p\geq 2$. Suppose that the family of one-step estimators $\{\check\theta^\delta,\;\delta\in\Delta\}$ introduced in (\ref{theta:check}) is model selection consistent with constants $c_1,c_2$. Consider the posterior distribution (\ref{Pi:check:2}), and suppose that the sparsity parameter $\mathsf{u}$ satisfies
    \begin{equation}\label{cond:u}
      \mathsf{u}\geq 2(1 + c_1),
    \end{equation}
    and the sample size $n$ satisfies 
    \begin{equation}\label{cond:ss}
      c_2 n\geq 2(\mathsf{u}+1)\log(p).
    \end{equation}
    Then for all $j\geq 0$, we have
    \begin{equation}\label{eq:post:star}
      \check\Pi\left(\A_j\vert \D\right)  \geq 1 - 2\left(\frac{1}{p^{\frac{\mathsf{u}(j+1)}{2}}} + \frac{2}{e^{c_2n/4}}\right).
    \end{equation}
  \end{theorem}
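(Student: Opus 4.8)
The plan is to bound the complement $\check\Pi(\A_j^c\vert\D)$ directly from the definition of the OLAP posterior in~(\ref{Pi:check:2}), by splitting $\Delta\setminus\A_j$ into two regimes: the ``over-sized'' supports that contain $\delta_\star$ but have more than $s_\star+j$ active coordinates, and the supports that miss at least one relevant variable. For any $\delta$, write $\delta_0=\min(\delta,\delta_\star)$, so that $\delta_0\subseteq\delta$ and $\delta_0\subseteq\delta_\star$. I would first use property~(\ref{msc:eq1}) of variable selection consistency to control $\bar\ell(\check\theta^\delta;\D)$ in terms of $\bar\ell(\check\theta^{\delta_0};\D)$, and then, when $\delta_0\subsetneq\delta_\star$, use property~(\ref{msc:eq2}) (applied with $\delta_0\subseteq\delta_\star$) in the other direction to compare $\bar\ell(\check\theta^{\delta_0};\D)$ against $\bar\ell(\check\theta^{\delta_\star};\D)$. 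Normalizing everything by the unnormalized mass $(1/p^{\mathsf u})^{s_\star}e^{\bar\ell(\check\theta^{\delta_\star};\D)}$ at $\delta_\star$ itself (which lies in $\A_j$ for every $j$, so it is a legitimate lower bound for the normalizing constant), the likelihood terms telescope and one is left with purely combinatorial sums over sparsity patterns.

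Concretely, for the first regime I would show that for $\delta\supseteq\delta_\star$ with $\|\delta\|_0=s_\star+k$ and $k\ge j+1$, the ratio of unnormalized masses is at most $p^{-\mathsf u k}\cdot e^{c_1 k\log p}=p^{-(\mathsf u-c_1)k}$; summing over the at most $\binom{p}{k}\le p^k$ such $\delta$ and over $k\ge j+1$ gives a geometric series bounded by a constant times $p^{-(\mathsf u-c_1-1)(j+1)}$, and the hypothesis $\mathsf u\ge 2(1+c_1)$ makes $\mathsf u-c_1-1\ge \mathsf u/2$, yielding the first term $p^{-\mathsf u(j+1)/2}$ in~(\ref{eq:post:star}). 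For the second regime, where $\delta_0=\min(\delta,\delta_\star)$ is a strict subset of $\delta_\star$, I would combine the upward bound~(\ref{msc:eq1}) on $\bar\ell(\check\theta^\delta;\D)-\bar\ell(\check\theta^{\delta_0};\D)$ with the downward bound~(\ref{msc:eq2}) giving $\bar\ell(\check\theta^{\delta_0};\D)\le \bar\ell(\check\theta^{\delta_\star};\D)-c_2(s_\star-\|\delta_0\|_0)n$; the dominant cost is $e^{-c_2 m n}$ where $m=s_\star-\|\delta_0\|_0\ge 1$ is the number of missed relevant variables, and~(\ref{cond:ss}) ensures $e^{-c_2 n/2}\le p^{-(\mathsf u+1)}$ so the combinatorial factors $p^{(\mathsf u+c_1)(\text{size difference})}$ are absorbed, leaving the $e^{-c_2 n/4}$-type term in~(\ref{eq:post:star}) after summing the geometric series.

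The main obstacle is the bookkeeping in the second regime: a general $\delta\notin\A_j$ can simultaneously miss relevant variables \emph{and} include spurious ones, so the comparison to $\delta_\star$ must pass through the intermediate model $\delta_0=\min(\delta,\delta_\star)$ and one needs both consistency inequalities at once, with the combinatorial enumeration done carefully over the two independent choices (which $m$ relevant coordinates are dropped, which extra coordinates among the remaining $p-s_\star$ are added). One has to check that the prior penalty $p^{-\mathsf u}$ per added variable, together with the slack between $\mathsf u$ and $c_1$, dominates the $\binom{p-s_\star}{\cdot}\le p^{\cdot}$ enumeration of spurious additions, while the $e^{-c_2 n}$ gain per missed relevant variable dominates the $\binom{s_\star}{m}\le p^m$ enumeration of omissions via~(\ref{cond:ss}); once these two geometric sums are seen to converge, assembling them into the stated bound with the explicit constants $2\big(p^{-\mathsf u(j+1)/2}+2e^{-c_2 n/4}\big)$ is routine, modulo tracking the factors of $2$ coming from bounding $\sum_{k\ge 1}x^k$ by $2x$ when $x\le 1/2$ (which is where $p\ge 2$ and the margins in~(\ref{cond:u})--(\ref{cond:ss}) are used).
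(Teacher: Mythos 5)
Your proposal is correct and follows essentially the same route as the paper: compare each $\delta\notin\A_j$ to $\delta_0=\min(\delta,\delta_\star)$ via (\ref{msc:eq1}) and then to $\delta_\star$ via (\ref{msc:eq2}), lower-bound the normalizing constant by the unnormalized mass at $\delta_\star$, and absorb the combinatorial enumerations into geometric series using (\ref{cond:u}) and (\ref{cond:ss}). The paper merely packages the same computation as a partition $\Delta=\bigcup_{\delta_0\subseteq\delta_\star}\Delta(\delta_0)$ together with a two-part preliminary claim (a relative-mass bound of $2$ on each $\Delta(\delta_0)$ and a tail bound on $\|\delta\|_0$), which is organizational rather than substantive.
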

  \begin{proof}
    See Section \ref{sec:proof:thm:post:contr}.
  \end{proof}
  
\begin{remark}
We note that the result holds true for $j=0$ and yields
\[\check\Pi\left(\delta_\star\vert \D\right)  \geq 1 - 2\left(\frac{1}{p^{\frac{\mathsf{u}}{2}}} + \frac{2}{e^{c_2n/4}}\right).\]
The theorem assumes that the family of one-step estimators $\{\check\theta^\delta,\;\delta\in\Delta\}$ is model selection consistent. We investigate this key assumption below. An important point to make here is that in fact any family of estimators that is model selection consistent as defined above yields a consistent estimation of $\delta_\star$. Specifically, it can be similarly shown that if a family of estimators $\{\hat\beta^\delta,\;\delta\in\Delta\}$ is model selection consistent then the distribution 
\[\delta\mapsto \exp\left(-\mathsf{u}\|\delta\|_0\log(p) + \bar\ell^\delta(\hat\beta^\delta;\D)\right),\;\;\delta\in\Delta,\]
satisfies the conclusion of the theorem. Of course, this property is of interest only if the estimators $\hat\beta^\delta$ are easy to compute.
\end{remark}

\subsection{MCMC sampling and mixing time}
  One of the main advantage of OLAP is that the distribution (\ref{Pi:check:2}) can be easily explored using a simple Gibbs sampler on the discrete space $\Delta$.  We note that  the conditional distribution of $\delta_j$ given $\delta_{-j}$ is the Bernoulli distribution $\textbf{Ber}(q_j(\delta))$, with probability of success given by
  \begin{equation}\label{cond:dist:eq:1}
    q_j(\delta) \eqdef \left(1 +\exp\left(\mathsf{u}\log(p) + \bar\ell^{\delta^{(j,0)}}(\check\theta_{\delta^{(j,0)}};\D) - \bar\ell^{\delta^{(j,1)}}(\check\theta_{\delta^{(j,1)}};\D) \right)\right)^{-1}.\end{equation}
  where $\delta^{(j,0)}$ (resp $\delta^{(j,1)}$) is equal to $\delta$ except at component $j$ where $\delta^{(j,0)}_j=0$ (resp. $\delta^{(j,1)}_j=1$). The resulting  Gibbs sampler is presented in Algorithm \ref{algo:1}.
  
\bigskip

  \begin{algorithm}[Gibbs sampler for OLAP]\label{algo:1}$\hrulefill$\\
Pick $\delta^{(0)}\in\Delta$ the initial state. Repeat the following steps for $k=0,\ldots$. Given $\delta^{(k)} = \delta\in\Delta$:
    \begin{enumerate}
      \item Set $\bar\delta = \delta$. Randomly and uniformly select an ordered subset $\mathsf{J}\subset\{1,\ldots,p\}$ of size $J$.
      \item Update $(\bar\delta_{\mathsf{J}_1},\ldots,\bar\delta_{\mathsf{J}_J})$  sequentially: for $\iota=1,\ldots, J$, draw $V_\iota\sim \textbf{Ber}(q_{\mathsf{J}_\iota}(\bar\delta))$, where $q_j$ is as in (\ref{cond:dist:eq:1}), and set $\bar \delta_{\mathsf{J}_\iota}=V_\iota$.
      \item   Set $\delta^{(k+1)} = \bar \delta$.
    \end{enumerate}
    \vspace{-0.6cm}
  \end{algorithm}
  $\hrulefill$
  \bigskip
  
  The cost of computing $q_j(\delta)$ is driven by the cost of forming the matrix $\tilde{\H}^\delta$ in (\ref{theta:check}), plus the cost of the Cholesky factorization needed to perform the Newton-Raphson update. Hence the computational cost of $q_j(\delta)$ is of order $n\|\delta\|_0^2 + \|\delta\|_0^3$. As a result, we see that the computation cost of the $k$-th iteration of Algorithm \ref{algo:1} is of order
  \[J\times \max\left(n\|\delta^{(k)}\|_0^2,\; \|\delta^{(k)}\|_0^3\right).\]
  
We analyze the mixing time of Algorithm \ref{algo:1}. We focus on the case $J=1$, where the resulting Markov chain is reversible and positive. In the case $J>1$, a similar analysis can be developed for the lazy version of the algorithm, but we will not pursue this. Assuming the data $\D$ fixed, let $\mathbb{P}(\cdot\vert \delta)$ denote the probability measure of the Markov chain generated by Algorithm \ref{algo:1} started from $\delta$.
  
  \begin{theorem}\label{thm:mix}
Assume $p\geq 2$. Let $\{\delta^{(k)},\;k\geq 0\}$ be the Markov chain generated by Algorithm \ref{algo:1} with $J=1$ and initial state $\delta^{(0)}$. Suppose that the one-step family $\{\check\theta^\delta,\;\delta\in\Delta\}$ is variable selection consistent with constants $c_1,c_2$ such that (\ref{cond:u}) and (\ref{cond:ss}) hold. Suppose also that there exists $\alpha$, $0\leq \alpha \leq \mathsf{u} (p-s_\star)/4$ such that $\delta^{(0)}$ satisfies 
    \begin{equation}\label{cond:init:dist}
      \check\Pi(\delta^{(0)}\vert \D) \geq \frac{1}{p^\alpha}.
    \end{equation}
Fix $\zeta_0\in (0,1)$. Then there exists an absolute constant $C$ such that if  the number of iterations satisfies
    \begin{equation}
      N\geq C \left(s_\star + \frac{4\alpha}{\mathsf{u}}\right)\left(\log\left(\frac{1}{\zeta_0}\right) + \alpha\log(p)\right)p,\end{equation}
    then it holds
    \[\|\PP(\delta^{(N)}=\cdot\; \vert \delta^{(0)}) -\check\Pi(\cdot)\|_\tv \leq \max\left(\zeta_0,\frac{4}{p^{\frac{\mathsf{u}}{4}}}\right).\]
  \end{theorem}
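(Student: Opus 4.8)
The plan is to lower bound an approximate spectral gap of the Gibbs kernel relative to a set of large posterior mass by a canonical-path (multicommodity flow) argument, and then to turn this into a total variation bound using the mixing-time results of Section~\ref{sec:mix:mc}, in particular the generalized Cheeger inequality of Theorem~\ref{lem:cheeger} relating the $\epsilon$-conductance to the approximate spectral gap. Since $J=1$, the kernel $\PP(\cdot\vert\delta)$ is reversible and positive with respect to $\check\Pi$, so an $\epsilon$-spectral gap bound together with the warmth of the initial state produces an estimate of the form $\|\PP(\delta^{(N)}=\cdot\vert\delta^{(0)})-\check\Pi\|_\tv\leq C'\epsilon+\check\Pi(\delta^{(0)})^{-1/2}(1-\mathrm{Gap}_\epsilon)^{N}$. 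I would take $\epsilon$ of order $p^{-\mathsf{u}/4}$, so that the first term produces the $4p^{-\mathsf{u}/4}$ in the stated error once Theorem~\ref{thm:post:contr} and \eqref{cond:ss} are used to bound $\check\Pi$ of the complement of the ``good set'' by $\epsilon$; and I would choose $N$ so that the second term is at most $\zeta_0$, which, since \eqref{cond:init:dist} gives $\check\Pi(\delta^{(0)})^{-1/2}\leq p^{\alpha/2}$, amounts to requiring $N\,\mathrm{Gap}_\epsilon\gtrsim\log(1/\zeta_0)+\alpha\log(p)$. It then suffices to prove $\mathrm{Gap}_\epsilon\gtrsim\bigl((s_\star+\alpha/\mathsf{u})\,p\bigr)^{-1}$.

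Two preliminary facts are needed. First, a direct computation using the variable selection consistency inequalities \eqref{msc:eq1}--\eqref{msc:eq2} and \eqref{cond:u}--\eqref{cond:ss} describes the conditional probabilities \eqref{cond:dist:eq:1}: for a support $\delta\supseteq\delta_\star$ and an irrelevant index $j\notin\delta_\star$ with $\delta_j=0$, one gets $q_j(\delta)\leq 1/2$ (hence $1-q_j(\delta)\geq 1/2$) and $\check\Pi(\delta^{(j,0)})\geq p^{\,\mathsf{u}-c_1}\check\Pi(\delta^{(j,1)})$, while for a relevant index $q_j(\delta)\geq 1-p^{-2}$. Second, chaining \eqref{msc:eq1} and \eqref{msc:eq2} through $\min(\delta^{(0)},\delta_\star)$ and comparing with $\check\Pi(\delta_\star)$ shows $\check\Pi(\delta^{(0)})\leq p^{-(\mathsf{u}/2)\|\delta^{(0)}-\delta_\star\|_0}$, so \eqref{cond:init:dist} forces $\|\delta^{(0)}-\delta_\star\|_0\leq 2\alpha/\mathsf{u}$; this is the origin of the factor $s_\star+4\alpha/\mathsf{u}$.

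The core is the flow construction. Let $S$ be a set of large posterior mass consisting of sufficiently probable supports that contain $\delta_\star$ and have bounded size, chosen through Theorem~\ref{thm:post:contr} so that $\check\Pi(S^c)\leq\epsilon$. For every ordered pair of states from $S\cup\{\delta^{(0)}\}$, route a unit of flow along the canonical path through $\delta_\star$: starting from $\delta$, delete the indices of $\delta\setminus\delta_\star$ one at a time in a fixed order down to $\delta_\star$, then add the indices of $\delta'\setminus\delta_\star$ one at a time up to $\delta'$ (the path out of $\delta^{(0)}$ additionally handles its $\leq 2\alpha/\mathsf{u}$ deviations from $\delta_\star$). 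By the first preliminary fact every move on every path has probability at least $(2p)^{-1}$ in the direction used, and each path has length $\lesssim s_\star+\alpha/\mathsf{u}$. It then remains to bound the congestion $\rho=\max_e Q(e)^{-1}\sum_{(\delta,\delta'):\,e\in\gamma_{\delta\delta'}}\check\Pi(\delta)\check\Pi(\delta')|\gamma_{\delta\delta'}|$ by $\lesssim(s_\star+\alpha/\mathsf{u})p$. The decisive edges are those of the form $e=(\delta^{(j,0)},\delta^{(j,1)})$ that add an irrelevant coordinate, across which $\check\Pi$ falls by a factor $\geq p^{\,\mathsf{u}-c_1}$: here $Q(e)=p^{-1}(1-q_j(\delta^{(j,0)}))\check\Pi(\delta^{(j,1)})\geq(2p)^{-1}\check\Pi(\delta^{(j,1)})$, any path through $e$ has an endpoint whose support contains $\delta_\star\cup\{j\}$, and the associated flow is at most $2\sum_{\eta\supseteq\delta_\star\cup\{j\}}\check\Pi(\eta)$, which by \eqref{msc:eq1} (giving $\check\Pi(\eta)\leq\check\Pi(\delta_\star)p^{-(\mathsf{u}-c_1)(\|\eta\|_0-s_\star)}$) and \eqref{cond:u} (making $\mathsf{u}-c_1-1\geq 1$, so the ensuing sum over larger supports converges with an absolute constant) is $O(1)\check\Pi(\delta_\star)p^{-(\mathsf{u}-c_1)}$. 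Combining with the matching lower bound on $\check\Pi(\delta^{(j,1)})$ -- the part one must track with care, controlled by the same mechanism together with Theorem~\ref{thm:post:contr} -- gives $\rho(e)\lesssim p$. The remaining edges, deleting an irrelevant coordinate or adding/deleting a relevant one, are handled in the same spirit and more easily, using $1-q_j\geq 1/2$, respectively $q_j\geq 1-p^{-2}$. Thus $\rho\lesssim(s_\star+\alpha/\mathsf{u})p$, and the canonical-path bound for the approximate spectral gap (Theorem~\ref{lem:cheeger} together with the companion mixing results of Section~\ref{sec:mix:mc}) yields $\mathrm{Gap}_\epsilon\gtrsim\bigl((s_\star+\alpha/\mathsf{u})p\bigr)^{-1}$; feeding this and $\check\Pi(\delta^{(0)})\geq p^{-\alpha}$ into the estimate of the first paragraph gives the claimed $N$ and error.

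The main obstacle is precisely this congestion estimate for the ``add an irrelevant coordinate'' edges. One must exploit the prior penalty, through \eqref{cond:u}, exactly strongly enough that the number of larger supports containing a fixed set is overwhelmed by the $p^{-(\mathsf{u}-c_1)}$ factor lost per added coordinate, so that the sums over $\{\eta\supseteq\delta_\star\cup\{j\}\}$ converge with absolute constants, while simultaneously carrying the $1/p$ factors so that $\rho$ retains only a single power of $p$ and a single power of the path length -- the exact balance encoded in \eqref{cond:u}--\eqref{cond:ss}. A closely related delicate point is that the consistency inequalities control $\bar\ell^\delta(\check\theta_\delta;\D)$ from above for $\delta\supseteq\delta_\star$, so that lower bounding $Q(e)$ -- equivalently, certifying that the supports on our canonical paths are not improbably light -- requires the complementary control coming from Theorem~\ref{thm:post:contr}, and the whole argument is run in the $\epsilon$-conductance / approximate-spectral-gap framework precisely so that a set of posterior mass $\leq\epsilon$ -- including any pathologically light supports and the low-probability intermediate states on the path out of $\delta^{(0)}$ -- may be discarded. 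A minor additional technicality is the telescoping needed to pass from the single-step differences appearing in \eqref{cond:dist:eq:1} to the inequalities \eqref{msc:eq1}--\eqref{msc:eq2}, which are phrased relative to $\min(\cdot,\delta_\star)$ rather than for adjacent supports.
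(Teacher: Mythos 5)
Your overall architecture is the paper's: canonical paths routed through $\delta_\star$ on a high-mass subset, a congestion bound of order $(s_\star+\alpha/\mathsf{u})p$ obtained from the two edge types essentially as you describe, a lower bound on the approximate spectral gap (which the paper gets directly from Proposition \ref{lb:cp}; Theorem \ref{lem:cheeger} is not actually needed), and Lemma \ref{lem:key:sprofile} to convert it into a total variation bound. The congestion analysis itself is sound. The gap is in your master inequality and, consequently, in your choice of the good set. Lemma \ref{lem:key:sprofile} gives $\|\pi_0K^N-\pi\|_\tv^2\leq\max\bigl(\zeta\|f_0\|_\infty^2,\;(1-\lambda_\zeta(K)/2)^N\mathrm{Var}_\pi(f_0)\bigr)$ with $\|f_0\|_\infty=1/\check\Pi(\delta^{(0)}\vert\D)\leq p^\alpha$, so the ``discarded mass'' term is $\sqrt{\zeta}\,\|f_0\|_\infty$, not $C'\epsilon$ as you write (the warm-start constant amplifies this term in the Lov\'asz--Simonovits framework as well). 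Taking $\epsilon\asymp p^{-\mathsf{u}/4}$ therefore yields a first term of order $p^{\alpha-\mathsf{u}/8}$, which is useless; one must instead arrange $\zeta=8\,\check\Pi(S^c\vert\D)\lesssim p^{-2\alpha-\mathsf{u}/2}$.

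That requirement breaks your good set. You restrict $S$ to supports containing $\delta_\star$, but the mass of $\{\delta:\delta\not\supseteq\delta_\star\}$ is only controlled by the hypotheses at the level $e^{-c_2n/2}\leq p^{-(\mathsf{u}+1)}$ via (\ref{cond:ss}), which for large $\alpha$ --- the regime the theorem is designed for, e.g.\ the null-model start where $\alpha\asymp s_\star n/\log p$ --- vastly exceeds $p^{-2\alpha-\mathsf{u}/2}$. The paper avoids this by taking $\Delta_0=\{\delta:\|\delta\|_0\leq s_\star+4\alpha/\mathsf{u}\}$ with no requirement that $\delta\supseteq\delta_\star$, so that $\check\Pi(\Delta_0^c\vert\D)\leq 2p^{-\mathsf{u}(4\alpha/\mathsf{u}+1)/2}=2p^{-2\alpha-\mathsf{u}/2}$ follows from the cardinality bound in (\ref{claim:1}) alone, with no $e^{-c_2 n}$ contribution. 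This is also where the factor $s_\star+4\alpha/\mathsf{u}$ really comes from: the canonical paths inside $\Delta_0$ have length at most $2(s_\star+4\alpha/\mathsf{u})$. It does not come from the distance of $\delta^{(0)}$ to $\delta_\star$ --- the initial state need not be placed on the flow graph at all, since it enters only through $\|f_0\|_\infty$ and $\mathrm{Var}_\pi(f_0)$. Your observation that $\|\delta^{(0)}-\delta_\star\|_0\leq 2\alpha/\mathsf{u}$ is correct but is doing the wrong job.
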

  \begin{proof}
    See Section \ref{sec:proof:thm:mix}.
  \end{proof}

The main conclusion of the theorem is that under the warm-start condition (\ref{cond:init:dist}), if the one-step family $\{\check\theta^\delta,\;\delta\in\Delta\}$ is variable selection consistent then Algorithm \ref{algo:1} is fast mixing. Importantly the theorem also highlights the importance of the initial distribution. To provide more insight on this last point, suppose that we slightly strengthen the definition of model selection consistency of $\{\check\theta^\delta,\;\delta\in\Delta\}$ by replacing (\ref{msc:eq1}) and (\ref{msc:eq2}) respectively by 
\begin{equation}\label{msc:eq12}
\bar\ell(\check\theta^{\delta_0};\D) \leq \bar\ell(\check\theta^\delta;\D) \leq \bar\ell(\check\theta^{\delta_0};\D)  + c_1 (\|\delta\|_0 - \|\delta_0\|_0)\log(p),\end{equation}
and
\begin{equation}\label{msc:eq22}
\bar\ell(\check\theta^\delta;\D) + c_3 (\|\delta_\star\|_0 - \|\delta\|_0)n \geq \bar\ell(\check\theta_{\delta_\star};\D) \geq \bar\ell(\check\theta^\delta;\D)  + c_2 (\|\delta_\star\|_0 - \|\delta\|_0)n.
\end{equation}
for some constant $c_3\geq c_2$. Now, suppose that we run Algorithm \ref{algo:1} from some initial state $\delta^{(0)}\supseteq \delta_\star$, with $\|\delta^{(0)}\|_0 = k_0$. In other words, the initial state has no false-negative. Then, noting that $\check\Pi(\delta_\star\vert\D)\geq 1/2$ for $n$ and $p$ large enough as a consequence of Theorem \ref{thm:post:contr},  we have:
\[\check\Pi(\delta^{(0)}\vert\D) = \check\Pi(\delta_\star\vert\D) \frac{\check\Pi(\delta^{(0)}\vert\D)}{\check\Pi(\delta_\star\vert\D)} \geq \frac{1}{2p^{\mathsf{u}(k_0-s_\star)}} \exp\left( \bar\ell(\check\theta^{\delta^{(0)}};\D)-\bar\ell(\check\theta^{\delta_\star};\D)\right).\]
By the first inequality of (\ref{msc:eq12}), $\bar\ell(\check\theta^{\delta^{(0)}};\D)-\bar\ell(\check\theta^{\delta_\star};\D)\geq 0$. Hence, we conclude that
\[\check\Pi(\delta^{(0)}\vert\D)  \geq\frac{1}{2p^{\mathsf{u}(k_0-s_\star)}}.\]
This inequality means that we can apply Theorem \ref{thm:mix} with $\alpha = \mathsf{u}(k_0 - s_\star)/2$, and it follows that Algorithm \ref{algo:1} has a mixing time that is at most $k_0^2\log(p) \times p$. Crucially in this case, the mixing time of the algorithm does not directly depend on the sample size $n$.

Consider now the case where the initial state $\delta^{(0)}$ contains some false negatives. Specifically, suppose that we start Algorithm \ref{algo:1} from $\delta^{(0)}$ such that $\|\delta^{(0)}\|_0 = k_0$, but $\delta_\star^{(0)}\eqdef \min(\delta^{(0)},\delta_\star)$ is a strict sub-model of $\delta_\star$, say $\|\delta_\star^{(0)}\|_0=s_0<s_\star$. In that case, using the definition of $\check\Pi$ in (\ref{Pi:check:2}), and $\check\Pi(\delta_\star\vert\D)\geq 1/2$,
\begin{multline*}
\check\Pi(\delta^{(0)}\vert\D) = \check\Pi(\delta_\star\vert\D) \frac{\check\Pi(\delta_\star^{(0)}\vert\D)}{\check\Pi(\delta_\star\vert\D)}\frac{\check\Pi(\delta^{(0)}\vert\D)}{\check\Pi(\delta_\star^{(0)}\vert\D)} \\
\geq \frac{1}{2p^{\mathsf{u}(k_0 - s_\star)}}\exp\left(\bar\ell(\check\theta^{\delta_\star^{(0)}};\D) - \bar\ell(\check\theta^{\delta_\star};\D) + \bar\ell(\check\theta^{\delta^{(0)}};\D) - \bar\ell(\check\theta^{\delta_\star^{(0)}};\D)\right).
 \end{multline*}
By the first inequality of (\ref{msc:eq12}), $\bar\ell(\check\theta^{\delta^{(0)}};\D) - \bar\ell(\check\theta^{\delta_\star^{(0)}};\D)\geq 0$, and the first inequality of (\ref{msc:eq22}) yields $\bar\ell(\check\theta^{\delta_\star^{(0)}};\D) - \bar\ell(\check\theta^{\delta_\star};\D) \geq -c_3 (s_\star - s_0) n$. Hence it follows that
\[\check\Pi(\delta^{(0)}\vert\D) \geq \frac{\exp\left(-c_3(s_\star - s_0)n\right)}{2p^{\mathsf{u}(k_0 - s_\star)}} \geq \frac{1}{2p^{\alpha}},\]
with $\alpha = \mathsf{u}(k_0-s_\star) + \frac{c_3(s_\star - s_0)n}{\log(p)}$. In that case, it follows from Theorem \ref{thm:mix} that Algorithm \ref{algo:1} has a mixing time that is at most 
\[\left(k_0 + \frac{(s_\star -s_0)n}{\log(p)}\right)^2\log(p) \times p.\]
Here, the mixing time is still polynomial in $(n,p)$, however the result suggests that the mixing time is negatively impacted by the sample size $n$. This is similar to the worst-case mixing time bound of $n s_0^2 p\log(p)$ obtained by \cite{yang:etal:15} for sampling from a version of (\ref{basic:lappace:approx}) for variable selection in a high-dimensional linear regression model, where, using their notations, $s_0$ denotes an upper-bound on $s_\star$. Although we have a worst dependence on $n$, our work improves on \cite{yang:etal:15} in two ways. Firstly, instead of a worst-case analysis, our result describes more precisely the effect of the initialization. Secondly, our analysis does not require an a-priori bound on $s_\star$. We stress however that our results (as well as \cite{yang:etal:15}) only provide upper bounds on the mixing times. It is possible that Algorithm \ref{algo:1} actually converges faster than what is described in Theorem \ref{thm:mix}.

We illustrate these findings with a Poisson regression simulation example. The data generating process is described in Section \ref{sec:num}. Here we set $p=1000$ and we vary $n\in\{1000,1500,2000\}$. The true model has $10$ relevant variables. For each $(n,p)$ we generate $50$ datasets, leading to $50$ \textsf{OLAP} distributions $\check\Pi(\cdot\vert\D)$. For each posterior we estimate the mixing time of Algorithm \ref{algo:1}  using the $L$-lag coupling method  of \cite{biswas2019estimating}, employing $30$ coupled chains. We compare the mixing times of Algorithm \ref{algo:1} when started from the null model, and a version started from $\delta^{(0)}$ that contains the true model plus $10$ false-positive. The boxplots of the distributions of the mixing times are given in Figure \ref{diff_initialization_logistic}. The results are consistent with our theory. This finding highlights the importance of good MCMC initialization (warm-start). In all the simulation below we initialize the algorithm from the support of the \textsf{lasso} estimator, which under mild conditions is known to contain the true support for sufficiently strong signal (\cite{meinshausenetal09}).

    \begin{figure}[H]
    \includegraphics[width=.49\textwidth]{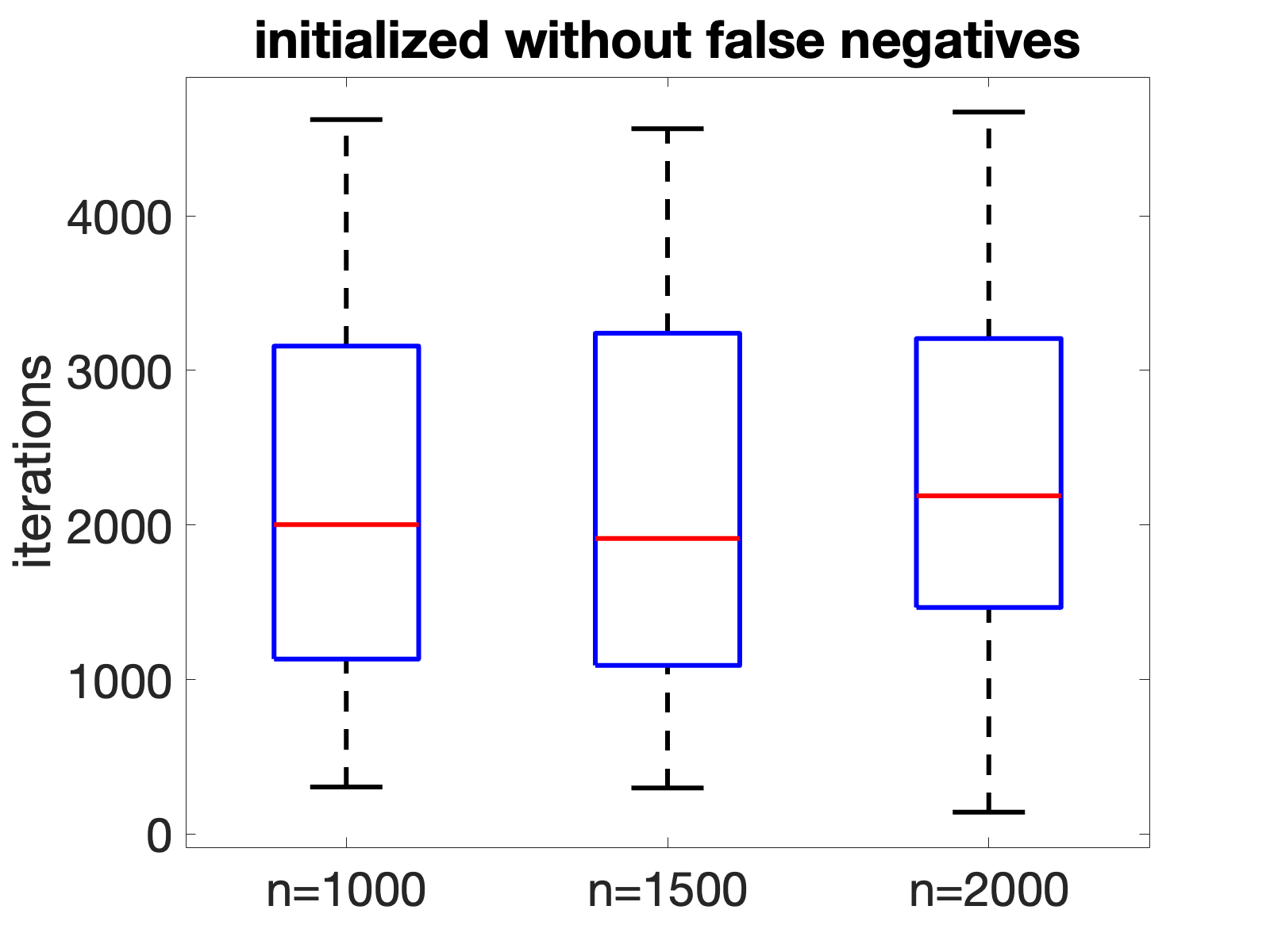}
    \includegraphics[width=.49\textwidth]{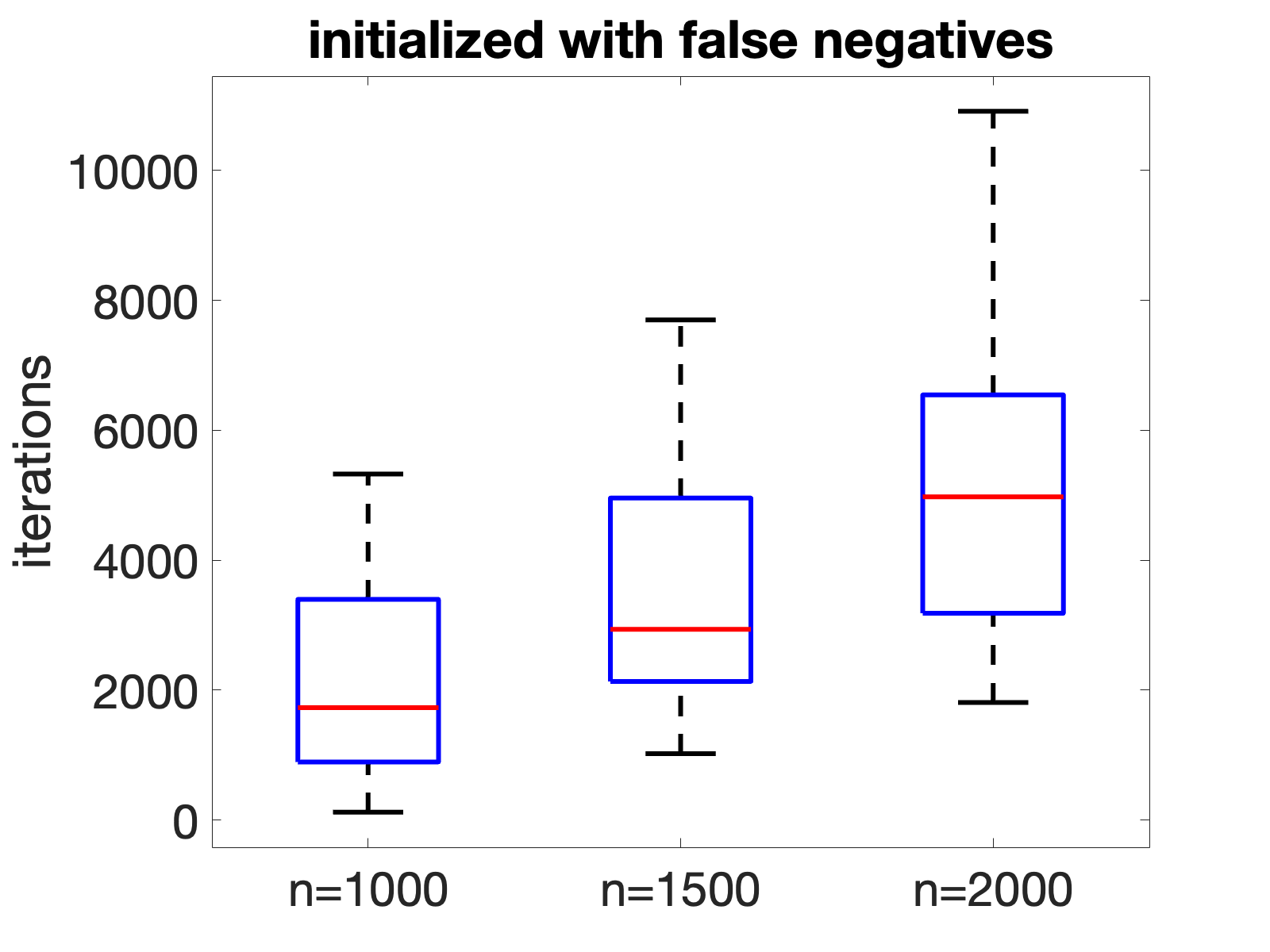}
    \caption{Estimated mixing times of Algorithm \ref{algo:1} under different initialization $\delta^{(0)}$ and for different data size $n$. Left: $\delta^{(0)}$ contains all relevant variables and 10 false-positive. Right: $\delta^{(0)}$ is null model.}
    \label{diff_initialization_logistic}
  \end{figure}


  \subsection{Variable selection consistency of the one-step family}
Since the cost per iteration of Algorithm \ref{algo:1} is polynomial in $(n,p)$, Theorem \ref{thm:post:contr} and Theorem \ref{thm:mix} together show that in problems where the one-step family $\{\check\theta^\delta,\;\delta\in\Delta\}$ is variable selection consistent, the variable selection problem can be provably solved with a computational cost that is polynomial in $(n,p)$ using \textsf{OLAP}. But under what condition is the one-step family $\{\check\theta^\delta,\;\delta\in\Delta\}$  variable selection consistent?  We show in this section that if the initial estimator $\wtilde\theta$ is rate optimal (\textsf{lasso} is known to be rate optimal under mild condition), then the one-step family $\{\check\theta^\delta,\;\delta\in\Delta\}$ inherits the variable selection consistency of the MLE family $\{\hat\theta^\delta,\;\delta\in\Delta\}$. This result mirrors the way in which the one-step estimator inherits the efficiency of the MLE in regular models (\cite{vdv:98}~Section 5.7). The variable selection consistency of the MLE family is a more classical problem that has been considered in the literature. For instance, for a class of sparse GLM models, it is shown to hold in \cite{barber2016laplace}~Theorem 2.2. 
We work under the following basic set of assumptions on the data generating process.
  
  \begin{assumption}\label{H:basic}
    \begin{enumerate}
      \item There exists an absolute constant $b<\infty$ such that 
      \begin{equation}\label{norm:Xj}
        \max_{1\leq i\leq n}\;\max_{1\leq j\leq p}\;|{\bf x}_{ij}| \leq b.
      \end{equation}
      \item There exists an absolute constant $M_0<\infty$ such that 
      \[\left\|\sum_{i=1}^n \left(y_i - \psi'(\pscal{\theta_\star}{{\bf x}_i})\right){\bf x}_i\right\|_\infty \leq M_0 \sqrt{n\log(p)}.\]
      \item The function $\psi$ is strictly convex and there exists an absolute constant $c_3>0$ such that for all $x\in\rset$,
      \[\left|\frac{\rmd^{3}\psi(x)}{\rmd x^3}\right| \leq c_3 \frac{\rmd^{2}\psi(x)}{\rmd x^2}.\]
      \item There exists $0<\underline{\kappa}\leq \bar\kappa$ such that for all $\delta_0\subseteq\delta_\star$, for all $w \in \rset^p$, with $\|w\|_0\leq s_\star$ with $\|w\|_2=1$, it holds
      \[\underline{\kappa} n \leq w^\texttt{T}\left(\sum_{i=1}^n\psi^{''}\left(\pscal{{\bf x}_i}{\tilde\theta^{\delta_0}}\right){\bf x}_i{\bf x}_i^\texttt{T}\right)w \leq \bar\kappa n.\]
      \item There exist constants $c$, and $M$ such that for all $1\leq j\neq k\leq p$,
      \[\sup_{\vartheta\in\rset^p:\;\|\vartheta\|_0\leq s_\star,\;\|\vartheta -\theta_\star\|_2\leq c\sqrt{\frac{s_\star\log(p)}{n}}}\;\;\left|\sum_{i=1}^n \psi^{''}(\pscal{\vartheta}{{\bf x}_i}) {\bf x}_{ij}{\bf x}_{ik}\right|\;\leq M \sqrt{n\log(p)}.\]
    \end{enumerate}
  \end{assumption}
  
  \medskip
  \begin{remark}
    H\ref{H:basic}-(2) holds if $y_i - \psi'(\pscal{\theta_\star}{{\bf x}_i})$ is mean-zero and sub-Gaussian, which holds for many GLM models, including linear, logistic, and Poisson regression. Indeed, if data $\D\eqdef\{(y_i,{\bf x}_i),\;1\leq i\leq n\}$ is a sequence of independent and identically distributed random variables, and the sub-Gaussian norm of $y_i - \psi'(\pscal{\theta_\star}{{\bf x}_i})$ is $\sigma^2$, then by Hoeffding's inequality (\cite{vershynin:18}~Theorem 2.6.2), and a union bound inequality, there exists an absolute constant $c_0$ such that 
    \begin{equation}\label{sub:gauss:ineq}
      \PP\left(\left\|\sum_{i=1}^n \left(y_i - \psi'(\pscal{\theta_\star}{{\bf x}_i})\right){\bf x}_i\right\|_\infty >c_0 \sigma \sqrt{bn\log(p)}\right)\leq \frac{2}{p}.\end{equation}
    
    Assumption H\ref{H:basic}-(3) is a self-concordance assumption that is satisfied by commonly used functions including in linear, logistic and Poisson regression models. For instance for logistic regression, $\psi(x) =\log(1 +e^x)$, and
    \[\left|\frac{\rmd^{3}\psi(x)}{\rmd x^3}\right|=\left|\frac{e^x(e^x-1)}{(1+e^x)^3}\right|\leq \frac{e^x}{(1+e^x)^2} = \frac{\rmd^{2}\psi(x)}{\rmd x^2}.\]
    Hence H\ref{H:basic}-(3) holds with $c_3=1$. For Poisson regression $\psi(x) =e^x$, hence H\ref{H:basic}-(3) also holds with $c_3=1$.
    
H\ref{H:basic}-(4) can be shown to follow from a restricted eigenvalue assumption on the sample covariance matrix $X^\texttt{T} X/n$, where $X\in\rset^{n\times p}$ is the matrix of covariates with $i$-th row given by ${\bf x}_i'$. To see this, note that for most reasonable initial estimators, with high probability it holds $\|\tilde\theta\|_1\leq b_2$ for some constant $b_2$ that can be assumed independent of $(n,p)$. Therefore in that case, using H\ref{H:basic}-(1), $|\pscal{{\bf x}_i}{\tilde\theta^{\delta_0}}|\leq b \|\tilde\theta\|_1 \leq B$ for some constant $B$. Thus, letting $\tau =\inf_{x\in [-B,B]} \psi^{''}(x)>0$, we see that with high probability we have
\[
w^\texttt{T}\left(\sum_{i=1}^n\psi^{''}\left(\pscal{{\bf x}_i}{\tilde\theta^{\delta_0}}\right){\bf x}_i{\bf x}_i^\texttt{T}\right) w =\sum_{i=1}^n\psi^{''}\left(\pscal{{\bf x}_i}{\tilde\theta^{\delta_0}}\right)\pscal{{\bf x}_i}{w}^2 \geq \tau w^\texttt{T} (X^\texttt{T} X)w.
\]
The restricted eigenvalue assumption on the sample covariance matrix $X^\texttt{T} X/n$ is known to hold in setting where the matrix  $X$ can be viewed as a realization of a random matrix with i.i.d rows drawn from a sub-Gaussian distribution. We refer the reader for instance to \cite{Wainwright_2019}~Chapter 9  for more details.

H\ref{H:basic}-(5) imposes a bound on the off-diagonal elements of the Hessian matrices of the log-likelihood. In the linear case, this corresponds to the limited coherence assumption of the design matrix commonly imposed (\cite{candes:plan:09}). 
See also \cite{barber:drton:15} for a similar assumption in a logistic regression setting. 
  \end{remark}
  
  \medskip

  \begin{theorem}\label{lem:post:consis}
    Assume H\ref{H:basic}, with $p\geq 4$. Suppose that the initial estimator $\wtilde\theta$ and the oracle MLE $\hat\theta^{\delta_\star}$ satisfy
    \begin{equation}\label{rate:estimators}
      \max\left(\|\hat\theta^{\delta_\star} - \theta_\star\|_2,\; \|\tilde\theta - \theta_\star\|_2\right) \leq C\sqrt{\frac{s_\star\log(p)}{n}},
    \end{equation}
    for some constant $C$. If the MLE family $\{\hat\theta^\delta,\;\delta\in\Delta\}$ is model selection consistent, then we can find a constant $C'$ such that with a sample size 
    \begin{equation}\label{cond:ss:2}
      n \geq C' s_\star^4 \log(p),
    \end{equation}
    the one-step family $\{\check\theta^\delta,\;\delta\in\Delta\}$ is also model selection consistent.
  \end{theorem}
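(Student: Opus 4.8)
The goal is to transfer variable-selection consistency from the MLE family $\{\hat\theta^\delta\}$ to the one-step family $\{\check\theta^\delta\}$. Recall the two defining inequalities (\ref{msc:eq1})--(\ref{msc:eq2}) involve comparisons of $\bar\ell^\delta$ evaluated at these estimators. So the crux is to control the difference $\bar\ell^\delta(\check\theta^\delta;\D) - \bar\ell^\delta(\hat\theta^\delta;\D)$ uniformly over $\delta$ in the relevant ranges (roughly $\delta$ with $\|\delta\|_0 \lesssim s_\star$, plus the single-variable enlargements needed for (\ref{msc:eq1})). If I can show that this difference is $o(\log p)$ — in fact $o(1)$ or at least smaller than the slack constants $c_1\log p$ and $c_2 n$ already present — then the two inequalities for the MLE family immediately yield (slightly weaker versions of) the inequalities for the one-step family, with constants $c_1, c_2$ adjusted by a universal factor.

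The main step is a quadratic-error bound for one Newton step under self-concordance. Since $\bar\ell^\delta$ is concave (strictly, because $\psi$ is strictly convex and we add $-\frac12\|w\|_2^2$), $\hat\theta^\delta$ is its unique maximizer, and $\check\theta^\delta$ is obtained from the initializer $\tilde\theta^\delta$ by one Newton step. The classical analysis (van der Vaart, or the self-concordance machinery) gives, when the initial point is close enough to $\hat\theta^\delta$, a bound of the form $\|\check\theta^\delta - \hat\theta^\delta\|_{\tilde\H^\delta} \lesssim \|\tilde\theta^\delta - \hat\theta^\delta\|_{\tilde\H^\delta}^2$, i.e. the Newton step contracts the (local-norm) error quadratically. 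Then $\bar\ell^\delta(\hat\theta^\delta;\D) - \bar\ell^\delta(\check\theta^\delta;\D) \lesssim \|\check\theta^\delta - \hat\theta^\delta\|_{\H}^2 \lesssim \|\tilde\theta^\delta - \hat\theta^\delta\|_{\H}^4$ up to self-concordance corrections. Using H\ref{H:basic}-(4), the Hessian $\tilde\H^\delta$ has smallest eigenvalue $\gtrsim \underline\kappa n$ on the relevant sparse subspace, and using (\ref{rate:estimators}) together with $\hat\theta^{\delta_\star}$-to-$\theta_\star$ control, I expect $\|\tilde\theta^\delta - \hat\theta^\delta\|_2^2 \lesssim s_\star \log(p)/n$, so $\|\tilde\theta^\delta - \hat\theta^\delta\|_\H^2 \lesssim s_\star\log(p)$ and hence $\|\tilde\theta^\delta - \hat\theta^\delta\|_\H^4 \lesssim (s_\star\log p)^2$. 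Dividing by $n$ (to convert back to the likelihood scale against the $c_2 n$ term) this is $\lesssim s_\star^2 (\log p)^2 / n$, which is $o(1)$ precisely when $n \gtrsim s_\star^4 \log p$ — matching condition (\ref{cond:ss:2}). This is where the $s_\star^4$ comes from; the fourth power is the price of the quadratic Newton contraction compounded with the squared-error-to-likelihood conversion.

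The order of operations I would follow: (i) fix $\delta$, establish strong convexity / self-concordance of $\bar\ell^\delta$ near $\hat\theta^\delta$ from H\ref{H:basic}-(3)--(4) — note $\psi'''$ bounded by $c_3\psi''$ is exactly the self-concordance hypothesis, and $-\frac12\|w\|_2^2$ only helps; (ii) verify the initializer $\tilde\theta^\delta$ lies in the Newton basin, using (\ref{rate:estimators}) and $\|\hat\theta^\delta - \hat\theta^{\delta_\star}\|$ / $\|\hat\theta^\delta - \theta_\star\|$ estimates obtained from the MLE-family consistency plus the restricted eigenvalue bound; (iii) invoke the one-step quadratic-contraction lemma to bound $\|\check\theta^\delta - \hat\theta^\delta\|_{\H^\delta}$; (iv) convert this to a bound on $\bar\ell^\delta(\hat\theta^\delta;\D) - \bar\ell^\delta(\check\theta^\delta;\D)$ via a Taylor expansion, again controlled by self-concordance; (v) take a union bound over all the relevant $\delta$ (there are at most $p^{O(s_\star)}$ of them, but the estimates above are deterministic given the events in H\ref{H:basic} and (\ref{rate:estimators}), so the union is costless once those events are assumed); (vi) plug the bound into (\ref{msc:eq1})--(\ref{msc:eq2}) for the MLE family to obtain the same inequalities for the one-step family with, say, constants $2c_1$ and $c_2/2$, valid once $n \geq C' s_\star^4\log p$.

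The main obstacle is step (ii), the basin-of-attraction check: I need that the gap between the \emph{initial} point $\tilde\theta^\delta = [\tilde\theta]_\delta$ and the \emph{model-specific} MLE $\hat\theta^\delta$ is controlled, not merely the gap between each and $\theta_\star$. For $\delta \supseteq \delta_\star$ this should be fine because $\hat\theta^\delta$ concentrates near $\theta_\star$ (extra zero coordinates), but for $\delta$ \emph{not} containing $\delta_\star$, $\hat\theta^\delta$ is the MLE of a misspecified submodel and need not be near $\theta_\star$ at all — yet the definition of model-selection consistency only requires comparing $\bar\ell$ values along chains $\delta_0 \subseteq \delta \subseteq \delta_\star$ and along single enlargements, so I must be careful to only invoke the Newton bound on those $\delta$ where $\hat\theta^\delta$ is genuinely close to $\tilde\theta^\delta$, i.e. where the relevant restricted eigenvalue condition H\ref{H:basic}-(4) (stated for $\delta_0 \subseteq \delta_\star$) actually applies. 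Reconciling the set of $\delta$ appearing in (\ref{msc:eq1})--(\ref{msc:eq2}) with the set on which H\ref{H:basic}-(4)--(5) give control is the delicate bookkeeping, and I expect the bulk of the proof's length to be spent there rather than on the (standard) Newton analysis itself.
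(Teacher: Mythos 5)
There is a genuine gap, and it sits exactly where you flagged it and then deferred it to ``delicate bookkeeping.'' Your plan requires a two-sided bound $|\bar\ell^\delta(\check\theta^\delta;\D)-\bar\ell^\delta(\hat\theta^\delta;\D)|=o(\log p)$ for \emph{every} $\delta$ appearing in (\ref{msc:eq1}), including $\delta$ that contain irrelevant variables and hence are neither sub- nor super-models of $\delta_\star$. Establishing this via a Newton basin-of-attraction argument needs restricted strong convexity of $\bar\ell^\delta$ near $\tilde\theta^\delta$ and proximity of $\tilde\theta^\delta$ to $\hat\theta^\delta$; but H\ref{H:basic}-(4) is only assumed at points $\tilde\theta^{\delta_0}$ with $\delta_0\subseteq\delta_\star$ and in directions of sparsity at most $s_\star$, and nothing in the hypotheses controls $\hat\theta^\delta$ or the Hessian for such general $\delta$. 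So the route does not close. The paper's proof avoids the problem with an asymmetric sandwich that requires no Newton contraction at all: at a general $\delta$ one only needs the \emph{upper} bound $\bar\ell^\delta(\check\theta^\delta;\D)\leq\bar\ell^\delta(\hat\theta^\delta;\D)$, which is free because $\hat\theta^\delta$ is the global maximizer of the concave $\bar\ell^\delta$; the \emph{lower} bound is only ever needed at $\delta_0\subseteq\delta_\star$, where it follows from the fact that $\check\theta^{\delta_0}$ maximizes the quadratic model $q^{\delta_0}$ of $\bar\ell^{\delta_0}$ around $\tilde\theta^{\delta_0}$, giving $\bar\ell^{\delta_0}(\check\theta^{\delta_0};\D)\geq\bar\ell^{\delta_0}(\hat\theta^{\delta_0};\D)+\eta^{\delta_0}(\check\theta^{\delta_0};\D)-\eta^{\delta_0}(\hat\theta^{\delta_0};\D)$ with $\eta^{\delta_0}$ the Taylor remainder. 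One then only has to show $|\eta^{\delta_0}(\check\theta^{\delta_0};\D)|+|\eta^{\delta_0}(\hat\theta^{\delta_0};\D)|\leq C\log p$ for $\delta_0\subseteq\delta_\star$, which is exactly where H\ref{H:basic}-(4), (\ref{rate:estimators}) and the gradient/coherence bounds apply. This is the missing idea.

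A secondary issue is your accounting for the exponent in (\ref{cond:ss:2}). The chain ``quadratic contraction to the fourth power, divided by $n$, equals $s_\star^2(\log p)^2/n$, which is $o(1)$ iff $n\gtrsim s_\star^4\log p$'' is not arithmetically consistent (that quantity is $o(1)$ iff $n\gg s_\star^2(\log p)^2$, and the division by $n$ is unjustified since $\|\cdot\|_{\H}^2$ is already on the likelihood scale); moreover the standard quadratic contraction in the local norm is not available here since $\|\tilde\theta^{\delta}-\hat\theta^{\delta}\|_{\H}\asymp\sqrt{s_\star\log p}$ is not small. The actual source of $s_\star^4$ is the self-concordance remainder bound $|\eta^{\delta_0}(u;\D)|\lesssim n\,\|u-\tilde\theta^{\delta_0}\|_1\|u-\tilde\theta^{\delta_0}\|_2^2\asymp s_\star^2(\log p)^{3/2}/\sqrt{n}$, which is at most $C\log p$ precisely when $n\gtrsim s_\star^4\log p$.
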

  \begin{proof}
See Section \ref{sec:proof:post:consis}.
  \end{proof}

\section{Approximate spectral gaps for Markov kernels}\label{sec:mix:mc}
  The proof of Theorem \ref{thm:mix} is based on some general results of independent interest that we present in this section, and which  provide lower bounds on the mixing time of Markov kernels. Specifically, Theorem \ref{thm:mix} is proved by applying Proposition \ref{lb:cp} and Lemma \ref{lem:key:sprofile} below to the Gibbs sampler chain produced by Algorithm \ref{algo:1}.

Let $\pi$ be a density on some measure space $\Xset$  equipped with a sigma-algebra  $\B$ and a reference sigma-finite measure $\rmd x$. We will also write $\pi$ to denote the induced probability measure: $\pi(\rmd x) = \pi(x)\rmd x$.
  For a function $f:\;\Xset\to\rset$, we write $f\in\B$ to say that $f$ is $\B$-measurable. We also define
  \[\|f\|_\infty \eqdef \sup_{x\in\Xset}|f(x)|,\;\;\mbox{ and }\;\; \texttt{osc}(f) \eqdef \sup_{y,z\in\Xset}|f(y) - f(z)|.\]
  We let $L^2(\pi)$ denote the Hilbert space of all real-valued square-integrable (wrt $\pi$) functions on $\Xset$, equipped with the inner product $\pscal{f}{g}\eqdef\int_\Xset f(x)g(x)\pi(\rmd x)$ with associated norm $\|\cdot\|_2$. We will also make use of  the essential supremum (resp. essential infimum) of $f$ with respect to $\pi$ defined as $\texttt{ess-sup}(f) \eqdef\inf\{M\geq 0:\;\pi(\{x\in\Xset:\; |f(x)|>M\})=0\}$ (resp. $\texttt{ess-inf}(f) \eqdef\sup\{M\geq 0:\;\pi(\{x\in\Xset:\; |f(x)|<M\})=0\}$).
  
  If $K$ is a Markov kernel on $\Xset$, and $n\geq 1$ an integer, $K^n$ denotes the $n$-th iterate of $K$, defined recursively as $K^n(x,A) \eqdef \int_\Xset K^{n-1}(x,\rmd z) K(z,A)$, $x\in\Xset$, $A$ measurable. For $f\in\B$, we define  $Kf:\;\Xset\to\rset$ as  $Kf(x) \eqdef \int_\Xset K(x,\rmd z)f(z)$, $x\in\Xset$, whenever the integral is well defined. And if $\mu$ is a probability measure on $\Xset$, then $\mu K$ is the probability on $\Xset$ defined as $\mu K(A) \eqdef\int_\Xset \mu(\rmd z) K(z,A)$, $A\in\B$. 
  The total variation distance  between two probability measures $\mu,\nu$ is defined as
  \[\|\mu-\nu\|_\tv \eqdef 2 \sup_{A \in \B} \left(\mu(A)-\nu(A)\right) = \sup_{f\in\B:\;\|f\|_\infty\leq 1}\left(\int_\Xset f(x)\mu(\rmd x) - \int_\Xset f(x)\nu(\rmd x)\right).\]
  
  If the Markov kernel $K$ has invariant distribution $\pi$, without changing notation we will also view $K$ as the linear operator on $L^2(\pi)$ that transforms $f$ to $Kf$ as defined above. We write $K^\star$ to denote the adjoint of $K$, that is the linear operator on $L^2(\pi)$ such that $\pscal{Kf}{g} = \pscal{f}{K^\star g}$ for all $f,g\in L^2(\pi)$. We say that $K$ is  reversible with respect to $\pi$ ($\pi$-reversible for short) if $K^\star = K$, and we say that $K$ is positive if it is $\pi$-reversible and $\pscal{f}{Kf}\geq 0$ for all $f\in L^2(\pi)$.   For $f\in L^2(\pi)$, we set $\pi(f)\eqdef \int_\Xset f(x)\pi(\rmd x)$, $\textsf{Var}_\pi(f) \eqdef \|f-\pi(f)\|_2^2$,  and 
  \[\e_K(f,f)\eqdef \frac{1}{2}\int_\Xset\int_\Xset (f(y)-f(x))^2\pi(\rmd x) K(x,\rmd y) = \pscal{f}{f} - \pscal{f}{Kf}.\]  
To quantify the rate of convergence of $K^n$ towards $\pi$, we will use the concept of approximate spectral gap  (\cite{atchade:asg}).   For $\zeta\in [0,1)$, the $\zeta$-spectral gap of $K$ is
\begin{equation}\label{eq:zeta:sg}
\lambda_{\zeta}(K) \eqdef \inf\left\{\frac{\e_{K}(f,f)}{\textsf{Var}_{\pi}(f) - \frac{\zeta}{2}},\;\;f\in L^2(\pi)\;\mbox{ s.t. }\; \|f\|_\infty\leq 1\;\mbox{ and }\; \textsf{Var}_{\pi}(f)> \zeta\right\}.\end{equation}
We recover the classical spectral gap (denoted $\lambda(K)$) by taking $\zeta=0$, and replacing the infinity norm by the $L^2$-norm\footnote{In the definition of $\lambda_{\zeta}(K)$ one can replace the infinity norm $\|\cdot\|_\infty$ by any other norm that satisfies $\|Kf\|\leq \|f\|$ for all $f\in L^2(\pi)$. We use the infinity norm here mainly for convenience.}. We note that when $K$ is positive then $0\leq \lambda_\zeta(K) \leq 2$. This quantity can be used to  quantify the convergence rate of $K$ toward a total variation ball of radius  $O(\sqrt{\zeta})$ around $\pi$. Specifically, the following is a slight modification of  Lemma 2.1 of \cite{atchade:asg}. We provide a proof for completely. The reversibility or positivity of $K$ are not needed, but are imposed here for simplicity.

\begin{lemma}\label{lem:key:sprofile}
Suppose that $K$ is $\pi$-reversible and positive, and let $\zeta\in [0,1)$.  Let $\pi_0(\rmd x) = f_0(x)\pi(\rmd x)$, for some bounded function $f_0$.  For all integer $N\geq 1$,  we have
\begin{equation}\label{lem:key:eq:pos}
\|\pi_0 K^N-\pi\|_\tv^2  \leq \max\left(\zeta\|f_0\|_\infty^2,\; \left[1 - \frac{\lambda_\zeta(K)}{2}\right]^N\textsf{Var}_\pi(f_{0})\right).
\end{equation}
\end{lemma}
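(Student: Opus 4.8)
The plan is to exploit the standard functional-analytic route to total-variation mixing bounds, but carried out with the $\zeta$-spectral gap in place of the ordinary gap so as to track the $O(\sqrt{\zeta})$ floor. First I would write $\pi_0 K^N - \pi$ in terms of the density $g_N \eqdef K^{\star N} f_0 - 1 = K^N f_0 - 1$ (using reversibility), so that for any bounded test function $h$ with $\|h\|_\infty \le 1$ one has $\int h \, \rmd(\pi_0 K^N - \pi) = \pscal{h}{g_N}$, and by Cauchy--Schwarz and the variational characterization of TV distance, $\|\pi_0 K^N - \pi\|_\tv \le 2\sup_{\|h\|_\infty\le 1}|\pscal{h}{g_N}|$; in fact it is cleaner to bound $\|\pi_0 K^N - \pi\|_\tv^2$ directly against $\textsf{Var}_\pi(K^N f_0) = \|g_N\|_2^2$, since $|\pscal{h}{g_N}| \le \|h\|_2\|g_N\|_2 \le \|g_N\|_2$ when $\|h\|_\infty\le 1$ and $\pi$ is a probability measure. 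So it suffices to control the decay of $v_N \eqdef \textsf{Var}_\pi(K^N f_0)$.

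The heart of the argument is a one-step contraction dichotomy: I claim that for each $N\ge 0$, either $v_N \le \zeta\|f_0\|_\infty^2$, or $v_{N+1} \le (1-\tfrac{\lambda_\zeta(K)}{2}) v_N$. To see the second alternative, set $f_N \eqdef K^N f_0$; since $\|Kf\|_\infty\le\|f\|_\infty$, we have $\|f_N\|_\infty \le \|f_0\|_\infty$, but the definition of $\lambda_\zeta$ in (\ref{eq:zeta:sg}) requires $\|f\|_\infty \le 1$, so I would first normalize by considering $f_N/\|f_0\|_\infty$ (whose variance is $v_N/\|f_0\|_\infty^2$ and whose sup-norm is $\le 1$) and note that the threshold condition $\textsf{Var}_\pi(f_N/\|f_0\|_\infty) > \zeta$ is exactly the negation of the first alternative. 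Applying (\ref{eq:zeta:sg}) to $f_N/\|f_0\|_\infty$ gives $\e_K(f_N,f_N) \ge \lambda_\zeta(K)\bigl(v_N - \tfrac{\zeta}{2}\|f_0\|_\infty^2\bigr)$. Now I use positivity: because $K$ is $\pi$-reversible and positive, the operator $K$ restricted to mean-zero functions satisfies $0 \preceq K \preceq I$ (its spectrum lies in $[0,1]$), hence $v_{N+1} = \pscal{f_N - \pi(f_N)}{K^2(f_N - \pi(f_N))} \le \pscal{f_N-\pi(f_N)}{K(f_N-\pi(f_N))} = v_N - \e_K(f_N,f_N)$; combining with the previous display yields $v_{N+1} \le v_N - \lambda_\zeta(K) v_N + \tfrac{\lambda_\zeta(K)\zeta}{2}\|f_0\|_\infty^2$. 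Since we are in the second alternative, $v_N > \zeta\|f_0\|_\infty^2$, so $\tfrac{\lambda_\zeta(K)\zeta}{2}\|f_0\|_\infty^2 < \tfrac{\lambda_\zeta(K)}{2}v_N$, giving $v_{N+1} \le (1-\tfrac{\lambda_\zeta(K)}{2})v_N$ as claimed.

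With the dichotomy in hand, the conclusion follows by a short induction on $N$: if at any step $k\le N$ the first alternative holds, then $v_k \le \zeta\|f_0\|_\infty^2$, and since $v_m$ is non-increasing in $m$ (each application of $K$ to a mean-zero function does not increase its $L^2$ norm, by positivity again), we get $v_N \le \zeta\|f_0\|_\infty^2$; otherwise the second alternative holds at every step from $0$ to $N-1$, and iterating gives $v_N \le (1-\tfrac{\lambda_\zeta(K)}{2})^N v_0 = (1-\tfrac{\lambda_\zeta(K)}{2})^N \textsf{Var}_\pi(f_0)$. In either case $v_N \le \max\bigl(\zeta\|f_0\|_\infty^2,\ (1-\tfrac{\lambda_\zeta(K)}{2})^N\textsf{Var}_\pi(f_0)\bigr)$, and substituting into $\|\pi_0 K^N - \pi\|_\tv^2 \le v_N$ gives (\ref{lem:key:eq:pos}).

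The main obstacle is bookkeeping around the sup-norm normalization and making sure the threshold $\textsf{Var}_\pi(f) > \zeta$ in the definition of $\lambda_\zeta$ is invoked with the correctly scaled function — it is tempting to apply (\ref{eq:zeta:sg}) directly to $f_N$, which is illegitimate when $\|f_0\|_\infty > 1$. The positivity hypothesis is used twice (to drop the $K^2$ to a $K$ in the variance contraction, and to guarantee monotonicity of $v_N$); one could instead run the argument on the lazy chain $\tfrac12(I+K)$ to avoid positivity, but since the lemma as stated assumes positivity there is no need. Everything else is routine spectral calculus.
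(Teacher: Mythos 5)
Your proposal is correct and follows essentially the same route as the paper's proof: bound the total variation distance by $\sqrt{\textsf{Var}_\pi(K^N f_0)}$, use positivity of $K$ to get the one-step inequality $\textsf{Var}_\pi(Kf)\leq \textsf{Var}_\pi(f)-\e_K(f,f)$ (the paper constructs the square root $S$ explicitly where you invoke spectral calculus), apply the definition of $\lambda_\zeta$ to the normalized function $f_N/\|f_0\|_\infty$, and combine the resulting contraction with monotonicity of the variance sequence to obtain the stated maximum. Your handling of the sup-norm normalization and of the threshold $\textsf{Var}_\pi(f)>\zeta$ matches the paper exactly, so there is nothing to add.
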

\begin{proof}
See Section \ref{sec:proof:lem:key}.
\end{proof}

For $N\geq 2\log(\zeta^{-1})/\lambda_\zeta(K)$, we have
\[\left(1 - \frac{\lambda_\zeta(K)}{2}\right)^N\leq \exp\left(-\frac{N\lambda_\zeta(K)}{2}\right) \leq \zeta.\]
Hence it follows from the theorem that for $N\geq 2\log(\zeta^{-1})/\lambda_\zeta(K)$, we have $\|\pi_0 K^N-\pi\|_\tv\leq \|f_0\|_\infty\sqrt{\zeta}$. We note however that the right-hand side of (\ref{lem:key:eq:pos}) does not converge to $0$ as $N\to\infty$: unlike the spectral gap, the approximate spectral gap measures only converge to within $O(\sqrt{\gamma})$ of $\pi$. On the plus side, the method has the advantage that in many problems it scales better with the dimension of the problem than $\lambda(K)$.

\subsection{Conductance and Cheeger's inequality}
A similar concept that pre-dates and has motivated the development of the approximate spectral gap is the $\zeta$-conductance of \cite{lovasz:simonovits93}. For $\zeta\in [0,1/2)$, the $\zeta$-conductance of the Markov kernel $K$ is defined  as 
\[\Phi_\zeta(K)\eqdef \inf\left\{\frac{\int_A\pi(\rmd x)K(x,A^c)}{(\pi(A)-\zeta)(\pi(A^c)-\zeta)},\;A\in\B:\;\zeta<\pi(A) < 1-\zeta\right\}.\]
The special case $\Phi_0(K)$ (that is, $\zeta=0$)  corresponds to the standard conductance (\cite{lawler:sokal:88,sinclair:jerrum89,douc:etal:18}), and we will also denote it by $\Phi(K)$. The conductance $\Phi(K)$ captures how rapidly a Markov chain in stationarity is able to move around the space. $\Phi_\zeta(K)$ measures a similar feature but ignore small sets. Corollary 1.5 of \cite{lovasz:simonovits93} shows that with a warm start, a Markov chain with transition kernel $K$ converges to within $O(\zeta)$ of $\pi$ in at most $2\log(1/\zeta)/\Phi_\zeta(K)^2$ number of iterations. This is analogous to Lemma \ref{lem:key:sprofile}. Here too, the $\zeta$-conductance often depends more favorably on the dimensional of the problem than $\Phi(K)$, however it only measures convergence to within $O(\zeta)$ to $\pi$. In many problems, particularly when dealing with log-concave distributions where a rich set isoperimetric inequalities are available, the conductances are typically easier  to control than the spectral gaps (see for instance \cite{chen:etal:2020} for references and some recent results).

The relationship between the conductance and the spectral gap has been investigated in the literature, and is captured by Cheeger's inequality (\cite{lawler:sokal:88}) that states that
\begin{equation}\label{basic:cheeger}
\frac{\Phi(K)^2}{8} \leq \lambda(K) \leq \Phi(K).\end{equation}
A similar relationship is expected to hold between $\Phi_\zeta(K)$ and $\lambda_\zeta(K)$. We close the gap in the literature by showing that this is indeed the case.

\begin{theorem}\label{lem:cheeger}
For all $\epsilon \in [0,1/2)$. We have
\[ \frac{\left(\Phi_{\frac{\epsilon}{32}}(K)\right)^2}{16}\leq \lambda_{\frac{\epsilon}{2}}(K)\leq \Phi_\epsilon(K).\]
\end{theorem}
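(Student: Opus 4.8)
The plan is to prove the two inequalities separately, mirroring the two halves of the classical Cheeger bound \eqref{basic:cheeger} but tracking the truncation parameters carefully. The easy direction is $\lambda_{\epsilon/2}(K)\leq \Phi_\epsilon(K)$. I would test the variational problem \eqref{eq:zeta:sg} defining $\lambda_{\epsilon/2}(K)$ on indicator functions: for $A\in\B$ with $\epsilon<\pi(A)<1-\epsilon$, take $f=\1_A$, so that $\|f\|_\infty\leq 1$, $\textsf{Var}_\pi(f)=\pi(A)\pi(A^c)$, and $\e_K(f,f)=\int_A\pi(\rmd x)K(x,A^c)$. Then
\[
\lambda_{\epsilon/2}(K)\leq \frac{\int_A\pi(\rmd x)K(x,A^c)}{\pi(A)\pi(A^c)-\epsilon/4}.
\]
The point is to compare the denominator with $(\pi(A)-\epsilon)(\pi(A^c)-\epsilon)$; since $\pi(A)\pi(A^c)-\epsilon/4\geq (\pi(A)-\epsilon)(\pi(A^c)-\epsilon)$ holds whenever $\epsilon\,[\pi(A)+\pi(A^c)]-\epsilon^2\geq \epsilon/4$, i.e. $\epsilon\leq 1/4 + \epsilon^2$, which is automatic for $\epsilon\in[0,1/2)$, we get $\lambda_{\epsilon/2}(K)\leq \int_A\pi(\rmd x)K(x,A^c)/[(\pi(A)-\epsilon)(\pi(A^c)-\epsilon)]$, and taking the infimum over admissible $A$ gives $\lambda_{\epsilon/2}(K)\leq\Phi_\epsilon(K)$.

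The hard direction is the lower bound $\Phi_{\epsilon/32}(K)^2/16\leq\lambda_{\epsilon/2}(K)$, which is the analogue of the nontrivial half of Cheeger. Here I would follow the standard co-area / level-set argument adapted to the truncated setting. Start from a test function $f$ with $\|f\|_\infty\leq 1$ and $\textsf{Var}_\pi(f)>\epsilon/2$. After an affine normalization (replace $f$ by $(f-c)/2$ for a suitable median-type constant $c$, using $\|Kf\|_\infty\leq\|f\|_\infty$ so the Dirichlet form behaves well, and splitting into positive and negative parts), reduce to a nonnegative function $g$ with $\pi(g>0)\leq 1/2$, and consider the super-level sets $A_t=\{g^2>t\}$. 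The co-area formula gives $\e_K(g,g)\gtrsim \int_0^{\|g\|_\infty^2}\bigl(\int_{A_t}\pi(\rmd x)K(x,A_t^c)\bigr)\,\rmd t$ via Cauchy--Schwarz on $\int\int|g(y)^2-g(x)^2|\pi(\rmd x)K(x,\rmd y)$ and the bound $|g(y)^2-g(x)^2|\leq (|g(y)|+|g(x)|)|g(y)-g(x)|\leq 2(|g(y)|+|g(x)|)\cdot\frac12|g(y)-g(x)|$ combined with $\pscal{|g|^2}{1}=\textsf{Var}$-type control. For each level-set $A_t$ with $\pi(A_t)\in(\epsilon/32,\,1-\epsilon/32)$ one invokes the definition of $\Phi_{\epsilon/32}(K)$ to lower-bound $\int_{A_t}\pi(\rmd x)K(x,A_t^c)$ by $\Phi_{\epsilon/32}(K)(\pi(A_t)-\epsilon/32)(\pi(A_t^c)-\epsilon/32)$; integrating over $t$ and using $\int_0^{\|g\|_\infty^2}(\pi(A_t)-\epsilon/32)_+\,\rmd t \gtrsim \pi(g^2) - (\text{const})\epsilon = \textsf{Var}_\pi(f)/(\text{const}) - (\text{const})\epsilon$ reassembles a lower bound of the form $\e_K(f,f)\gtrsim \Phi_{\epsilon/32}(K)^2(\textsf{Var}_\pi(f)-\epsilon/2)/16$.

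The main obstacle — and where the specific constant $32$ (as opposed to the classical $8$) comes from — is handling the level sets $A_t$ whose $\pi$-measure is too small (below $\epsilon/32$) or, after the sign split, the "balance" term: the truncated conductance gives nothing for those sets, so their contribution to the co-area integral must be absorbed into the $\frac{\epsilon}{2}$ slack in the numerator of $\lambda_{\epsilon/2}(K)$. The bookkeeping is to choose the normalizing constant $c$ so that both $\{g>0\}$ and $\{g<0\}$ (in the decomposition before taking $g\geq 0$) have measure $\leq 1/2$, then show that the $t$-range where $\pi(A_t)\leq\epsilon/32$ contributes at most $O(\epsilon)$ to $\int_0^\cdot\pi(A_t)\,\rmd t$, so that subtracting it leaves at least $\textsf{Var}_\pi(f)/4 - \epsilon/2$ or so. Carefully propagating the factors of $2$ through the Cauchy--Schwarz step, the sign split (which costs a factor $2$), and the $\epsilon/32\to\epsilon/2$ slack inflation yields the stated $16$ in the denominator; I would present this part as a sequence of three displayed inequalities (co-area lower bound for $\e_K$; per-level-set application of $\Phi_{\epsilon/32}$; reassembly by integration in $t$) rather than grinding every constant, then remark that the constants are not optimized.
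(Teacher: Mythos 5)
Your overall strategy is the right one and is, in outline, the same as the paper's: both directions are the classical Lawler--Sokal/Cheeger argument with the truncation parameters tracked. The upper bound is essentially complete and matches the paper's (which tests the centered function $\pi(A)\1_{A^c}-\pi(A^c)\1_A$ rather than $\1_A$, an immaterial difference since $\e_K$ and $\textsf{Var}_\pi$ are translation invariant). Two small points there: you should note that $\epsilon<\pi(A)<1-\epsilon$ forces $\textsf{Var}_\pi(\1_A)=\pi(A)\pi(A^c)>\epsilon/2$, which is needed for $\1_A$ to be admissible in the variational formula; and your intermediate condition ``$\epsilon\leq 1/4+\epsilon^2$'' is an algebra slip (what you need is $\epsilon-\epsilon^2\geq\epsilon/4$, i.e.\ $\epsilon\leq 3/4$), though harmless on $[0,1/2)$.

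For the lower bound your plan has the right ingredients (Cauchy--Schwarz on $|f_c^2(y)-f_c^2(x)|$, co-area over level sets, truncated conductance applied per level set), but the two places you defer are exactly where the theorem lives, and the paper closes them differently from what you propose. First, there is no need to discard the level sets with $\pi(A_t)\leq\epsilon/32$ and ``absorb'' them: the bound $\int_{A_t}\pi(\rmd x)K(x,A_t^c)\geq\Phi_{\epsilon/32}(K)\,(\pi(A_t)-\tfrac{\epsilon}{32})(\pi(A_t^c)-\tfrac{\epsilon}{32})$ holds for \emph{every} level set, trivially when the right-hand side is nonpositive, so one integrates the signed expression over all levels and the correction is just $\tfrac{\epsilon}{32}(1-\tfrac{\epsilon}{32})$ times the length of the level range, at most $4\cdot\tfrac{\epsilon}{32}\|f_c\|_\infty$. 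Second, and more importantly, converting $\int\int|f_c^2(y)-f_c^2(x)|\,\pi(\rmd x)\pi(\rmd y)$ back into $\textsf{Var}_\pi(f)$ with the stated constants is the real work: the paper does this not via a median split but via a dichotomy on whether $\pi(|f-\pi(f)|)$ exceeds $\tfrac12\sqrt{\textsf{Var}_\pi(f)}$, taking the centering $c\to+\infty$ (which linearizes $f_c^2$ and reduces to Jensen) in one case and $c=\pi(f)$ in the other, and then using $\sqrt{x+y}\geq(\sqrt{x}+\sqrt{y})/\sqrt{2}$ to trade the additive $O(\epsilon)$ losses for the extra factor $\sqrt{2}$ that produces the $16$. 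Your median-split skeleton is a legitimate alternative, but since $\Phi_\zeta(K)$ is nondecreasing in $\zeta$, landing on, say, $\Phi_{\epsilon/64}$ or a $32$ in the denominator proves a strictly weaker statement; so ``the constants are not optimized'' is not an available exit here --- verifying that your bookkeeping (including the extra factors from the sign split and from using the threshold $\textsf{Var}_\pi(f)>\epsilon/2$ with denominator $\textsf{Var}_\pi(f)-\epsilon/4$ demanded by $\lambda_{\epsilon/2}$) lands exactly on $(32,16)$ \emph{is} the proof, and it is the part currently missing.
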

\begin{proof}
See Section \ref{sec:proof:lem:cheeger}.
\end{proof}

\subsection{Bounds using canonical paths}
Moving closer to our intended application, we assume now that $\Xset$ is a discrete set. We recall some basic definitions from graph theory. A graph $(\mathcal{V},\e)$ with vertex set $\mathcal{V}$ and edge set $\e$ is a set $\mathcal{V}$ together with a subset $\e$ of $\mathcal{V}\times \mathcal{V}$. We say that $(\mathcal{V},\e)$ is undirected if for all $(x,y)\in\mathcal{V}\times\mathcal{V}$, $(x,y)\in\e$ if and only if $(y,x)\in\e$. We will write an edge as $(e_-,e_+)$, where $e_-,e_+$ denote its two incident nodes. We say that the graph is connected if for all $(x,y)\in\mathcal{V}\times\mathcal{V}$, $x\neq y$, we can find a sequence of edges $(z_0,z_1),\ldots,(z_{\ell-1},z_\ell)$ such that $z_0=x$, $z_\ell=y$, and $(z_{k-1},z_k)\in\e$ for $k=1,\ldots,\ell$. The integer $\ell$ is the length of the path. There may exist many paths linking any two points. For each pair $(x,y)\in \mathcal{V}$, $x\neq y$, we assume given a special path denoted $\gamma_{xy}$ linking $(x,y)$ that we call a canonical path. We impose the additional restriction that an edge can appear only once along a given canonical path. We write $|\gamma_{xy}|$ to denote the length of the canonical path $\gamma_{xy}$, and $\Gamma \eqdef \{\gamma_{xy},\;x, y\in\mathcal{V}\}$ for the set of all canonical paths.  We make the following assumption.

\begin{assumption}
\label{H2}
There exists a subset $\Xset_0\subseteq\Xset$ such that $\pi(\Xset_0)>0$, and a connected undirected graph $(\Xset_0,\e_0)$ with canonical paths $\Gamma \eqdef \{\gamma_{xy},\;x, y\in\Xset_0\}$ such that for all $(x,y)\in\e_0$, $\pi(x) K(x,y)>0$.
\end{assumption}

We define 
\begin{equation}\label{def:m}
\textsf{m}(\Xset_0) \eqdef \max_{e\in\e_0} \;\sum_{\gamma_{xy}:\;\gamma_{xy}\ni e}\;\;\frac{|\gamma_{xy}|\pi(x)\pi(y)}{\pi(e_-)K(e_-,e_+)}.\end{equation}
When $\Xset_0=\Xset$, $\m(\Xset_0)$  is the geometric measure of bottleneck of \cite{sinclair:90} (see also  \cite{diaconis:stroock:91}).  In many cases by carefully choosing $\Xset_0$, $\m(\Xset_0)$ scales better than $\m(\Xset)$. The next result is analogous to Proposition 1 of \cite{diaconis:stroock:91}. 

\begin{proposition}\label{lb:cp}
Assume H\ref{H2}. Given $\epsilon\in[0,1/2)$, if $\pi(\Xset_0)\geq 1-\epsilon/8$, then 
$\lambda_\epsilon(K)\geq \mathsf{m}(\Xset_0)^{-1}$.
\end{proposition}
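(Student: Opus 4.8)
The plan is to mirror the classical canonical-paths argument of \cite{diaconis:stroock:91}, but carried out inside the sub-state-space $\Xset_0$ and reconciled with the $\epsilon$-spectral gap rather than the ordinary one. Fix $f\in L^2(\pi)$ with $\|f\|_\infty\le 1$ and $\textsf{Var}_\pi(f)>\epsilon$. First I would replace $f$ by $g\eqdef f - c$ for a well-chosen constant $c$ (e.g.\ $c=\pi(f)$, or a $\pi$-median of $f$ restricted to $\Xset_0$) so that $\textsf{Var}_\pi(f) = \|g\|_2^2 - (\text{something small})$ and $g$ still has controlled oscillation, $\texttt{osc}(g)\le 2$. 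The Dirichlet form $\e_K(f,f)=\e_K(g,g)$ is unchanged by the shift. The core of the argument is the standard chain of inequalities: write the variance-type quantity $\frac12\sum_{x,y\in\Xset_0}\pi(x)\pi(y)(g(x)-g(y))^2$, telescope each difference $g(x)-g(y)$ along its canonical path $\gamma_{xy}$ using edges of $\e_0$, apply Cauchy–Schwarz with the weights $|\gamma_{xy}|$, and then swap the order of summation so that each edge $e=(e_-,e_+)\in\e_0$ is charged by all paths passing through it. This produces the bound
\[
\frac12\sum_{x,y\in\Xset_0}\pi(x)\pi(y)(g(x)-g(y))^2 \;\le\; \m(\Xset_0)\,\sum_{e\in\e_0}\pi(e_-)K(e_-,e_+)\bigl(g(e_+)-g(e_-)\bigr)^2 \;\le\; 2\,\m(\Xset_0)\,\e_K(g,g),
\]
where the last step uses that each (undirected) edge contributes to $\e_K$ and that $\e_K(g,g)=\frac12\sum_{x,y}\pi(x)K(x,y)(g(y)-g(x))^2$ includes all the edge terms in $\e_0$ with the required nonnegativity of $\pi(x)K(x,y)$ on $\e_0$ from H\ref{H2}.

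Next I would convert the left-hand side into something comparable to $\textsf{Var}_\pi(f) - \epsilon/2$. The full variance satisfies $\textsf{Var}_\pi(f) = \frac12\sum_{x,y\in\Xset}\pi(x)\pi(y)(f(x)-f(y))^2$, and restricting the double sum to $x,y\in\Xset_0$ loses only the pairs with at least one endpoint in $\Xset_0^c$; since $\|f\|_\infty\le 1$ each such squared difference is at most $4$, and the total $\pi\otimes\pi$-mass of the excluded pairs is at most $2\pi(\Xset_0^c)\le 2(\epsilon/8)=\epsilon/4$. Hence
\[
\frac12\sum_{x,y\in\Xset_0}\pi(x)\pi(y)(f(x)-f(y))^2 \;\ge\; \textsf{Var}_\pi(f) - \tfrac12\cdot 4\cdot\tfrac{\epsilon}{4} \;=\; \textsf{Var}_\pi(f) - \tfrac{\epsilon}{2}.
\]
Combining with the canonical-paths bound gives $\textsf{Var}_\pi(f) - \epsilon/2 \le 2\m(\Xset_0)\,\e_K(f,f)$, i.e.\ $\e_K(f,f)/(\textsf{Var}_\pi(f)-\epsilon/2) \ge (2\m(\Xset_0))^{-1}$ for every admissible $f$; taking the infimum over such $f$ yields $\lambda_\epsilon(K)\ge (2\m(\Xset_0))^{-1}$. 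To reach the stated bound $\lambda_\epsilon(K)\ge \m(\Xset_0)^{-1}$ without the factor $2$, I would sharpen the edge-charging step by noting the graph is undirected so each unordered edge $\{e_-,e_+\}$ carries both directed contributions $\pi(e_-)K(e_-,e_+)$ and $\pi(e_+)K(e_+,e_-)$ to $\e_K$, effectively halving the loss — alternatively, absorb the constant by the conventional choice of summing over ordered pairs $(x,y)$ with $x\ne y$ on one side and unordered edges on the other, which is exactly how \cite{diaconis:stroock:91} arrange their Proposition 1. I expect the bookkeeping of these constants (factors of $2$ from ordered-vs-unordered pairs and from directed-vs-undirected edges) to be the main fiddly point; the structural inequalities themselves are routine. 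One should also check the degenerate edge cases $\epsilon=0$ and $\textsf{Var}_\pi(f)=\epsilon$, and confirm that the infimum in the definition of $\lambda_\epsilon(K)$ is over a nonempty set under the hypothesis $\pi(\Xset_0)\ge 1-\epsilon/8$ (otherwise the bound is vacuously true).
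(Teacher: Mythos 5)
Your proposal follows the paper's proof essentially verbatim: truncate the variance double sum to $\Xset_0\times\Xset_0$ at a cost of at most $\epsilon/2$ (using $\pi(\Xset_0^c)\le\epsilon/8$ and $\|f\|_\infty\le 1$), then bound the truncated sum by $\m(\Xset_0)$ times the Dirichlet form via the canonical-path, Cauchy--Schwarz, and edge-charging argument. The factor of $2$ you worry about is not actually there: the $\tfrac12$ in front of your left-hand double sum matches the $\tfrac12$ in $\e_K(f,f)=\tfrac12\sum_{x,y}\pi(x)K(x,y)(f(y)-f(x))^2$, so carrying it through the canonical-path inequality gives $\textsf{Var}_\pi(f)-\epsilon/2\le\m(\Xset_0)\,\e_K(f,f)$ directly, exactly as in the paper.
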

\begin{proof}
See Section \ref{sec:proof:proplbcp}.
\end{proof}

\begin{remark}
Theorem \ref{thm:mix} is proved by applying Proposition \ref{lb:cp} and Lemma \ref{lem:key:sprofile} to the Gibbs sampler chain produced by Algorithm \ref{algo:1}.
\end{remark}

\bigskip

\section{Numerical illustration}\label{sec:num}

We investigate several aspects of Algorithm \ref{algo:1} in a simulation setting using logistic and Poisson regression. Here is the simulation setup. We generate $X\in\rset^{n\times p}$ with independent rows drawn from $\textbf{N}_p(0,\Sigma)$, where $\Sigma_{kj} = \varrho^{|j-k|}$, where $\varrho\in\{0,0.9\}$ referred to as ``low correlation" and as ``highly correlation", respectively. We consider various combinations of $n,p$.

As a true $\theta_\star\in\rset^p$, we use a sparse $\theta_{\star}$ with first 10 components uniformly drawn from $(-3,-2) \cup (2,3)$. For logistic regression, we draw the response as $Y_i \sim \textbf{Ber}(p_i)$, with $p_i = \left(1+ \exp{-\pscal{{\bf x}_i}{\theta_\star}}\right)^{-1}$, where ${\bf x}_i$ denotes the $i$-th row of $X$. For Poisson regression we draw the response as $Y_i \sim \textbf{Poi}(\lambda_i)$, where $\lambda_i = e^{\pscal{{\bf x}_i}{\theta_\star}}$.

Throughout we set the prior parameter $\mathsf{u} = 0.8$, and in Algorithm \ref{algo:1}, we set $J=100$, and unless stated otherwise, we take $\delta^{(0)}$ as the support of \textbf{lasso}.

\subsection{Illustration using logistic regression}

\subsubsection{Effect of the initial estimator}

\begin{figure}[H]
\includegraphics[width=.49\textwidth]{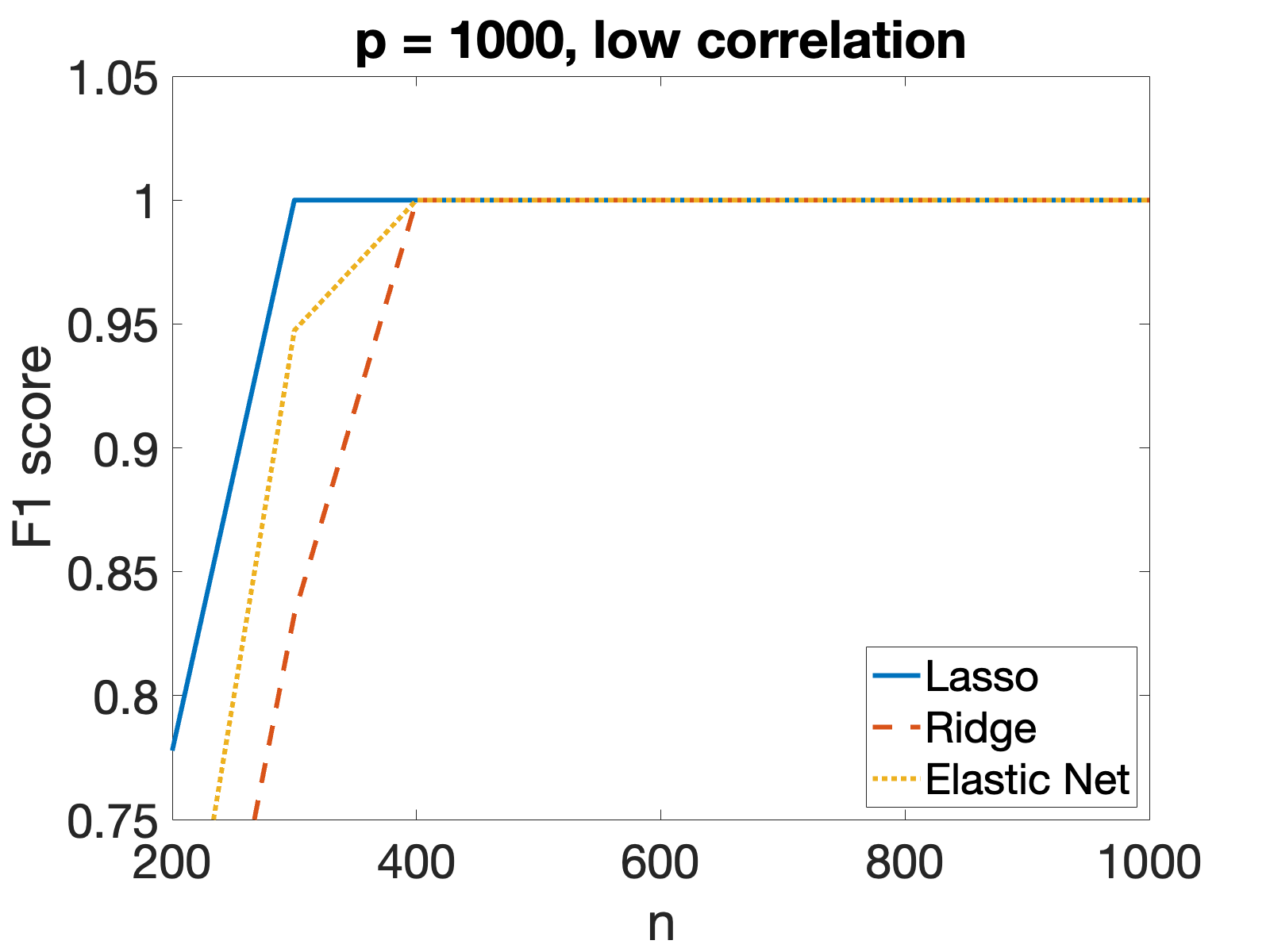}
\includegraphics[width=.49\textwidth]{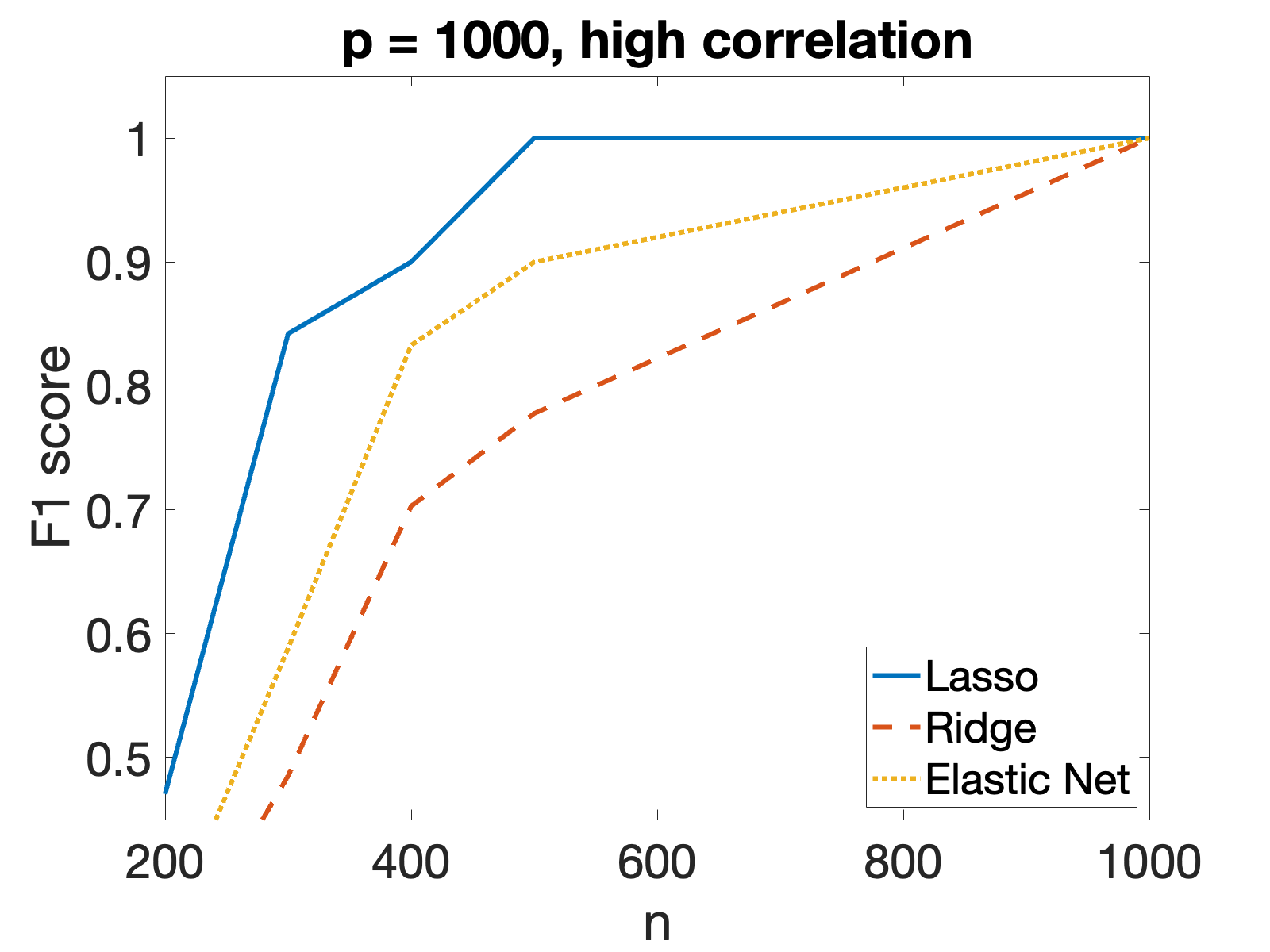}
\vspace{-0.5cm}
\caption{Comparison of F1-scores for different initial estimators for logistic models.}
\label{onestep_diffinit_p1000}
\end{figure}

We first investigate the effect of the initial estimator $\tilde\theta$ on OLAP. We compare three different initializations: \textbf{lasso}, \textbf{ridge}, and \textbf{elastic net}. We utilize R package \textsf{glmnet} to calculate these initial estimators. As a measure of performance we compute the F1-score (harmonic mean of sensitivity and precision) along the MCMC chain.

We focus on the case when $p = 1000$ and we increase the sample size $n$ from $200$ to $1000$, under both the low and high correlation settings. Under each simulation set up, we run Algorithm \ref{algo:1} 50 times, and we report the median of F1-scores of the 50 chains after mixing.

As expected we observe from Figure \ref{onestep_diffinit_p1000} that the F1 score is an increasing function of the sample size, and in all scenarios, the \textbf{lasso} initialization produces  the highest F1-scores. We also observe that given enough sample size, all three initializations perform well, which is consistent with our theoretical findings. Based on these results, we focus on the \textbf{lasso} initialization for the remaining experiments.

\subsubsection{Mixing time comparisons}
We compare empirically the mixing time of Algorithm \ref{algo:1} with the mixing time of the exact method that employs the data-augmentation strategy proposed in \cite{AB:19}. The  method consists in sampling jointly $(\delta,\theta)$ from the distribution 
\begin{equation}\label{post:da}
\Pi(\delta,\theta\vert \D) \propto \left(\frac{1}{p^\mathsf{u}}\sqrt{\frac{1}{\rho_0}}\right)^{\|\delta\|_0} \;\exp\left(-\frac{1}{2}\|\theta_\delta\|_2^2 -\frac{\rho_0}{2}\|\theta-\theta_\delta\|_2^2 + \ell(\theta_\delta;\D)\right),\end{equation}
for some hyper-parameter $\rho_0$. One can sample from this distribution by Metropolis-within-Gibbs, alternating between a Gibbs update on $\delta$ given $\theta$, and a Metropolis-Hastings update on $\theta$ given $\delta$ (here we use MaLA).  We note that the marginal distribution of $\delta$ under (\ref{post:da}) is precisely (\ref{post:Pi}). So the method is exact, and we refer to it as the \textsf{Exact} method. For both algorithms we estimate their mixing times using the $L$-lag coupling method  proposed by \cite{biswas2019estimating}. The $L$-lag coupling method  consists in running coupled version of an MCMC algorithm until a coupling event. The mixing time of the algorithm can then be related to a moment of the coupling time. We refer the reader to \cite{biswas2019estimating} and \cite{jacob2019unbiased}.  In our implementation of the $L$-lag coupling method we couple the chains using a straightforward maximal coupling of the Bernoulli draws in Algorithm \ref{algo:1}. For coupling the Metropolis-within-Gibbs sampling for (\ref{post:da}) we follow  the algorithm in \cite{atchade2021fast}.

\begin{figure}[h]
\includegraphics[width=.49\textwidth]{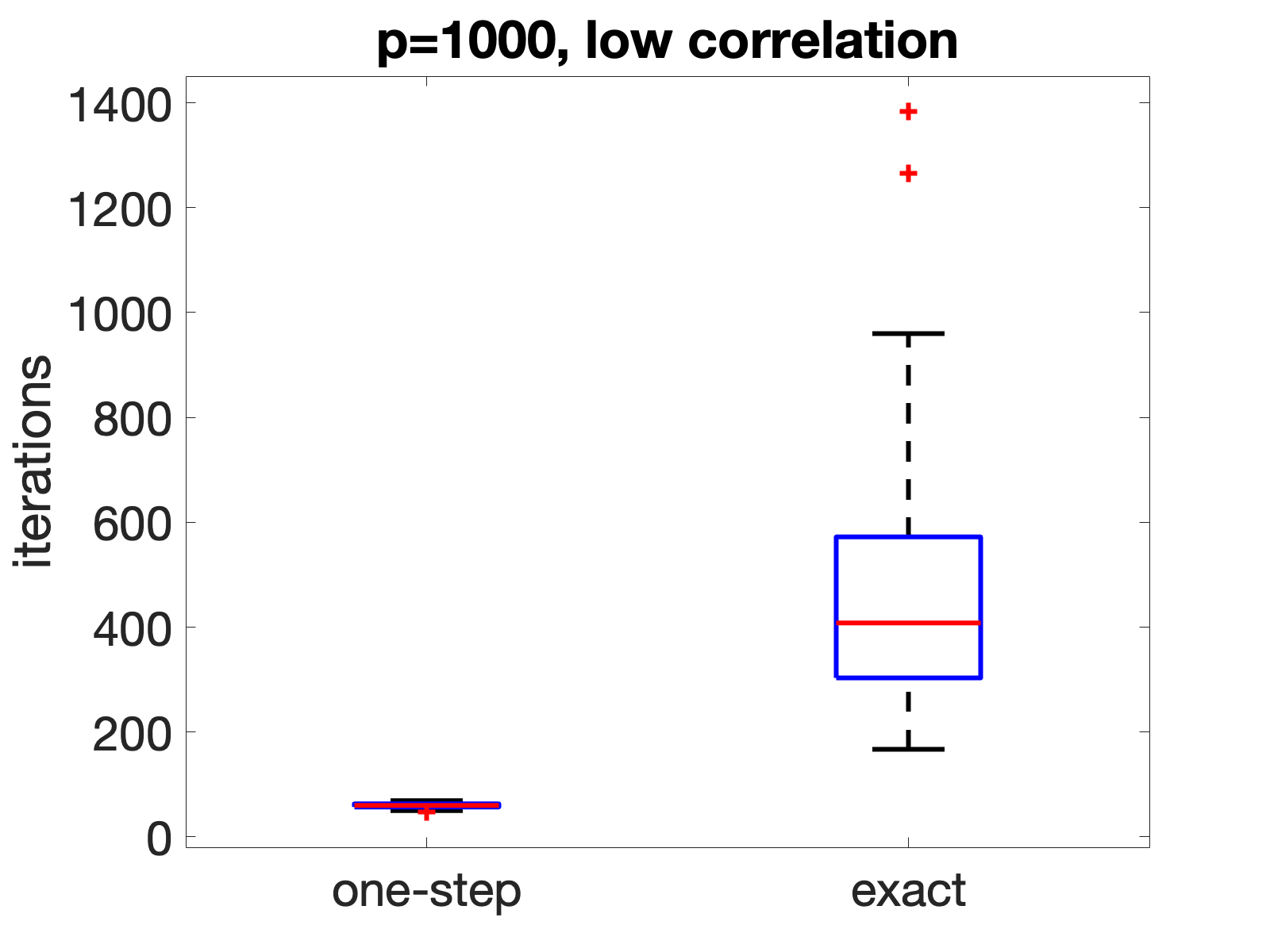}
\includegraphics[width=.49\textwidth]{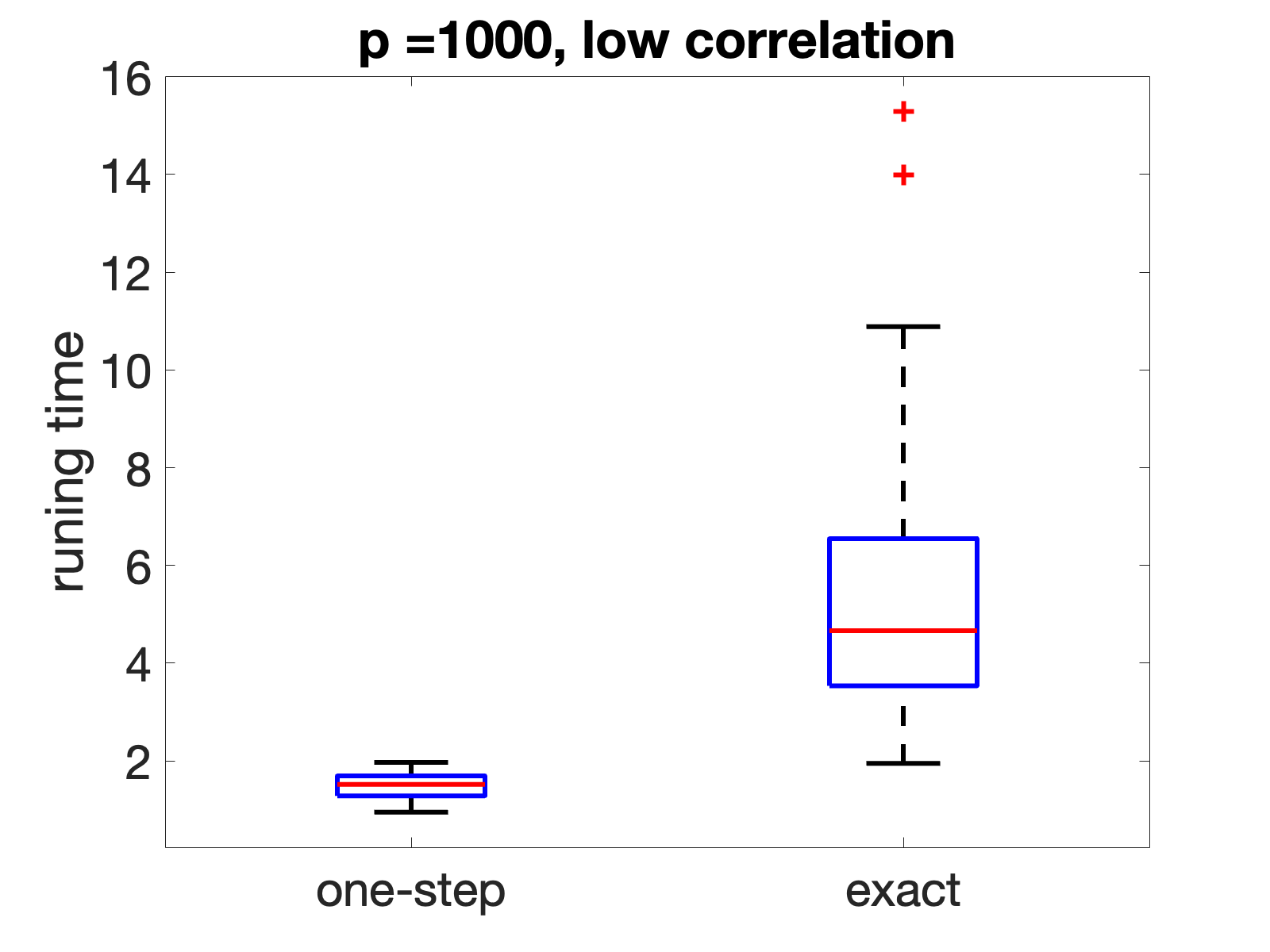}
\vspace{-0.5cm}
\caption{Iterations and mixing time (in seconds) of logistic regression, low correlation.}
\label{mixing_logistic_OLAP_rho0}
\end{figure}

\begin{figure}[h]
\includegraphics[width=.49\textwidth]{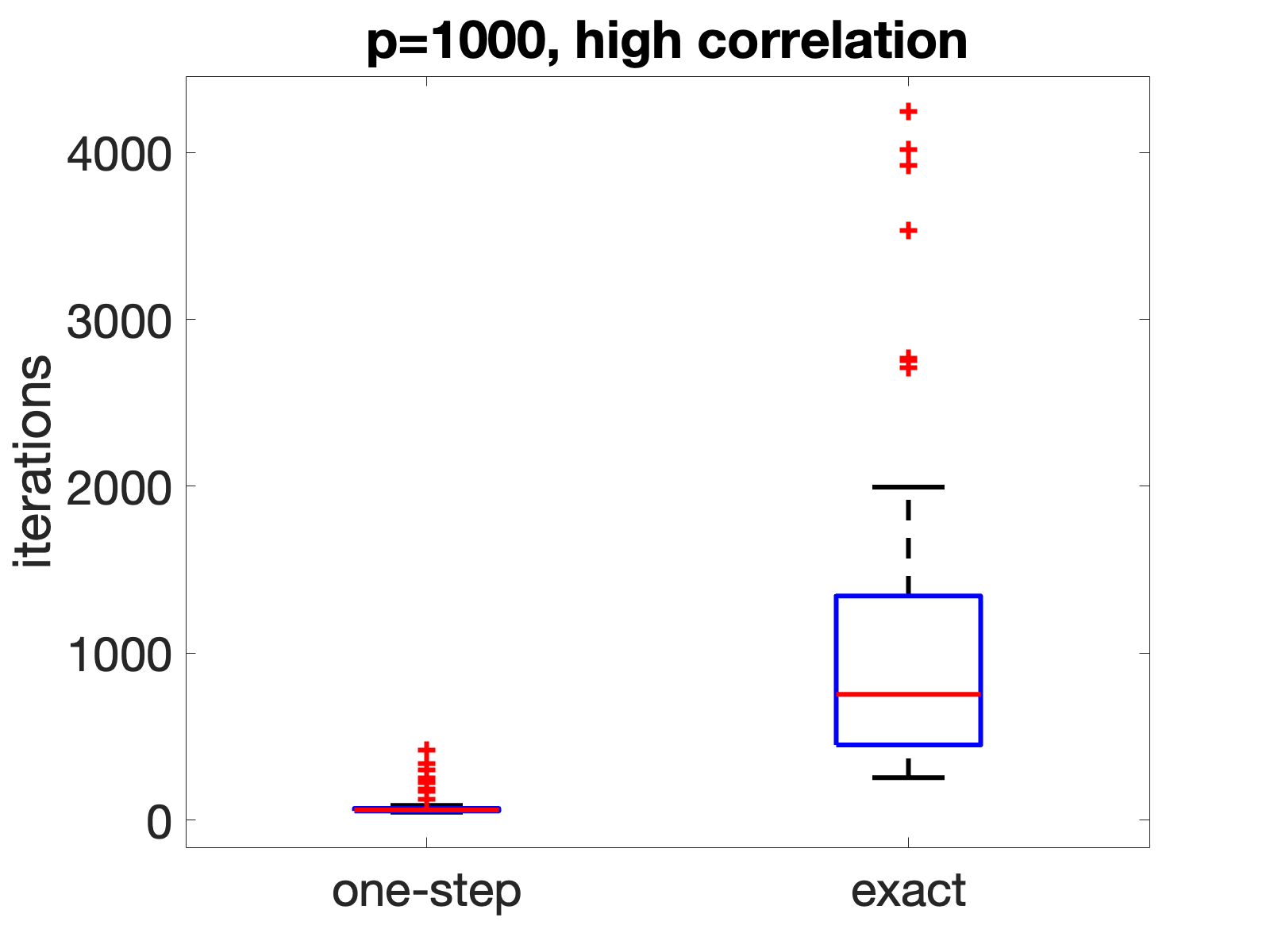}
\includegraphics[width=.49\textwidth]{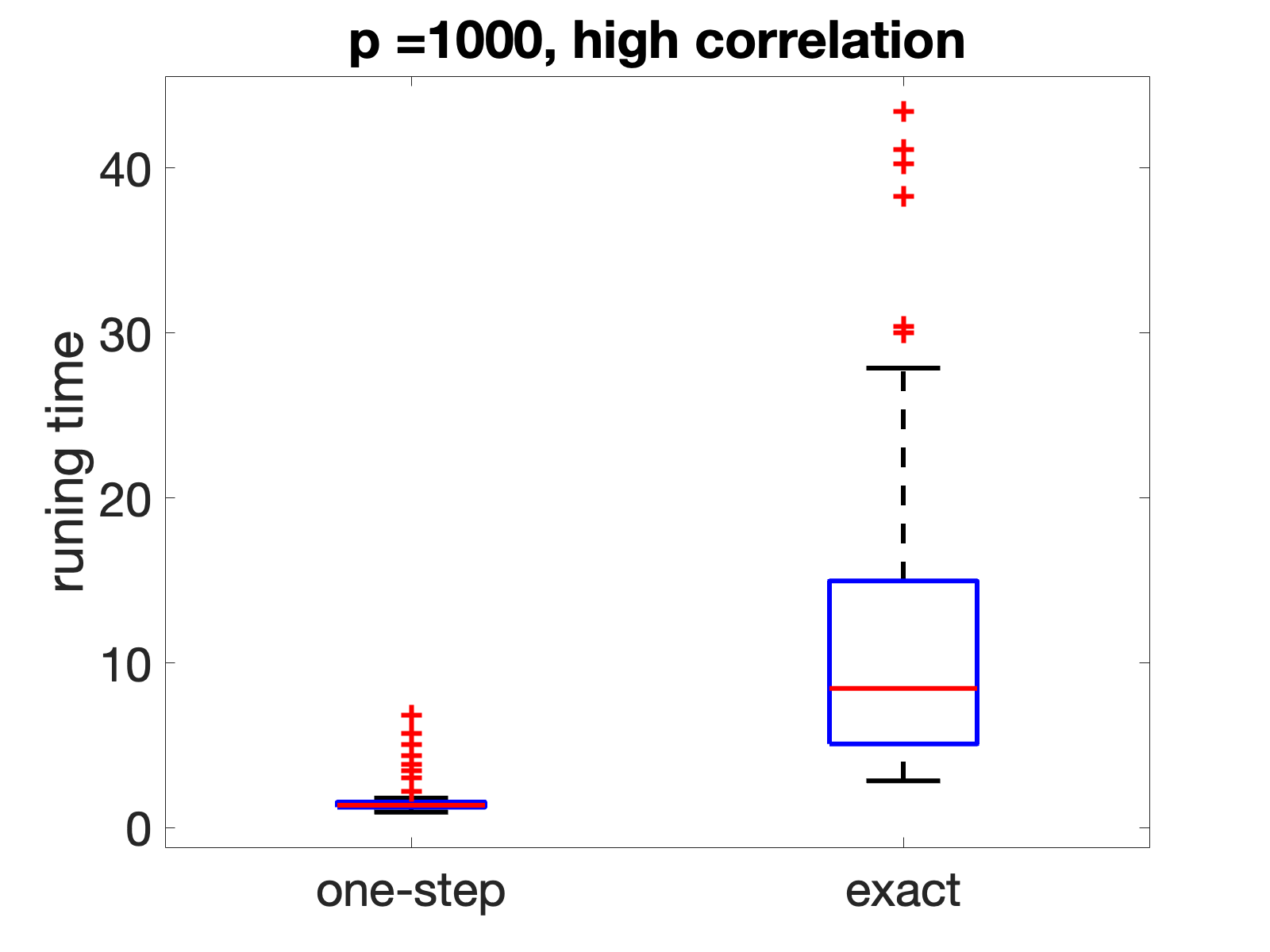}
\vspace{-0.5cm}
\caption{Iterations and mixing time (in seconds) of logistic regression, high correlation.}
\label{mixing_logistic_OLAP_rho09}
\end{figure}

For the comparison we set $p=1000$ and $n =1000$, and we consider both the low and high correlation cases. In each setting, we generate 50 datasets. Each dataset defines a posterior distribution against which the algorithms under consideration have a mixing time. We estimate these mixing times by running 30 coupled chains. We could observe from the boxplots on Figures  \ref{mixing_logistic_OLAP_rho0} and \ref{mixing_logistic_OLAP_rho09} that Algorithm \ref{algo:1} mixes in a much smaller number of iterations and has a shorter running time compared to the Metropolis-within-Gibbs sampler for the \textsf{Exact} method. Specifically, in the low correlation case, \textsf{OLAP} sampler has a median burn-in of $60$ iterations, and a median burn-in running time of $1.5$ seconds, and our sampler for the \textsf{Exact} method has a median burn-in of $408$ iterations, and a median burn-in running time of $4.7$ seconds. In the high correlation case, \textsf{OLAP} sampler has a median burn-in  of $60.5$ iterations, and a median burn-in running time of $1.4$ seconds, and the  sampler for the \textsf{Exact} method has a median burn-in of $753$ iterations, and a median burn-in running time of $8.6$ seconds.

\subsubsection{Statistical performance and comparison with other methods}


\begin{table}[htbp]
\centering
\caption{F1-score for logistic regression. $p = 1000$, low correlation}
\scalebox{0.65}{
\begin{tabular}{|r|r|r|r|r|r|r|r|r|r|r|r|r|}
\toprule
\hline
& \multicolumn{2}{c|}{\textbf{S-Gibbs}} & \multicolumn{2}{c|}{\textbf{SparseVB}} &\multicolumn{2}{c|}{\textbf{One-step Lasso}} &\multicolumn{2}{c|}{\textbf{Exact}} & \multicolumn{2}{c|}{\textbf{Lasso}} \\
\hline
\multicolumn{1}{|l|}{\textbf{n}} & \multicolumn{1}{c|}{\textbf{Median}} & \multicolumn{1}{c|}{\textbf{Std. Error}} & \multicolumn{1}{c|}{\textbf{Median}} & \multicolumn{1}{c|}{\textbf{Std. Error}} & \multicolumn{1}{c|}{\textbf{Median}} & \multicolumn{1}{c|}{\textbf{Std. Error}} & \multicolumn{1}{c|}{\textbf{Median}} & \multicolumn{1}{c|}{\textbf{Std. Error}} & \multicolumn{1}{c|}{\textbf{Median}} & \multicolumn{1}{c|}{\textbf{Std. Error}} \\
\hline
\textbf{200} & 0.778 & 0.142 & 0.889 & 0.092 & 0.778 & 0.093 & 0.750 & 0.220 & 0.300 & 0.080 \\
\hline
\textbf{300} & 0.909 & 0.067 & 1.000 & 0.009 & 1.000 & 0.052 & 1.000 & 0.035 & 0.294 & 0.083 \\
\hline
\textbf{400} & 0.909 & 0.061 & 1.000 & 0.025 & 1.000 & 0 & 1.000 & 0 & 0.280 & 0.098 \\
\hline
\textbf{500} & 0.909 & 0.065 & 1.000 & 0.025 & 1.000 & 0 & 1.000 & 0 & 0.288 & 0.097 \\
\hline
\textbf{1000} & 0.952 & 0.058 & 1.000 & 0.026 & 1.000 & 0 & 1.000 & 0 & 0.339 & 0.122 \\
\bottomrule
\end{tabular}%
}
\label{F1_compare}
\end{table}%

Here we perform a comparison between \textsf{OLAP}, the Exact method, Skinny-Gibbs (\cite{narisetty2018skinny}), SparseVB (\cite{ray2020spike}), and the standard \textbf{lasso} in terms of statistical recovery of the true support $\delta_\star$. For the simulation we use the \textsf{R} packages \textsf{skinnybasad} (for Skinny-Gibbs) and \textsf{svb} (for SparseVB) provided by the authors and we use the popular  \textsf{glmnet} for \textbf{lasso}. We focus on the scenarios with $p = 1000$, and vary the sample sizes and correlations. We report both the medians and standard errors of the F1-scores over 50 chains. The results are presented in Table \ref{F1_compare} and \ref{F1_compare_rho09}.

We could observe from Table \ref{F1_compare} that when $\varrho = 0$,  \textsf{OLAP}, the Exact method, and SparseVB have roughly the same F1 score, and are all better than S-Gibbs and LASSO. While, with high correlation data, we could observe from Table \ref{F1_compare_rho09} that \textsf{OLAP} matches the exact method, and both are better than SparseVB and S-Gibbs.

\begin{table}[htbp]
\centering
\caption{F1-score for logistic regression. $p = 1000$, high correlation}  \scalebox{0.65}{
\begin{tabular}{|r|r|r|r|r|r|r|r|r|r|r|r|r|}
\toprule
\hline
& \multicolumn{2}{c|}{\textbf{S-Gibbs}} & \multicolumn{2}{c|}{\textbf{SparseVB}} &\multicolumn{2}{c|}{\textbf{One-step Lasso}} &\multicolumn{2}{c|}{\textbf{Exact}} & \multicolumn{2}{c|}{\textbf{Lasso}} \\
\hline
\multicolumn{1}{|l|}{\textbf{n}} & \multicolumn{1}{c|}{\textbf{Median}} & \multicolumn{1}{c|}{\textbf{Std. Error}} & \multicolumn{1}{c|}{\textbf{Median}} & \multicolumn{1}{c|}{\textbf{Std. Error}} & \multicolumn{1}{c|}{\textbf{Median}} & \multicolumn{1}{c|}{\textbf{Std. Error}} & \multicolumn{1}{c|}{\textbf{Median}} & \multicolumn{1}{c|}{\textbf{Std. Error}} & \multicolumn{1}{c|}{\textbf{Median}} & \multicolumn{1}{c|}{\textbf{Std. Error}} \\
\hline
\textbf{200} & 0.358 & 0.169 & 0.500 & 0.167 & 0.471 & 0.175 & 0.572 & 0.184 & 0.229 & 0.064 \\
\hline
\textbf{300} & 0.476 & 0.181 & 0.718 & 0.165 & 0.842 & 0.164 & 0.842 & 0.136 & 0.265 & 0.070 \\
\hline
\textbf{400} & 0.586 & 0.162 & 0.800 & 0.143 & 0.900 & 0.096 & 0.900 & 0.105 & 0.281 & 0.076 \\
\hline
\textbf{500} & 0.652 & 0.172 & 0.842 & 0.105 & 1.000 & 0.070 & 0.947 & 0.071 & 0.314 & 0.057 \\
\hline
\textbf{1000} & 0.714 & 0.142 & 1.000 & 0.082 & 1.000 & 0 & 1.000 & 0.037 & 0.310 & 0.052 \\
\bottomrule
\end{tabular}%
}

\label{F1_compare_rho09}
\end{table}%

\subsection{Numerical illustration using Poisson regression}\label{sec:num_poi}
Similarly, we investigate several aspects of Algorithm \ref{algo:1} with a simulated Poisson regression example. The setting is very similar to the logistic experiments, except that, we draw the response as $y_i \sim \textbf{Poi}(\lambda_i)$, with $\lambda_i = \exp(-\pscal{{\bf x}_i}{\theta_\star})$, where ${\bf x}_i$ denotes the $i$-th row of $X$.
\subsubsection{Effect of the initial estimator}
Using a similar experiment as above, we compare the effect of the initial estimator on \textsf{OLAP}. The results is reported on Figure \ref{onestep_diffinit_p1000_Poisson}. We first note that the Poisson regression requires comparatively larger sample size for the convergence of \textsf{OLAP}. And the behavior of the three estimators are much more similar. 



\begin{figure}[H]
\includegraphics[width=.49\textwidth]{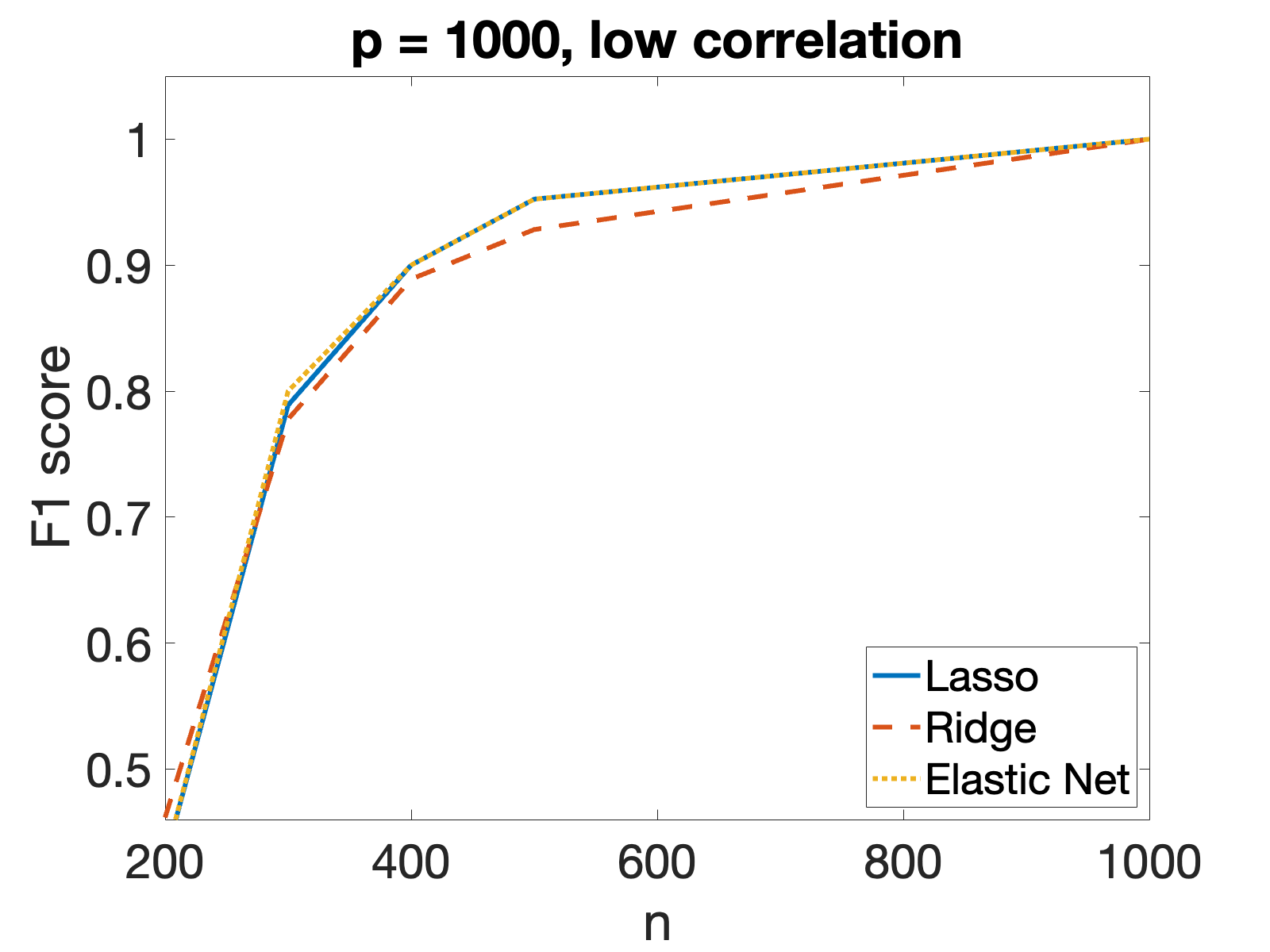}
\includegraphics[width=.49\textwidth]{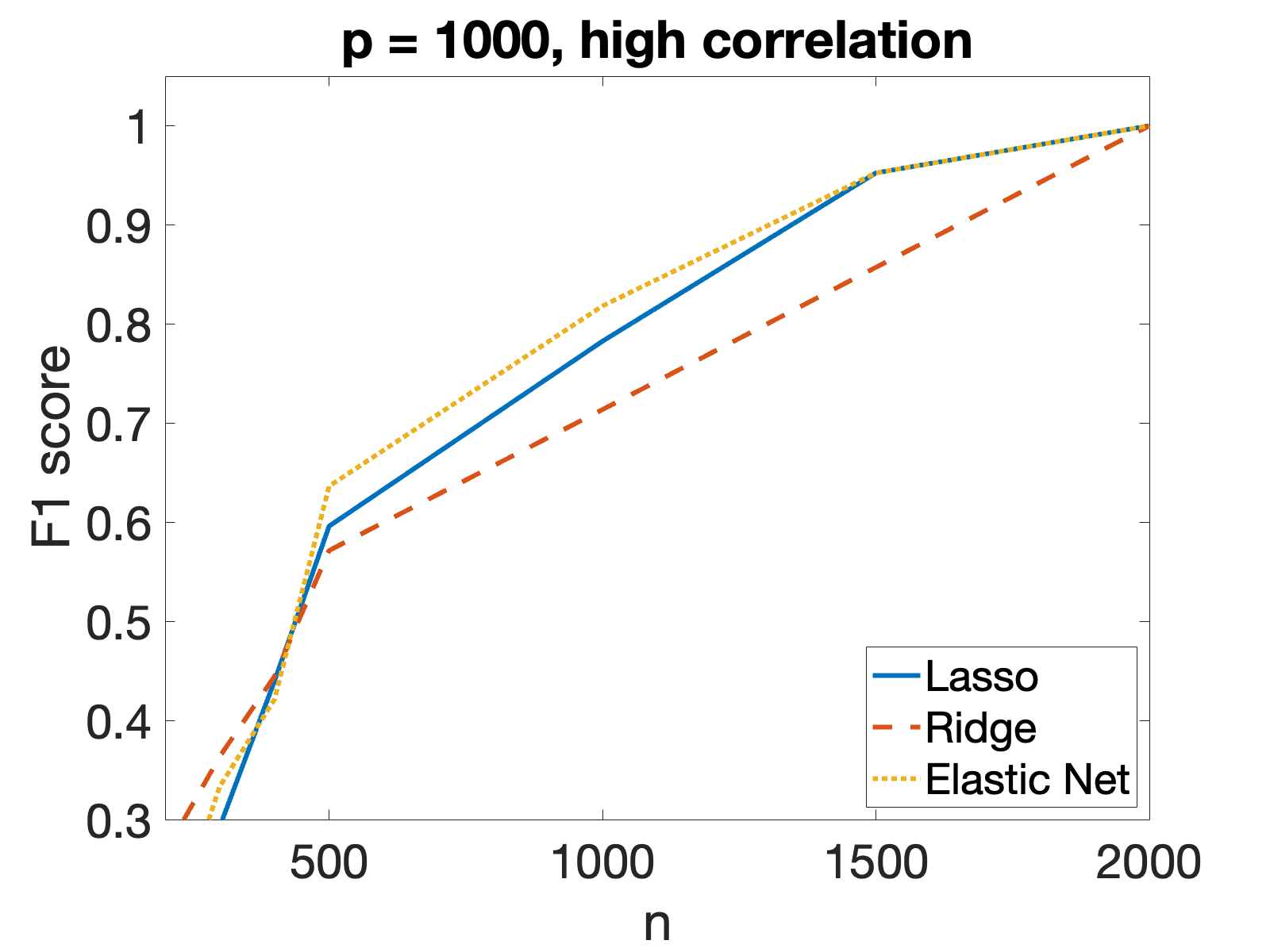}
\caption{Comparison of F1-scores for different initial estimators for Poisson models, low correlation (left) and high-correlation}
\label{onestep_diffinit_p1000_Poisson}
\end{figure}

\subsubsection{Mixing time}
We use the same $L$-lag coupling methodology to compare the mixing time of the \textsf{OLAP} Gibbs sampler and the Metropolis-within-Gibbs sampler for the \textsf{Exact} method. Here we set $n = p=1000$ in the low correlation case, and $n=2000$, $p=1000$ in the high correlation case.

\begin{figure}[h]
\includegraphics[width=.49\textwidth]{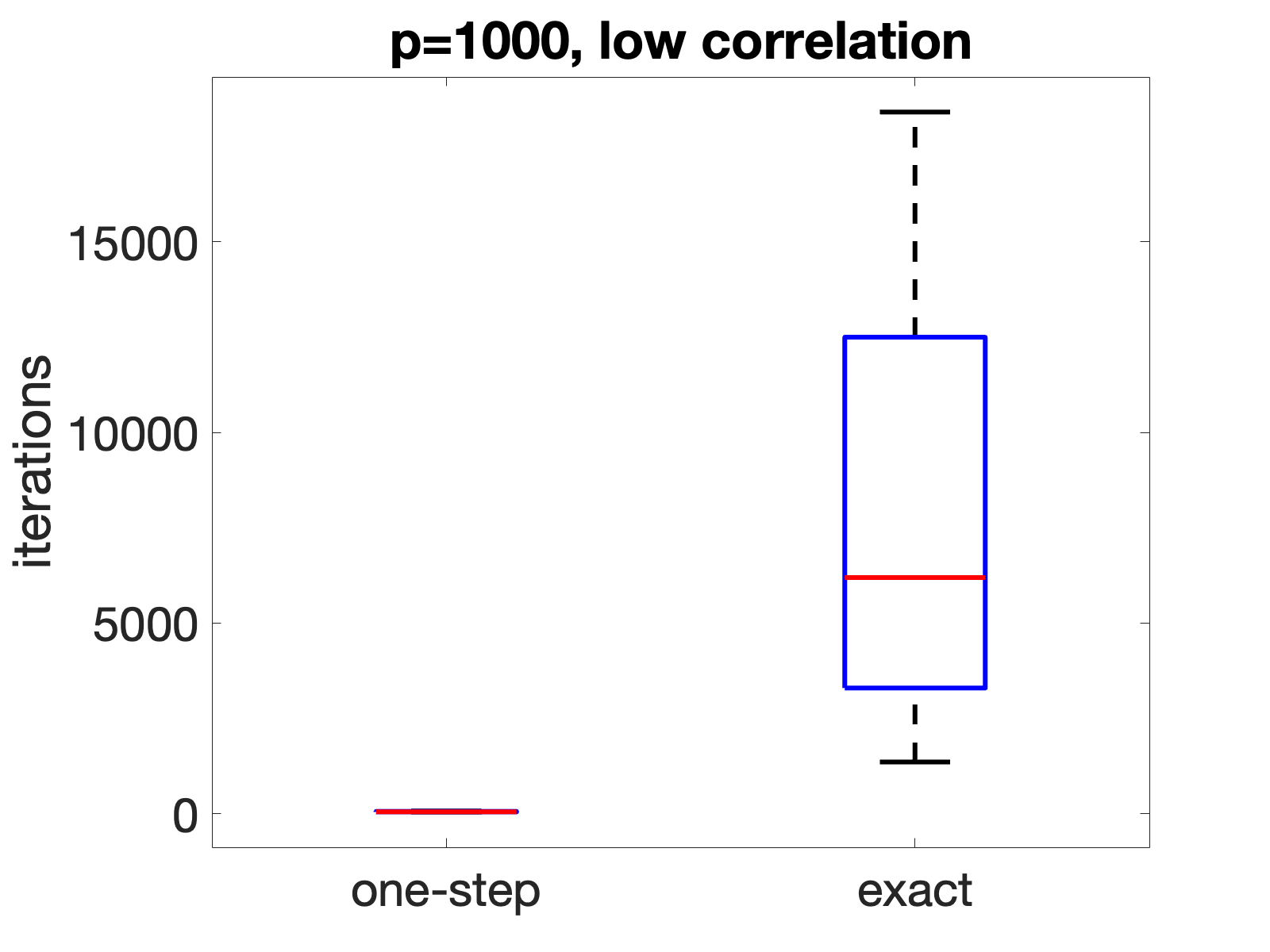}
\includegraphics[width=.49\textwidth]{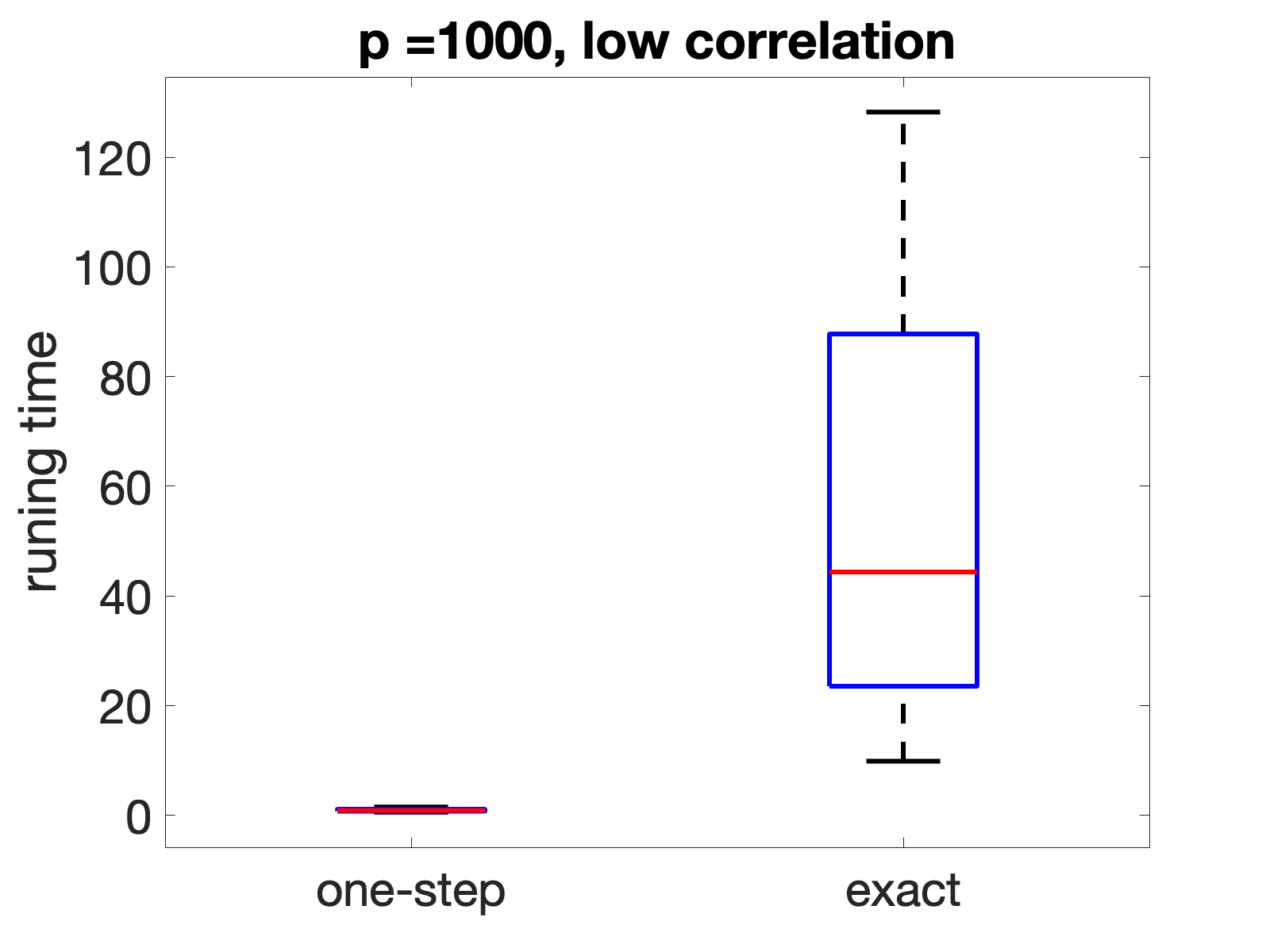}
\caption{Iterations and mixing time (in seconds) of Poisson regression, low correlation.}
\label{mixing_Poisson_OLAP_rho0}
\end{figure}

\begin{figure}[h]
\includegraphics[width=.49\textwidth]{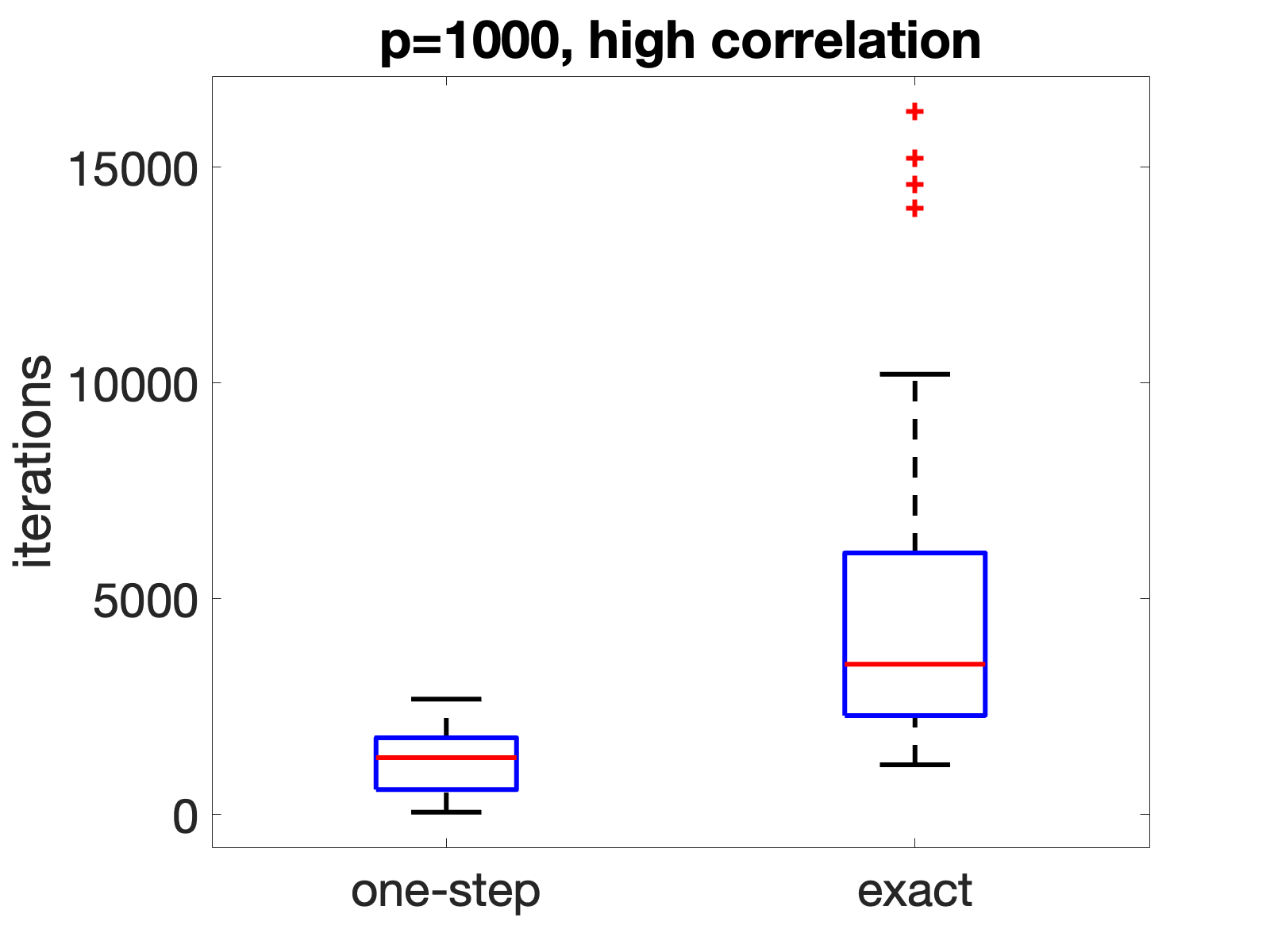}
\includegraphics[width=.49\textwidth]{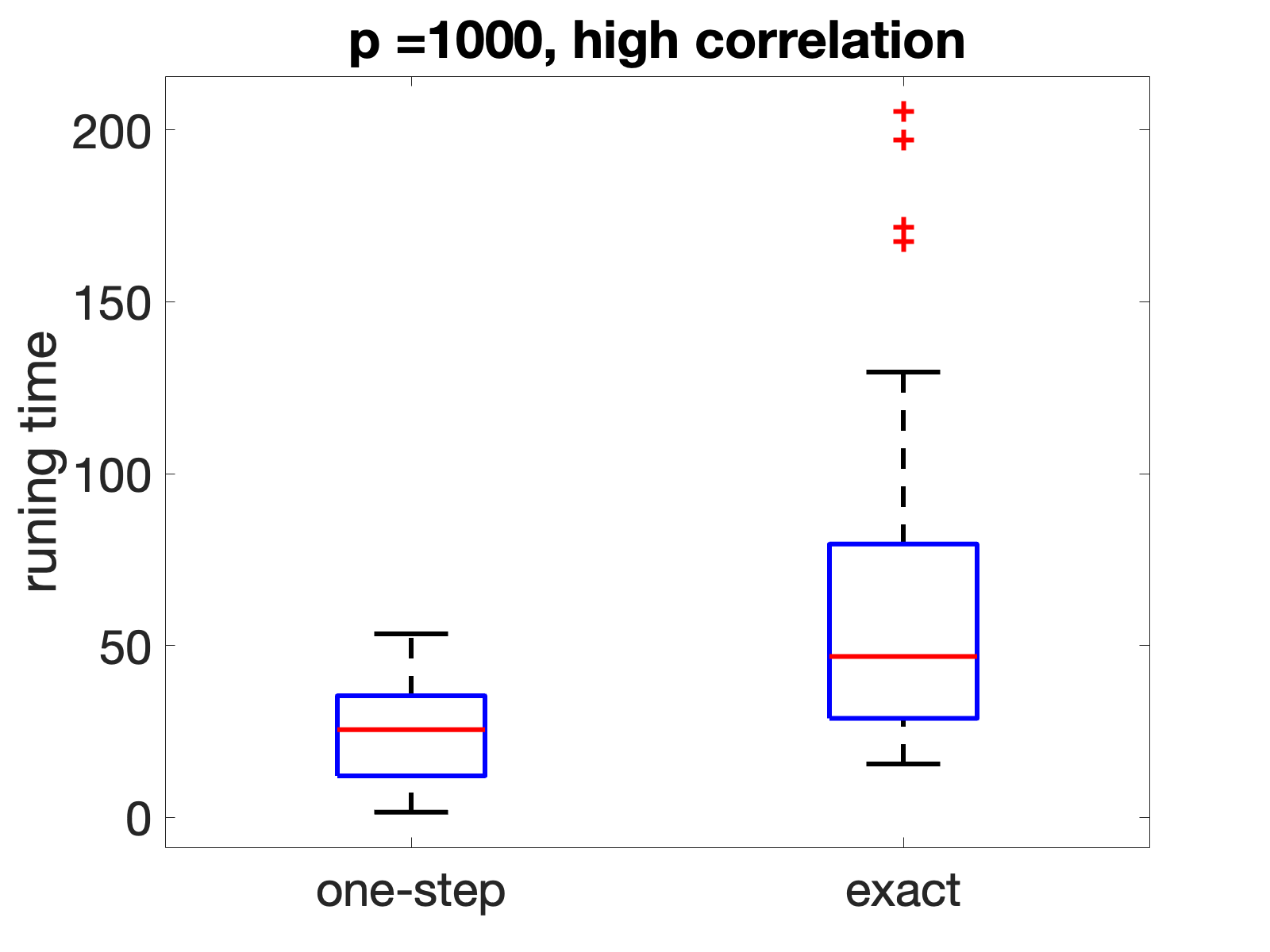}
\caption{Iterations and mixing time (in seconds) of Poisson regression, high correlation.}
\label{mixing_Poisson_OLAP_rho09}
\end{figure}

Again, We could observe from the Boxplot \ref{mixing_Poisson_OLAP_rho09} and \ref{mixing_Poisson_OLAP_rho0} that \textsf{OLAP} sampler takes fewer iterations and less running time than \textsf{Exact} sampler to mix, as in the low correlation case, \textsf{OLAP} sampler has a median burn-in iterations before convergence of $54.5$, and a median burn-in running time of $0.9$ seconds, and \textsf{EXACT} sampler has a median burn-in iterations before convergence of $6186$, and a median burn-in running time of $44.3$ seconds; while in the high correlation case, \textsf{OLAP} sampler has a median burn-in iterations before convergence of $1319.5$, and a median burn-in running time of $25.5$ seconds, and \textsf{EXACT} sampler has a median burn-in iterations before convergence of $3480$, and a median burn-in running time of $46.8$ seconds.


\begin{table}[htbp]
\centering
\caption{F1-score for Poisson regression. $p = 1000$,  low correlation}
\scalebox{0.75}{
\begin{tabular}{|r|r|r|r|r|r|r|r|r|r|r|r|r|}
\toprule
\hline
\hline
& \multicolumn{2}{c|}{\textbf{One-step Lasso}} & \multicolumn{2}{c|}{\textbf{Exact}} & \multicolumn{2}{c|}{\textbf{Lasso}} \\
\hline
\multicolumn{1}{|l|}{\textbf{n}} & \multicolumn{1}{c|}{\textbf{Median}} & \multicolumn{1}{c|}{\textbf{Std. Error}} & \multicolumn{1}{c|}{\textbf{Median}} & \multicolumn{1}{c|}{\textbf{Std. Error}} & \multicolumn{1}{c|}{\textbf{Median}} & \multicolumn{1}{c|}{\textbf{Std. Error}} \\
\hline
\textbf{200} & 0.429 & 0.135 & 0.533 & 0.170 & 0.300 & 0.080 \\
\hline
\textbf{300} & 0.789 & 0.133 & 0.833 & 0.131 & 0.294 & 0.083 \\
\hline
\textbf{400} & 0.900 & 0.107 & 0.947 & 0.197 & 0.280 & 0.098 \\
\hline
\textbf{500} & 0.952 & 0.057 & 1.000 & 0.144 & 0.288 & 0.097 \\
\hline
\textbf{1000} & 1.000 & 0.007 & 1.000 & 0.016 & 0.339 & 0.122 \\
\bottomrule
\end{tabular}%
}
\label{F1_compare_Poisson}
\end{table}%

\begin{table}[htbp]
\centering
\caption{F1-score for Poisson regression. $p = 1000$,  high correlation}  \scalebox{0.75}{
\begin{tabular}{|r|r|r|r|r|r|r|r|r|r|r|r|r|}
\toprule
\hline
& \multicolumn{2}{c|}{\textbf{One-step Lasso}} & \multicolumn{2}{c|}{\textbf{Exact}} & \multicolumn{2}{c|}{\textbf{Lasso}} \\
\hline
\multicolumn{1}{|l|}{\textbf{n}} & \multicolumn{1}{c|}{\textbf{Median}} & \multicolumn{1}{c|}{\textbf{Std. Error}} & \multicolumn{1}{c|}{\textbf{Median}} & \multicolumn{1}{c|}{\textbf{Std. Error}} & \multicolumn{1}{c|}{\textbf{Median}} & \multicolumn{1}{c|}{\textbf{Std. Error}} \\
\hline
\textbf{200} & 0.222 & 0.132 & 0.286 & 0.164 & 0.229 & 0.064 \\
\hline
\textbf{300} & 0.293 & 0.160 & 0.556 & 0.189 & 0.265 & 0.070 \\
\hline
\textbf{400} & 0.440 & 0.168 & 0.594 & 0.198 & 0.281 & 0.076 \\
\hline
\textbf{500} & 0.596 & 0.189 & 0.778 & 0.130 & 0.314 & 0.057 \\
\hline
\textbf{1000} & 0.783 & 0.089 & 0.894 & 0.115 & 0.310 & 0.052 \\
\hline
\textbf{1500} & 0.952 & 0.062 & 1.000 & 0.076 & 0.286 & 0.046 \\
\hline
\textbf{2000} & 1.000 & 0.057 & 1.000 & 0.070 & 0.299 & 0.066 \\
\hline
\bottomrule
\end{tabular}%
}
\label{F1_compare_rho09_Poisson}
\end{table}%

\subsection{A breast cancer data application}
We illustrate the method with a high-profile breast cancer data example taken from \cite{geneexpression}. Accurate prediction of distant metastasis development in breast cancer can help guide treatments, and ultimately save life while preserving quality of life. In \cite{geneexpression:2}, using gene expression data, the authors developed a prognostic score based on 70 identified genes for predicting the advent of metastasis of breast cancer cells  in distant organs  within 10 years of diagnostics. In this illustration we re-analyzed the data collected by the same research group in \cite{geneexpression} and we compare their \textsf{70-genes} profile rule with a logistic regression model for predicting the advent of distant metastasis within 5 years.

The data contains 295 women, all younger than 53, with stage I or II breast cancer. Gene intensity measurements of 24496 genes along with 13 clinical variables were collected on the patients. Of the 295 patients, 151 had lymph-node-negative disease, and 144 had lymph-node-positive disease. Ten of the lymph-node-negative, and 120 of the lymph-node-positive had received more aggressive therapy, including chemotherapy, or hormonal therapy, or both. Clearly, the advent of distant metastasis depends on the stage of the cancer when detected, and on initial treatment received. However, despite the cases of lymph-node diseases, most of the cancer cases appear to be at similar stages, and  following \cite{geneexpression}, we will not account for these interactions in the analysis. As response variable $y$, we consider the advent of distant metastasis within five years, as a binary response $\{0, 1\}$. 

To reduce the size of the covariates, we adopted the pre-processing method implemented by \cite{guo2018some}, with one-at-the-time initial logistic regressions that keeps only genes with p-value less than  $0.01$. To this initial set of genes, we then add the $70$ genes identified by \cite{geneexpression}, if they are not already selected by the individual T-tests. This pruning process generates a dataset with $295$ patients and $1083$ genes.

We compare the performance of \textsf{SparseVB}, \textsf{OLAP}, and \textsf{70-genes}, the predictive model of \cite{geneexpression} based on the 70 genes that they have identified. For the comparison we use a 50-fold cross-validation procedure. In each cross-validation replication the test sample size is $30$, and the remaining $265$ samples are used for training. To avoid distortion, when sampling the test set, we require the number of $1$ and $0$ be approximately equal. As a performance metric we compute the $F_1$-score in correctly predicting the outcome variable on the test set. Table \ref{Real_data_performance_F1} shows the mean, median and standard deviation of the $F_1$-score from the 50 cross-validation replications. The results clearly show a better performance of \textsf{OLAP} both in terms of accuracy and stability.

\begin{table}[htbp]
  \centering
  \caption{F1 score of each algorithm for breast cancer data}
  \begin{tabular}{|c|c|c|c|}
    \hline
    {} & {\textbf{Sparse VB}} & {\textbf{70-identifiers}} & {\textbf{OLAP}} \\
    \hline
    \textbf{Mean} & 0.362 & 0.593 & 0.688 \\
    \hline
    \textbf{Median} & 0.509 & 0.609 & 0.688 \\
    \hline
    \textbf{Std. Error} & 0.294 & 0.109 & 0.076 \\
    \hline
  \end{tabular}
  \label{Real_data_performance_F1}
\end{table}

\subsection{Mouse PCR data application}
As another illustration, we analyze the mouse PCR data also previously analyzed in \cite{lan2006,narisetty2018skinny}.  The dataset comprises expression levels of $22575$ genes obtained from $31$ female and $29$ male mice, totaling $60$ arrays. Additionally, the physiological phenotype glycerol-3-phosphate acyltransferase (GPAT) was measured using quantitative real-time PCR. These gene expression and phenotypic data are publicly accessible on the GEO database (http://www.ncbi.nlm.nih.gov/geo; accession number GSE3330).

We want to predict whether a mouse has low GPAT levels given its genetic expression. The level of GPAT in the body is important, as reduced GPAT levels have been linked to decreased hepatic steatosis, a disease commonly associated with obesity.

Similar to \cite{narisetty2018skinny}, we derive the binary response variable based on GPAT levels, defined as $y = I(GPAT < Q(0.4))$, where $Q(0.4)$ represents the $40\%$ quantile of GPAT. And, given the extensive number of genes, we also did a similar gene pruning as in the previous example, by conducting marginal simple logistic regression of the response $y$ against individual genes. But unlike \cite{narisetty2018skinny}, and in order to make our experiment more challenging, we select 500 genes that have the most marginally significant p-values, instead of 99 in the original work. Together with the gender variable, this results in $p=501$ covariates.

For comparison, we apply \textsf{SparseVB}, \textsf{S-Gibbs} and \textsf{OLAP}. Similar to the breast cancer data experiment, we randomly select $30$ different pairs of training and test data sets. To have comparable results to \cite{narisetty2018skinny}, we report in Table \ref{Real_data_2_performance_RMSE} the square root of mean squared error (RMSE) as a measure of performance, which is $RMSE=\sqrt{\frac1n\sum_i^{n} (y_i-\hat \pi_i)^2}$, where $\hat\pi_i$ is the probability predicted by the logistic model. We also report in Table \ref{Real_data_2_performance_F1} the $F_1$ score in correctly predicting the outcome variable on the test set.  The results here again show that \textsf{OLAP} works better than \textsf{S-Gibbs} and \textsf{SparseVB}, in terms of both the accuracy and stability in these measures.

\begin{table}[htbp]
  \centering
  \caption{RMSE for the mouse PCR example}
  \begin{tabular}{|c|c|c|c|}
    \hline
    {} & {\textbf{Sparse VB}} & {\textbf{Skinny Gibbs}} & {\textbf{OLAP}} \\
    \hline
    \textbf{Mean} & 0.4825   &  0.5751&   0.3172 \\
    \hline
    \textbf{Median}  &  0.5000  &  0.5236  &  0.3080 \\
    \hline
    \textbf{Std. Error} &  0.0903  &  0.1279  &  0.0561 \\
    \hline
  \end{tabular}
  \label{Real_data_2_performance_RMSE}
\end{table}

\begin{table}[htbp]
  \centering
  \caption{F1 score for the mouse PCR example}
  \begin{tabular}{|c|c|c|c|}
    \hline
    {} & {\textbf{Sparse VB}} & {\textbf{Skinny Gibbs}} & {\textbf{OLAP}} \\
    \hline
    \textbf{Mean} & 0.4724 &  0.3786   &   0.8430\\
    \hline
    \textbf{Median} &    0.5357  &   0.3636   &  0.8889 \\
    \hline
    \textbf{Std. Error} & 0.3223  &   0.1993  &  0.1806 \\
    \hline

  \end{tabular}
  \label{Real_data_2_performance_F1}
\end{table}

\section{Concluding remarks}\label{sec:conclusion}
Variable selection is an NP-hard problem (\cite{welch:82}), and algorithms that can solve all versions in  polynomial times are unlikely to exist. Therefore, identifying instances of the problem (and corresponding algorithms) that can be solved in polynomial time is of practical importance. Our work in this paper contributes to this literature. We have developed a novel Laplace approximation algorithm that is applicable to a large class of generalized linear models (GLMs) and beyond. The resulting algorithm is fast and accurate, and under mild conditions, we have shown that it leads to a consistent variable selection methodology in the high-dimensional context. Additionally, we have shown that in many cases the mixing time of the resulting Gibbs sampler scales polynomially in $(n,p)$. The simulation results and the real data analysis illustrate the competitiveness of \textsf{OLAP} against some existing high-dimensional variable selection methods. Another advantage of \textsf{OLAP} is that the method has minimal tuning parameter, and as a result can be easily implemented in statistical software.


One important limitation of \textsf{OLAP} is the computational cost of the one-step Newton update. In the current implementation, each component update in the Gibbs sampler is performed at the cost of $n\|\delta\|_0^2$ operations to form the matrix $\tilde{\mathcal{H}}^\delta$, plus $O(\|\delta\|_0^3/3)$ operations to perform its Cholesky factorization. Finding a recursive update to these calculations will significantly improve the speed of the algorithm. Another potentially useful direction for further investigation is the extension of \textsf{OLAP} beyond GLM, to dealing for instance with hierarchical models, or latent variable models.

\appendix
\section{Proof of Theorem \ref{thm:post:contr}}\label{sec:proof:thm:post:contr}
\begin{proof}
We partition the model space $\Delta$ as
\[\Delta = \bigcup_{\delta_0\in\Cset}\;\Delta(\delta_0),\]
where 
\[\Cset\eqdef\{\delta\in\Delta:\;\delta\subseteq\delta_\star\},\;\;\mbox{ and }\;\; \Delta(\delta_0) \eqdef \left\{\delta'\in\Delta:\;\delta_0\subseteq\delta',\;\mbox{ and }\; \min(\delta',\delta_\star)=\delta_0\right\}.\]
We claim that under the conditions of the theorem the following holds true. For all $\delta_0\in\Cset$, and all $M\geq 1$,
\begin{equation}\label{claim:1}
\sum_{\delta\in\Delta(\delta_0)}\;\frac{\check\Pi(\delta\vert\D)}{\check\Pi(\delta_0\vert\D)} \leq 2\;\;\mbox{ and }\;\; \check\Pi\left(\|\delta\|_0\geq s_\star + M\vert \D\right)  \leq 2\left(\frac{1}{p^{\mathsf{u}/2}}\right)^M.
\end{equation}

\medskip
To use the claim in (\ref{claim:1}) to proof the theorem we observe that $\delta\notin\A_j$ means that either $\|\delta\|_0>s_\star + j$, or $\min(\delta,\delta_\star)$ is strictly a sub-model of $\delta_\star$. We can rewrite this statement as
\[\A_j^c =\left\{\delta\in\Delta:\;\|\delta\|_0>s_\star +j\right\}\; \cup\;\bigcup_{\delta_0\in\Cset\setminus\{\delta_\star\}}\;\left\{\delta\in\Delta(\delta_0):\;\|\delta\|_0\leq s_\star +j\right\}.\]
Therefore, using (\ref{claim:1}),
\begin{multline*}
\check\Pi(\A_j^c\vert\D) \leq \frac{2}{p^{\mathsf{u}(j+1)/2}} + \check\Pi(\delta_\star\vert\D)\sum_{\delta_0\in\Cset\setminus\{\delta_\star\}}\;\frac{\check\Pi(\delta_0\vert\D)}{\check\Pi(\delta_\star\vert\D)}\;\sum_{\delta\in\Delta(\delta_0)}\;\frac{\check\Pi(\delta\vert\D)}{\check\Pi(\delta_0\vert\D)}\\
\leq \frac{2}{p^{\mathsf{u}(j+1)/2}} + 2\sum_{\delta_0\in\Cset\setminus\{\delta_\star\}}\;\frac{\check\Pi(\delta_0\vert\D)}{\check\Pi(\delta_\star\vert\D)}.
\end{multline*}
Take $\delta_0\in\Cset$, with $\|\delta_0\|_0 = s_\star -k$, for some $k>0$. Then from (\ref{Pi:check:2}), we have
\[\frac{\check\Pi(\delta_0\vert \D)}{\check\Pi(\delta_\star\vert \D)} = p^{\mathsf{u}k}\exp\left(\bar\ell^{\delta_0}(\check\theta_{\delta_0};\D) - \bar\ell^{\delta_\star}(\check\theta_{\delta_\star};\D)\right).\]
Since the family $\{\check\theta_\delta,\;\delta\in\Delta\}$ is variable selection consistent, $\bar\ell^{\delta_0}(\check\theta^{\delta_0};\D) - \bar\ell^{\delta_\star}(\check\theta_{\delta_\star};\D)\leq -c_2k n$. It follows that
\[\check\Pi(\A_j^c\vert\D) \leq \frac{2}{p^{\mathsf{u}(j+1)/2}} + 2\sum_{k=1}^{s_\star}{s_\star\choose k}\exp\left(\mathsf{u}k\log(p) -c_2k n \right).\]
Hence for $c_2n\geq 2(\mathsf{u}+1)\log(p)$ as assumed in (\ref{cond:ss}), we obtain
\[
\check\Pi(\A_j^c\vert\D) \leq \frac{2}{p^{\mathsf{u}(j+1)/2}} + 2\sum_{k=1}^{s_\star} e^{-c_2 nk/2}\leq \frac{2}{p^{\mathsf{u}(j+1)/2}}  + 4e^{- c_2n/2},\]
which yields the stated inequality.

It remains to prove (\ref{claim:1}). Given $\delta_0\in\Cset$, $\delta\in\Delta(\delta_0)$, and setting $\|\delta\|_0 - \|\delta_0\|_0 = j$, we have
\[\frac{\check\Pi(\delta\vert \D)}{\check\Pi(\delta_0\vert \D)} = \left(\frac{1}{p^{\mathsf{u}}}\right)^{j}\exp\left(\bar\ell^\delta(\check\theta^{\delta};\D) -\bar\ell^{\delta_0}(\check\theta^{\delta_0};D)\right).\]
The variable selection consistency of $\{\check\theta_\delta,\;\delta\in\Delta\}$ yields $\bar\ell^\delta(\check\theta^{\delta};\D) -\bar\ell^{\delta_0}(\check\theta^{\delta_0};\D)\leq c_1j\log(p)$, so that,
\begin{equation}\label{proof:thm:ratio:1}
\frac{\check\Pi(\delta\vert \D)}{\check\Pi(\delta_0\vert \D)} \leq  \exp\left( -\mathsf{u}j\log(p) + c_1j\log(p) \right),\end{equation}
and we conclude that 
\begin{multline*}
\sum_{\delta\in\Delta(\delta_0)}\;\frac{\check\Pi(\delta\vert\D)}{\check\Pi(\delta_0\vert\D)} = \sum_{j\geq 0} \;\;\sum_{\delta\in\Delta(\delta_0):\;\|\delta\|_0=\|\delta_0\|_0 +j}\;\;\frac{\check\Pi(\delta\vert\D)}{\check\Pi(\delta_0\vert\D)}\\
\leq \sum_{j\geq 0} \; {p-s_0\choose j}e^{-(\mathsf{u}-c_1)j\log(p)} \leq \sum_{j\geq 0} \;e^{-(\mathsf{u}-c_1-1)j\log(p)} \leq 2, 
\end{multline*}
provided that $\mathsf{u}\geq 2+c_1$, and $p\geq 2$,
which is satisfied by taking $\mathsf{u}$ as in (\ref{cond:u}). The second part of (\ref{claim:1}) follows a similar argument. Since $\Delta = \cup_{\delta_0\in\Cset}\Delta(\delta_0)$, for any $M\geq 1$, we get
\begin{equation}\label{proof:thm:eq25:2}
\check\Pi\left(\|\delta\|_0\geq s_\star + M\vert \D\right)  = \sum_{\delta_0\in\Cset}\check\Pi(\delta_0\vert\D)\sum_{\delta\in\Delta(\delta_0):\;\|\delta\|_0\geq s_\star +M}\; \frac{\check\Pi(\delta\vert \D)}{\check\Pi(\delta_0\vert \D)}.\end{equation}
Fix $\delta_0\in\Cset$, and set $\|\delta_0\|_0=s_0$. For $k\geq s_\star$, we have
\begin{multline*}
\sum_{\delta\in\Delta(\delta_0):\;\|\delta\|_0\geq k}\; \frac{\check\Pi(\delta\vert \D)}{\check\Pi(\delta_0\vert \D)} = \sum_{j\geq k-s_0}\;\;\sum_{\delta\in\Delta(\delta_0):\;\|\delta\|_0 = s_0 + j}\; \frac{\check\Pi(\delta\vert \D)}{\check\Pi(\delta_0\vert \D)}\\
\leq \sum_{j\geq k-s_0}\;{p-s_0 \choose j} \exp\left(-\mathsf{u}j\log(p)+c_1j\log(p) \right)\\
\leq \sum_{j\geq k-s_0}\;\exp\left(-\left(\mathsf{u}-c_1 -1\right)j\log(p)\right).
\end{multline*}
Hence for $\mathsf{u}/2\geq (c_1 +1)$ as assumed in (\ref{cond:ss}) we obtain for $k> s_0$, that
\[\sum_{\delta\in\Delta(\delta_0):\;\|\delta\|_0\geq k}\; \frac{\check\Pi(\delta\vert \D)}{\check\Pi(\delta_0\vert \D)}\leq \sum_{j\geq k-s_0}\;\left(\frac{1}{p^{\mathsf{u}/2}}\right)^j \leq 2 \left(\frac{1}{p^{\mathsf{u}/2}}\right)^{k-s_0}.\]
Hence, the last display together with (\ref{proof:thm:eq25:2}) implies that  for any $M\geq 1$, we have
\[\check\Pi\left(\|\delta\|_0\geq s_\star + M\vert \D\right) \leq  2\sum_{\delta_0\in\Cset}\check\Pi(\delta_0\vert\D) \left(\frac{1}{p^{\mathsf{u}/2}}\right)^{M}\leq 2\left(\frac{1}{p^{\mathsf{u}/2}}\right)^M.\]

\end{proof}

\section{Proof of Theorem \ref{lem:post:consis}}\label{sec:proof:post:consis}

  \begin{proof}
    Let $q^\delta(\cdot;\D)$ denote the quadratic approximation of $\bar\ell^\delta(\cdot;\D)$ around $\tilde\theta^\delta$. Specifically, for $u\in\rset^{\|\delta\|_0}$,
    \[q^\delta(u;\D) \eqdef \bar\ell^\delta(\tilde\theta^{\delta};\D) + \pscal{\tilde\G_\delta}{u- \tilde\theta^{\delta}} -\frac{1}{2}(u- \tilde\theta^{\delta})' \tilde \H_\delta (u- \tilde\theta^{\delta}),\]
    and
    \[\eta^\delta(u;\D)\eqdef \bar\ell^\delta(u;\D) - q^\delta(u;\D).\]
    An important step in the argument is to upper bound the remainder $|\eta^\delta(u;\D)|$ for $u$ close to $\theta_\star$. This can be easily done in the setting where the third derivative of $\psi$ is uniformly bounded. However this will leave out the Poisson model. Following (\cite{bach10}), we use the self-concordance assumption  in H\ref{H:basic}-(3) to handle a larger class of link functions $\psi$. Specifically, under assumption H\ref{H:basic}, the following holds. We can find a constant $C$ that depends only on $c_3$, $\bar\kappa$ and $b$ in H\ref{H:basic} such that for all $\delta\subseteq\delta_\star$,
    \begin{equation}\label{claim:thm:behav:ll}
      \left|\eta^\delta(u;\D)\right| \leq C ne^{c_3b\|u-\tilde\theta^\delta\|_1} \times \|u-\tilde\theta^\delta\|_1 \times \|u-\tilde\theta^\delta\|_2^2,\;\;\;u\in\rset^{\|\delta\|_0}.
    \end{equation}
    This claim is proved below. Let us assume for the time being that (\ref{claim:thm:behav:ll}) holds, and fix $\delta,\delta_0\in\Delta$. Since $\hat\theta^{\delta}$ maximizes $\bar\ell^\delta(\cdot;\D)$, we have 
    \begin{equation}\label{ll:lb1}
      \bar\ell^\delta(\check\theta^{\delta};\D) \leq \bar\ell^\delta(\hat\theta^{\delta};\D).\end{equation}
    But since $\check\theta^{\delta}$ maximizes $q^\delta(\cdot;\D)$, we have
    \begin{multline}\label{ll:ub1}
      \bar\ell^\delta(\check\theta^{\delta};\D) = q^\delta(\check\theta^{\delta};\D) + \eta^\delta(\check\theta^{\delta};\D) \geq  q^\delta(\hat\theta^{\delta};\D)  + \eta^\delta(\check\theta^{\delta};\D)\\
      = \bar\ell^\delta(\hat\theta^{\delta};\D)  + \eta^\delta(\check\theta^{\delta};\D) - \eta^\delta(\hat\theta^{\delta};\D).
    \end{multline}
    We combine (\ref{ll:lb1}) and (\ref{ll:ub1}) to conclude that
    \begin{equation}\label{ll:final:b}
      \bar\ell^\delta(\check\theta^{\delta};\D) - \bar\ell^{\delta_0}(\check\theta^{\delta_0};\D) \leq \left[\bar\ell^\delta(\hat\theta^{\delta};\D) - \bar\ell^{\delta_0}(\hat\theta^{\delta_0};\D)\right] + |\eta^{\delta_0}(\check\theta^{\delta_0};\D)|  + |\eta^{\delta_0}(\hat\theta^{\delta_0};\D)|.
    \end{equation}
    Hence, proving the theorem boils down to showing that there exists a constant $C<\infty$ such that for all $\delta_0\subseteq\delta_\star$,
    \begin{equation}\label{unif:bound}
      |\eta^{\delta_0}(\check\theta^{\delta_0};\D)|  + |\eta^{\delta_0}(\hat\theta^{\delta_0};\D)| \leq C\log(p).\end{equation}
    
    Indeed, if the MLE family $\{\hat\theta^\delta,\;\delta\in\Delta\}$ is variable selection consistent with constant $c_1,c_2$, say, then  for $\delta_0\subseteq\delta$, and $\min(\delta,\delta_\star)=\delta_0$, we can conclude from the above argument that
    \[\bar\ell^\delta(\check\theta^{\delta};\D) - \bar\ell^{\delta_0}(\check\theta^{\delta_0};\D) \leq c_1 \left(\|\delta\|_0 -\|\delta_0\|_0\right)\log(p) + C\log(p)\leq (c_1 + C) \left(\|\delta\|_0 -\|\delta_0\|_0\right)\log(p).\]
    Similarly, for $\delta_0\subseteq \delta_\star$, applying again (\ref{ll:final:b}) with $\delta_0\leftarrow \delta_\star$, and $\delta\leftarrow \delta_0$, we get
    \[\bar\ell^{\delta_\star}(\check\theta^{\delta_\star};\D) - \bar\ell^{\delta_0}(\check\theta^{\delta_0};\D) \geq c_2 \left(\|\delta_\star\|_0 -\|\delta_0\|_0 \right)n - C\log(p) \geq \frac{c_2}{2} \left(\|\delta_\star\|_0 -\|\delta_0\|_0 \right)n,\]
    for $c_2n\geq 2C\log(p)$. This establishes the theorem. It remains to prove (\ref{unif:bound}) and (\ref{claim:thm:behav:ll}). 
    \medskip

    \paragraph{\underline{\texttt{Proof of (\ref{claim:thm:behav:ll})}}} 
    Given, $u,h\in\rset$, with $h\neq 0$, define 
    \[g(t) = \psi(u+ t h),\;\;t\in [0,1].\]
    H\ref{H:basic}-(3) implies that $|g^{'''}(t)| \leq c_3 |h|g^{''}(t)$, which in turn implies that for all $t\in [0,1]$,
    \[-c_3|h|\leq \frac{\rmd \log g^{''}(t)}{\rmd t}\leq c_3 |h|.\] 
    Integrating these two inequalities thrice we obtain for all $t\in [0,1]$,
    \begin{equation}\label{self:conc:1}
      g^{''}(0)e^{-c_3|h|t} \leq g^{''}(t) \leq g^{''}(0)e^{c_3|h|t},
    \end{equation}
    \begin{equation}\label{self:conc:2}
      g^{''}(0)\times \frac{1  -e^{-c_3|h|t}}{c_3|h|} \leq g^{'}(t) - g^{'}(0) \leq g^{''}(0)\times \frac{e^{c_3|h|t}  -1 }{c_3|h|},
    \end{equation}
    and
    \begin{equation*}
      -g^{''}(0)\times \frac{1 - c_3|h|t -e^{-c_3|h|t}}{c_3^2h^2} \leq g(t) - g(0) -tg^{'}(0) \leq g^{''}(0)\times \frac{e^{c_3|h|t} -c_3|h|t -1 }{c_3^2h^2}.
    \end{equation*}
    Subtracting $g^{''}(0)/t^2$ from all sides we get
    \begin{multline*}
      -g^{''}(0)\times \frac{1 - c_3|h|t + (c_3|h|t)^2/2-e^{-c_3|h|t}}{c_3^2h^2} \leq g(t) - g(0) -tg^{'}(0) -\frac{t^2}{2}g^{''}(0) \\
      \leq g^{''}(0)\times \frac{e^{c_3|h|t} -(c_3|h|t)^2/2 -c_3|h|t -1 }{c_3^2h^2}.
    \end{multline*}
    We apply this with $t=1$, and we use a Taylor of $e^{x}$ and $e^{-x}$ to the third order to conclude that
    \begin{equation}\label{self:conc:3}
      \left|\psi(u+h) - \psi(u) -\psi'(u)h -\frac{h^2}{2}\psi^{''}(u)\right|\leq \frac{c_3}{6}|h|^3 e^{c_3|h|}\psi^{''}(u).\end{equation}
    Since for all $u\in\rset^{\delta\|_0}$,
    \begin{multline*}\eta^\delta(u;\D) = \sum_{i=1}^n \left[\psi\left(\pscal{u}{{\bf x}_i}\right) - \psi\left(\pscal{\tilde\theta^\delta}{{\bf x}_i}\right) - \psi'\left(\pscal{\tilde\theta^\delta}{{\bf x}_i}\right)\pscal{u-\tilde\theta^\delta}{{\bf x}_i}\right.\\
      -\left. \frac{1}{2}\psi^{''}\left(\pscal{\tilde\theta^\delta}{{\bf x}_i}\right)\pscal{u-\tilde\theta^\delta}{{\bf x}_i}^2\right].\end{multline*}
    We conclude with (\ref{self:conc:3}) that
    \begin{multline*}
      \left|\eta^\delta(u;\D)\right| \leq \frac{c_3b}{6}\|u-\tilde\theta^\delta\|_1e^{c_3b\|u-\tilde\theta^\delta\|_1}\sum_{i=1}^n\psi^{''}\left(\pscal{\tilde\theta^\delta}{{\bf x}_i}\right)\pscal{u-\tilde\theta^\delta}{{\bf x}_i}^2\\
      \leq \frac{c_3b\bar\kappa n }{6}e^{c_3b\|u-\tilde\theta^\delta\|_1} \|u-\tilde\theta^\delta\|_1 \times \|u-\tilde\theta^\delta\|_2^2,
    \end{multline*}
    which is the claim (\ref{claim:thm:behav:ll}).
    
    \paragraph{\underline{\texttt{Proof of (\ref{unif:bound})}}} 
    Fix $\delta_0\subseteq\delta_\star$. Since $\hat\theta^{\delta_0} -\tilde\theta^{\delta_0} = \hat\theta^{\delta_0} -\theta_\star^{\delta_0} +\theta_\star^{\delta_0} -  \tilde\theta^{\delta_0}$, and $\delta_0\subseteq\delta_\star$, we get, using (\ref{rate:estimators}),
    \[\|\hat\theta^{\delta_0} -\tilde\theta^{\delta_0}\|_2 \leq \|\hat\theta^{\delta_\star} - \theta_\star\|_2 + \|\wtilde\theta - \theta_\star\|_2 \leq C\sqrt{\frac{s_\star\log(p)}{n}}.\]
    Hence
    \[\|\hat\theta^{\delta_0} -\tilde\theta^{\delta_0}\|_1 \leq C\sqrt{\frac{s_\star^2\log(p)}{n}}\leq C',\]
    for $n\geq s_\star^2\log(p)$. As a result, using (\ref{claim:thm:behav:ll}), we conclude that
    \[|\eta^{\delta_0}(\check\theta^{\delta_0};\D)| \leq Cn \sqrt{\frac{s_\star^2\log(p)}{n}} \times \frac{s_\star\log(p)}{n}\leq C'\log(p),\]
    under the sample size condition (\ref{cond:ss:2}). By the definition of $\check\theta^\delta$ in (\ref{theta:check}),
    \[\check\theta^{\delta_0} -\tilde\theta^{\delta_0} = (\tilde\H^{\delta_0})^{-1}\tilde \G^{\delta_0},\]
    and under H\ref{H:basic}-(4), the smallest eigenvalue of $\tilde\H^{\delta_0}$ is at least $\underline{\kappa} n$. Therefore, and using the expression of the gradient $\G^\delta$, we have
    \[\|\check\theta^{\delta_0} -\tilde\theta^{\delta_0}\|_2 \leq \frac{1}{\underline{\kappa} n} \left(\left\|\left[\G_\star\right]_{\delta_0}\right\|_2 +\left\|\tilde \G^{\delta_0} - \left[\G_\star\right]_{\delta_0}\right\|_2\right),\]
    where we define $\G_\star\eqdef\nabla\bar\ell^\delta(u;\D)\vert_{u=\theta_\star}$. 
    According to H\ref{H:basic}-(2), $\|\G_\star\|_\infty\leq M_0\sqrt{n\log(p)}$. Hence 
    \[\frac{\left\|\left[\G_\star\right]_{\delta_0}\right\|_2}{\underline{\kappa} n} \leq C \sqrt{\frac{\|\delta_0\|_0 \log(p)}{n}} \leq C \sqrt{\frac{s_\star \log(p)}{n}}.\]
    On the other hand, a comparison between the $j$-th component of $\tilde \G^{\delta_0}$  and $\left[\G_\star\right]_{\delta_0}$ gives
    \begin{multline*}
      (\tilde \G^{\delta_0})_j - \left[\G_\star\right]_j = -\sum_{i=1}^n \left(\psi'(\pscal{\tilde\theta^{\delta_0}}{{\bf x}_i}) - \psi'(\pscal{\theta_\star}{{\bf x}_i})\right)[{\bf x}_i]_{j} \\
      = -\sum_{k:\;\delta_{\star,k}=1}\; (\tilde\theta_k - \theta_{\star,k}) \sum_{i=1}^n \psi{''}(\pscal{\vartheta}{{\bf x}_i}){\bf x}_{ij}{\bf x}_{ik},\end{multline*}
    for some $\vartheta$ on the segment between $\tilde\theta^{\delta_0}$ and $\theta_\star$. We can then appeal to H\ref{H:basic}-(4-5) to conclude that
    \[\left|(\tilde \G^{\delta_0})_j - \left[\G_\star\right]_j\right| \leq \bar\kappa n |\tilde\theta_j - \theta_{\star,j}| + M \|\wtilde \theta - \theta_\star\|_1 \sqrt{n\log(p)} \leq \bar\kappa n |\tilde\theta_j - \theta_{\star,j}| + C s_\star \log(p).\]
    We deduce that
    \[\frac{1}{\underline{\kappa} n} \left\|\tilde \G^{\delta_0} - \left[\G_\star\right]_{\delta_0}\right\|_2 \leq \frac{Cn}{\underline{\kappa}n} \sqrt{\frac{s_\star \log(p)}{n}} + \frac{C s_\star^{3/2}\log(p)}{n} \leq C' \sqrt{\frac{s_\star \log(p)}{n}}.\]
    Therefore, (\ref{claim:thm:behav:ll}), and a similar argument yields
    \[|\eta^{\delta_0}(\hat\theta^{\delta_0};\D)| \leq Cn \sqrt{\frac{s_\star^2\log(p)}{n}} \times \frac{s_\star\log(p)}{n}\leq C'\log(p),\]
    which concludes the proof of (\ref{unif:bound}).
    
  \end{proof}

\section{Proof of Theorem \ref{thm:mix}}\label{sec:proof:thm:mix}
When $J=1$, Algorithm \ref{algo:1} is also known as a random scan Gibbs sampler. It generates a reversible and positive Markov chain with invariant distribution $\check\Pi$ that fits in the framework presented in Section \ref{sec:mix:mc}. We denote $K$ its transition kernel.  We prove the result by applying Lemma \ref{lem:key:sprofile} and Proposition \ref{lb:cp}. The initial distribution of Algorithm \ref{algo:1} has a density $f_0$ with respect  to $\check\Pi$ given by: $f_0(\delta) = 1/\check\Pi(\delta^{(0)}\vert \D)$, if $\delta=\delta^{(0)}$, and zero everywhere else. Hence
\[\|f_0\|_\infty = \frac{1}{\check\Pi(\delta^{(0)}\vert \D)}\leq p^\alpha.\]
Fix $\Delta_0\subseteq\Delta$ to be determined later such that $\check\Pi(\Delta_0\vert\D) \geq 15/16$, and set $\zeta = 8(1-\check\Pi(\Delta_0\vert\D))$. If we can build canonical paths on $\Delta_0$ such that H\ref{H2} holds, then by Proposition \ref{lb:cp}, we get $\lambda_\zeta(K) \geq 1/\mathsf{m}(\Delta_0)$. And since $\log(1-x)\leq -x$ for all $x\in (0,1]$, it follows that for  $N\geq 1$ large enough such that 
\[N \geq 2\mathsf{m}(\Delta_0) \log\left(\frac{\|f_0\|_\infty^2}{\zeta_0^2}\right),\]
we have 
\[\left(1-\frac{\lambda_\zeta(K)}{2}\right)^N = \exp\left(N\log\left(1-\frac{\lambda_\zeta(K)}{2}\right)\right) \leq \exp\left(-\frac{N}{2\mathsf{m}(\Delta_0)}\right) \leq \frac{\zeta_0^2}{\|f_0\|_\infty^2}.\] 
Therefore by (\ref{lem:key:eq:pos}) of Lemma \ref{lem:key:sprofile}, 
\[\|K^N(\delta^{(0)},\cdot) -\check\Pi\|_\tv^2 \leq \max\left(\zeta\|f_0\|_\infty^2,\left(1-\frac{\lambda_\zeta(K)}{2}\right)^N\|f_0\|_\infty^2\right)\leq \max\left(\zeta\|f_0\|_\infty^2,\zeta_0^2\right).\]
We apply this result with the choice $\Delta_0=\{\delta\in\Delta:\;\|\delta\|_0\leq s_\star +J_0\}$. By (\ref{claim:1}), 
\[\zeta \|f_0\|_\infty^2 = 8\|f_0\|_\infty^2 (1-\check\Pi(\Delta_0\vert\D) \leq \frac{16\|f_0\|_\infty^2}{p^{\mathsf{u}(J_0+1)/2}}\leq \frac{16}{p^{\mathsf{u}/2}}\frac{\|f_0\|_\infty^2}{p^{\mathsf{u}J_0/2}} \leq \frac{16}{p^{\mathsf{u}/2}},\]
by taking  $J_0 = 4\alpha/\mathsf{u}$
Hence, for
\[N \geq 4\mathsf{m}(\Delta_0) \left(\log(1/\zeta_0) + \alpha\log(p)\right),\]
\begin{equation}\label{proof:mix:eq:fin}
\|K^N(\delta^{(0)},\cdot) -\check\Pi\|_\tv \leq \max\left(\frac{4}{p^{\mathsf{u}/4}},\zeta_0\right).\end{equation}
We now use the canonical path argument to upper bound the term $\mathsf{m}(\Delta_0)$. First, we build a graph on $\Delta_0$ by putting an edge between $\delta$ and $\delta'$ if  $\|\delta-\delta'\|_0 =1$. Clear H\ref{H2} holds. We build the canonical paths by removing or adding variables as follow. For any $\delta$ we first build a path between $\delta$ and $\delta_\star$.  First, we set to zero (1 term at the time and in their decreasing index order) the components $j$ of $\delta$ for which $\delta_j=1$ and $\delta_{\star j} =0$.  We then reach $\underline{\delta}\eqdef \min(\delta,\delta_\star)$. This corresponds to removing one-by-one non-relevant variables from the model $\delta$. Then we set to one ( one at the time, and in their increeasing index order) the component $j$ of $\underline{\delta}$ for which $\underline{\delta}_j=0$, and $\delta_{\star j}=1$. We then reach $\delta_\star$. This corresponds to adding one-by-one relevant variables not already contained in $\underline{\delta}$.

Given  two arbitrary points $\delta$ and $\delta'$, we build the canonical path between them as follows. Let $\vartheta$ denote the point where the canonical path from $\delta$ to $\delta_\star$ and the canonical path from  $\delta'$  to $\delta_\star$ meet for the first time. The canonical path from $\delta$ to $\delta'$ is obtained by following the canonical path from $\delta$ to $\delta_\star$ until $\vartheta$ is reached, then we follow the canonical path from $\delta'$ to $\delta_\star$ in reverse direction starting from $\vartheta$ until $\delta'$. We have
\begin{equation}\label{eq:path:len}
|\gamma_{\delta,\delta'}|\leq 2\left(s_\star + J_0\right).\end{equation}
Let $x,y$ denote generic elements of $\Delta_0$. Fix an edge  $e=(\delta',\delta)$.  Let $\Lambda(e) $ be the set of all elements of $\Delta_0$  whose path to $\delta_\star$ go through $e$.  If $\gamma_{xy}\ni e$ then $x\in\Lambda(e)$ or $y\in\Lambda(e)$ (but not both). Therefore,
\begin{multline*}
\sum_{\gamma_{xy}:\;\gamma_{xy}\ni e}\frac{|\gamma_{xy}|\check\Pi(x\vert \D)\check\Pi(y\vert \D)}{\check\Pi(\delta'\vert \D)K(\delta',\delta)}  \leq 2\sum_{x\in\Lambda(e)}\sum_{y\in\Xset_0}\frac{|\gamma_{xy}|\check\Pi(x\vert \D)\check\Pi(y\vert \D)}{\check\Pi(\delta'\vert \D)K(\delta',\delta)} \\
\leq 2\sum_{x\in\Lambda(e)}\frac{|\gamma_{xy}|\check\Pi(x\vert \D)}{\check\Pi(\delta'\vert \D)K(\delta',\delta)}.\end{multline*}
Using the last display with (\ref{eq:path:len}), we  conclude that
\[\mathsf{m}(\Delta_0) \leq 4\left(s_\star + J_0\right)\;\max_{(\delta',\delta)\in\e_0}\frac{1}{K(\delta',\delta)}\sum_{x\in\Lambda(e)}\frac{\Pi(x\vert \D)}{\Pi(\delta'\vert \D)}.\]

\paragraph{\underline{\texttt{Case 1}}:\; $\delta=\delta^{(j,0)}$, and $\delta'=\delta^{(j,1)}$, for some $j$ such that $\delta_{\star j}=0$} In this case, 
\begin{multline*}K(\delta', \delta) = \frac{1}{p} \left(1- q_j(\delta')\right)\\
= \frac{1}{p}\left(1 +\exp(-\mathsf{u}\log(p) + \bar\ell^{\delta^{(j,1)}}(\check\theta_{\delta^{(j,1)}};\D) - \bar\ell^{\delta^{(j,0)}}(\check\theta_{\delta^{(j,0)}};\D))\right)^{-1} \\
\\ \geq \frac{1}{p}\left(1 +\frac{1}{p^{\mathsf{u}/2}}\right)^{-1}\geq\frac{1}{p}\left(1 -\frac{1}{p^{\mathsf{u}/2}}\right)\geq \frac{1}{2p},
\end{multline*}
where we use (\ref{msc:eq1}) to claim that $\bar\ell^{\delta^{(j,1)}}(\check\theta_{\delta^{(j,1)}};\D) - \bar\ell^{\delta^{(j,0)}}(\check\theta_{\delta^{(j,0)}};\D) \leq c_1\log(p)$, and  for $\mathsf{u}>2c_1$, and $p^{\mathsf{u}/2}\geq 2$.  Furthermore, in this case, note that $\delta'\supset\delta$, and  differs from $\delta$ only at  some $j$ such that $\delta_{\star j}=0$. Therefore, $\Lambda(e)=\Delta(\delta')$, that is the elements $\vartheta\in\Delta_{s_\star +J_0}$ contains $\delta'$ and differs from $\delta'$ only at components at which $\delta_{\star k}=0$. Therefore, as seen in the proof of Theorem \ref{thm:post:contr},
\[
\sum_{x\in\Lambda(e)}\frac{\check\Pi(x\vert \D)}{\check\Pi(\delta'\vert \D)}  \leq 2.\]
and it follows that
\[\frac{1}{K(\delta',\delta)}\sum_{x\in\Lambda(e)}\frac{\Pi(x\vert \D)}{\Pi(\delta'\vert \D)} \leq 4p.\]

\paragraph{\underline{\texttt{Case 2}}:\;:\; $\delta'\subset\delta_\star$, $\delta'=\delta^{(j,0)}$, and $\delta=\delta^{(j,1)}$, for some $j$ such that $\delta_{\star j}=1$}In this case, using (\ref{msc:eq2}),
\begin{multline*}
K(\delta', \delta) = \frac{q_j(\delta')}{p} = \frac{1}{p}\left(1 +\exp(\mathsf{u}\log(p) - \bar\ell^{\delta^{(j,1)}}(\check\theta_{\delta^{(j,1)}};\D) + \bar\ell^{\delta^{(j,0)}}(\check\theta_{\delta^{(j,0)}};\D))\right)^{-1} \\
\\ \geq \frac{1}{p}\left(1 +\exp\left(\mathsf{u}\log(p) - c_2 n\right)\right)^{-1}\geq\frac{1}{p}\left(1 + \frac{1}{e^{c_2n/2}}\right)\geq \frac{1}{2p},
\end{multline*}
for $c_2n \geq 2\mathsf{u}\log(p)$. In this configuration, we see that
\[\Lambda(e)=\bigcup_{\vartheta\subseteq\delta'}\Delta(\vartheta),\]
therefore,
\[
\sum_{x\in\Lambda(e)}\frac{\check\Pi(x\vert \D)}{\check\Pi(\delta'\vert \D)} = \sum_{\vartheta\subseteq\delta'} \frac{\check\Pi(\vartheta\vert \D)}{\check\Pi(\delta'\vert \D)}\sum_{x\in\Delta(\vartheta)} \frac{\check\Pi(x\vert \D)}{\check\Pi(\vartheta\vert \D)}\leq 2 \sum_{\vartheta\subseteq\delta'} \frac{\check\Pi(\vartheta\vert \D)}{\check\Pi(\delta'\vert \D)}.\]
We proceed similarly as in the proof of Theorem \ref{thm:post:contr} to show that
\[\sum_{\vartheta\subseteq\delta'} \frac{\check\Pi(\vartheta\vert \D)}{\check\Pi(\delta'\vert \D)} \leq 1 + 4 e^{-c_2n/2} \leq 2,\]
for $c_2n\geq 2(\mathsf{u}+1)\log(p)$. We conclude that for some absolute constant $C$. 
\[\mathsf{m}(\Delta_0) \leq C (s_\star +J_0)p.\]
We combine this bound with (\ref{proof:mix:eq:fin}) and $J_0 = 4\alpha/\mathsf{u}$, to conclude that
\[\|K^N(\delta^{(0)},\cdot) -\check\Pi\|_\tv \leq \max\left(\frac{4}{p^{\mathsf{u}/4}},\zeta_0\right),\;\\
\mbox{ for }\; N\geq C\left(s_\star + \frac{4\alpha}{\mathsf{u}}\right)\left(\log(1/\zeta_0) + \alpha\log(p)\right)p,\]
for some absolute constant $C$. This concludes the proof.

\section{Proof of the results of Section \ref{sec:mix:mc}}
\subsection{Proof of Lemma \ref{lem:key:sprofile}}\label{sec:proof:lem:key}
By reversibility, for all $j\geq 1$, the density of $\pi_0 K^j$ with respect to $\pi$ is $f_j=K^j f_0$. Therefore
\[\|\pi_0 K^j-\pi\|_\tv =\int|f_j(x) -1|\pi(\rmd x) \leq \sqrt{\textsf{Var}_\pi(f_j)}.\]
Take $f\in L^2(\pi)$. Since $\pi(f) = \pi(Kf)$, and $\textsf{Var}_\pi(f) = \pscal{f}{f}_\pi -\pi(f)^2$, we have
\[
\textsf{Var}_\pi(Kf) - \textsf{Var}_\pi(f) = \pscal{Kf}{Kf}_\pi - \pscal{f}{f}_\pi =  \pscal{f}{K_\star K f}_\pi - \pscal{f}{f}_\pi =-\e_{K^\star K}(f,f).\]
If $K$ is positive (it is therefore $\pi$- reversible), then $K^\star K = K^2$, and $K$ admits a square root: there exists a bounded $\pi$-reversible operator $S$ such that $S^2=K$, and $S$ commutes with $K$. Furthermore, with $\mathbb{I}$ denoting the identity operator, $\mathbb{I}-K$ is also a positive operator: $\mathbb{I}-K$ is clearly $\pi$-reversible, and $\pscal{f}{(\mathbb{I}-K)f}_\pi = \|f\|_2^2 - \pscal{f}{Kf}_\pi \geq 0$, using the fact that the operator norm of $K$ is smaller of equal to one. Hence 
\[ \pscal{f}{Kf}_\pi - \pscal{f}{K^\star K f}_\pi = \pscal{f}{(K-K^2)f}_\pi = \pscal{f}{S(\mathbb{I}-K)Sf}_\pi = \pscal{Sf}{(\mathbb{I}-K)Sf}_\pi  \geq 0.\] Therefore when $K$ is positive we can have
\[
\textsf{Var}_\pi(Kf) \leq \textsf{Var}_\pi(f) -\e_{K}(f,f).\]
In particular, for all $j\geq 1$,
\begin{equation}\label{keylem:eq2}
\textsf{Var}_\pi(f_{j}) \leq  \textsf{Var}_\pi(f_{j-1})  -\e_{K}(f_{j-1},f_{j-1}).
\end{equation}
First, we observe that the last display implies that $\{\textsf{Var}_\pi(f_k),\;k\geq 0\}$ is non-increasing. Fix $j\geq 1$, and suppose now that $\textsf{Var}_\pi(f_j)> \zeta \|f_0\|_\infty^2$. Since the sequence $\{\textsf{Var}_\pi(f_k),\;k\geq 0\}$ is non-increasing, for all $0\leq i\leq j$, $\textsf{Var}_\pi(f_i)> \zeta  \|f_0\|_\infty^2$. Note also that $\|f_k\|_\infty\leq \|f_0\|_\infty$ for all $k\geq 0$. Therefore, for all $1\leq i\leq j$,
\begin{eqnarray*}
\textsf{Var}_\pi(f_{i}) & \leq &  \textsf{Var}_\pi(f_{i-1})  -\e_{K}\left(\frac{f_{i-1}}{\|f_0\|_\infty},\frac{f_{i-1}}{\|f_0\|_\infty}\right)\|f_0\|_\infty^2\\
& \leq & \textsf{Var}_\pi(f_{i-1}) - \|f_0\|_\infty^2 \lambda_\zeta(K)\left(\textsf{Var}_\pi\left(\frac{f_{i-1}}{\|f_0\|_\infty}\right)  - \frac{\zeta}{2}\right),\\
& \leq & \textsf{Var}_\pi(f_{i-1}) -\frac{\lambda_\zeta(K)}{2}\textsf{Var}_\pi(f_{i-1})\\
& \leq  & \left(1 - \frac{\lambda_\zeta(K)}{2}\right)^i\textsf{Var}_\pi(f_{0}).
\end{eqnarray*}
We conclude that for all $j\geq 1$,
\[\textsf{Var}_\pi(f_{j}) \leq \max\left[ \zeta\|f_0\|_\infty^2, \left(1 - \frac{\lambda_\zeta(K)}{2}\right)^j\textsf{Var}_\pi(f_{0})\right].\]
This ends the proof.

\subsection{Proof of Theorem \ref{lem:cheeger} }\label{sec:proof:lem:cheeger}
The proof follows the same argument in the proof of Cheeger's inequality due to \cite{lawler:sokal:88}. Fix a measurable set $A$ such that $\epsilon<\pi(A)<1-\epsilon$, and set
\[f_A(\cdot) = \pi(A)\textbf{1}_{A^c}(\cdot) - \pi(A^c) \textbf{1}_A(\cdot).\]
Clearly $\pi(f_A)=0$, and we check that $(I-K)f_A(\cdot) = \textbf{1}_{A^c}(\cdot) - K(\cdot,A^c)$, and 
\[\e_K(f_A,f_A) = \pscal{f_A}{(I-K)f_A}_\pi = \int_A \pi(\rmd x)K(x,A^c).\]
Also we have
\[\textsf{Var}_\pi(f_A) = \pi(A)\pi(A^c) = \pi(A)(1-\pi(A)).\] 
We check that for  all $\epsilon\in [0,1/2)$,
\[\frac{1 -\sqrt{1-2\epsilon}}{2}<\epsilon < 1-\epsilon < \frac{1+\sqrt{1-2\epsilon}}{2},\]
which implies that for all $\epsilon \leq x\leq 1-\epsilon$, $x(1-x)>\epsilon/2$. We use this to conclude that $\textsf{Var}_\pi(f_A) =\pi(A)(1-\pi(A)) >\epsilon/2$.  Furthermore,  
\begin{multline*}
(\pi(A)-\epsilon)(\pi(A^c)-\epsilon)= \pi(A)\pi(A^c) -\epsilon +\epsilon^2 \leq \pi(A)\pi(A^c) -\frac{\epsilon}{2} \leq  \textsf{Var}_\pi(f_A) -\frac{\epsilon}{4}.\end{multline*}
It follows that
\[\frac{\int_A\pi(\rmd x)K(x,A^c)}{(\pi(A)-\epsilon)(\pi(A^c)-\epsilon)} \geq \frac{\e_K(f_A,f_A)}{\textsf{Var}_\pi(f_A) -\frac{\epsilon}{4}} \geq  \lambda_{\epsilon/2}(K).\]
We use this to conclude as claimed that
\[\lambda_{\epsilon/2}(K) \leq \Phi_\epsilon(K).\]

Take $f\in L^2(\pi)$ with $\|f\|_\infty\leq  1$, and  $\textsf{Var}_\pi(f) >\epsilon$. We aim to lower bound the term $\e_K(f,f)/(\textsf{Var}_\pi(f) - \epsilon/2)$. Given $c\in\rset$, let $f_c\eqdef f-c$, and let $\bar K(\rmd x,\rmd y) \eqdef \pi(\rmd x) K(x,\rmd y)$. By the Cauchy-Schwarz inequality 
\begin{multline*}
\left(\int \bar K(\rmd x,\rmd y) \abs{f_c^2(y )- f_c^2(x)}\right)^2 \\
\leq \int \bar K(\rmd x,\rmd y) (f_c(y )- f_c(x))^2 \times  \int \bar K(\rmd x,\rmd y) (f_c(y ) + f_c(x))^2 \\
\leq 4\pi(f_c^2) \int \bar K(\rmd x,\rmd y) (f(y )- f(x))^2.
\end{multline*}
We deduce that
\[\sqrt{\e_K(f,f)} \geq \frac{\int \bar K(\rmd x,\rmd y) \abs{f_c^2(y )- f_c^2(x)}}{\sqrt{8\pi(f_c^2)}}.\]
By reversibility,
\[\int \bar K(\rmd x,\rmd y) \abs{f_c^2(y )- f_c^2(x)} =2 \int_{\{(x,y):\; f_c^2(y)>f_c^2(x)\}} \bar K(\rmd x,\rmd y)\left(f_c^2(y) - f_c^2(x)\right).\]
For $\alpha\geq 0$, we set $A_\alpha \eqdef\{x:\; f_c^2(x)>\alpha\}$, and $\bar A_\alpha \eqdef\{x:\; f_c^2(x)\leq \alpha\}$ the complement of $A_\alpha$. Let 
\[I_c \eqdef \left[\texttt{ess-inf}(f_c^2),\; 
\texttt{ess-sup}(f_c^2)\right],\]
We note that $\pi(A_\alpha) =0$ for $\alpha>\texttt{ess-sup}(f_c^2)$, and for $\alpha<  \texttt{ess-inf}(f_c^2)$, $\pi(\bar A_\alpha)=0$. Hence in both cases $\int_{A_\alpha}\pi(\rmd x)
K(x,\bar A_\alpha) = 0$. We also observe that for all $x,y\in\Xset$,
\[\int \textbf{1}_{A_\alpha}(x) \textbf{1}_{\bar A_\alpha}(y)\rmd \alpha = (f_c^2(x) - f_c^2(y)) \textbf{1}_{\{f_c^2(x) > f_c^2(y)\}}.\]
Using this and Fubini's theorem,
\begin{multline*}
\int \bar K(\rmd x,\rmd y) \abs{f_c^2(y )- f_c^2(x)}  = 2\int\rmd \alpha \int \bar K(\rmd x,\rmd y) \textbf{1}_{A_\alpha}(x) \textbf{1}_{\bar A_\alpha}(y)\\
= 2\int_{I_c}\rmd \alpha \int_{A_\alpha}\pi(\rmd x)
K(x,\bar A_\alpha)\\
\geq 2\Phi_{\epsilon'}(K) \int_{I_c} \left(\pi(A_\alpha)-\epsilon'\right)\left(\pi(\bar A_\alpha)-\epsilon'\right)\rmd \alpha \\
= 2\Phi_{\epsilon'}(K) \int_{I_c}\left(\pi(A_\alpha)\pi(\bar A_\alpha) -\epsilon'(1-\epsilon')\right) \rmd \alpha,
\end{multline*}
where $\epsilon'\eqdef \epsilon/32$.
\begin{multline*}
\int_{I_c} \pi(A_\alpha)\pi(\bar A_\alpha)\rmd\alpha  = \int_{0}^{\infty}  \pi(A_\alpha)\pi(\bar A_\alpha)\rmd\alpha \\
=\int (f_c^2(y) - f_c^2(x))\textbf{1}_{\{(x,y): f_c^2(y)> f_c^2(x)\}} \pi(\rmd x)\pi(\rmd y)\\
=\frac{1}{2}\int \abs{f_c^2(y) - f_c^2(x)}\pi(\rmd x)\pi(\rmd y),
\end{multline*}
whereas
\[\int_{I_c}\epsilon'(1-\epsilon') \rmd \alpha = \epsilon'(1-\epsilon')\left(\texttt{ess-sup}(f_c^2) - \texttt{ess-inf}(f_c^2)\right).\]
For $y,z\in\Xset$,
\[|f_c(y)|^2 - |f_c(z)|^2 \leq \left|f(y) - f(z)\right|\left|f_c(y) + f_c(z)\right| \leq 4\|f\|_\infty\;\|f_c\|_\infty,\]
which implies that
\[\int_{I_c}\epsilon'(1-\epsilon') \rmd \alpha \leq 4\epsilon' \|f_c\|_\infty.\]
We conclude that
\begin{equation}\label{proof:cheeger:eq:1}
\sqrt{\e_K(f,f)} \geq \frac{\Phi_{\epsilon'}(K)}{\sqrt{8\pi(f_c^2)}}  \left(\int \abs{f_c^2(y) - f_c^2(x)}\pi(\rmd x)\pi(\rmd y) - 8\epsilon' \|f_c\|_\infty\right).\end{equation}
Since (\ref{proof:cheeger:eq:1})  holds true for all $c$, we get, by dominated convergence that
\begin{multline*}
\sqrt{\e_K(f,f)} \geq \limsup_{c\to+\infty} \frac{\Phi_{\epsilon'}(K)}{\sqrt{8\pi(f_c^2)}}  \left(\int \abs{f_c^2(y) - f_c^2(x)}\pi(\rmd x)\pi(\rmd y) - 8\epsilon' \|f_c\|_\infty\right)\\
= \frac{\Phi_{\epsilon'}(K)}{\sqrt{8}}\left(2\int \abs{f(y) - f(x)}\pi(\rmd x)\pi(\rmd y) -8\epsilon'\right).
\end{multline*}
By Jensen's inequality
\[\int\int|f(y)-f(x)|\pi(\rmd x)\pi(\rmd y)\geq \int|f(x) -\pi(f)|\pi(\rmd x) =\pi(|\bar f|),\]
where $\bar f\eqdef f-\pi(f)$. If $\pi(|\bar f|) \geq \sqrt{\textsf{Var}_\pi(f)}/2$, we conclude that
\begin{equation*}
\sqrt{\e_K(f,f)} \geq  \frac{\Phi_{\epsilon'}(K)}{\sqrt{8}}\left(\sqrt{\textsf{Var}_\pi(f)} - 8\epsilon'\right).\end{equation*}
Since the function $f$ is taken such that  $\textsf{Var}_\pi(f) >\epsilon$, using the  convexity inequality $\sqrt{x+y} \geq (\sqrt{x}+\sqrt{y})/\sqrt{2}$ valid for all $x,y\geq 0$ we have
\begin{multline*}
\sqrt{\textsf{Var}_\pi(f)} - 8\epsilon' = \sqrt{\textsf{Var}_\pi(f) - \frac{\epsilon}{2} + \frac{\epsilon}{2}} - \frac{\epsilon}{4}\geq \frac{\sqrt{\textsf{Var}_\pi(f) - \frac{\epsilon}{2}}}{\sqrt{2}} + \frac{\epsilon^{1/2}}{2} - \frac{\epsilon}{4}\\
\geq \frac{\sqrt{\textsf{Var}_\pi(f) - \frac{\epsilon}{2}}}{\sqrt{2}}.\end{multline*}
It follows that if $\pi(|\bar f|) \geq \sqrt{\textsf{Var}_\pi(f)}/2$, we have
\begin{equation}\label{proof:cheeger:eq:2}
\sqrt{\e_K(f,f)} \geq  \frac{\Phi_{\epsilon'}(K)}{\sqrt{16}}\sqrt{\textsf{Var}_\pi(f) - \frac{\epsilon}{2}}.\end{equation}

Going back to (\ref{proof:cheeger:eq:1}), using the value at $c=\pi(f)$, and recalling the notation $\bar f = f-\pi(f)$, we have
\begin{equation}\label{proof:cheeger:eq:3}
\sqrt{\e_K(f,f)} \geq \frac{\Phi_{\epsilon'}(K)}{\sqrt{8\pi(\bar f^2)}}   \left(\int\int \abs{\bar f^2(y) - \bar f^2(x)}\pi(\rmd x)\pi(\rmd y) - 16\epsilon'\right).
\end{equation}
We have
\[\int \abs{\bar f^2(y) - \bar f^2(x)}\pi(\rmd y)\geq \left|\int (\bar f^2(y) - \bar f^2(x))\pi(\rmd y)\right| =\left|\bar f^2(x) - \textsf{Var}_\pi(f)\right|,\]
so that
\begin{multline*}
\int\int \abs{\bar f^2(y) - \bar f^2(x)}\pi(\rmd x)\pi(\rmd y) \geq \int \left|\bar f^2(x) - \textsf{Var}_\pi(f)\right|\pi(\rmd x)\\
=\int\left\{\bar f^2(x) + \textsf{Var}_\pi(f) -2\min\left(\bar f^2(x),\textsf{Var}_\pi(f)\right)\right\}\pi(\rmd x)\\
=2\textsf{Var}_\pi(f) -2\textsf{Var}_\pi(f)\int\min\left(\left(\frac{\bar f(x)}{\sqrt{\textsf{Var}_\pi(f)}}\right)^2,1\right)\pi(\rmd x)\\
\geq 2\textsf{Var}_\pi(f) -2\textsf{Var}_\pi(f)\frac{\pi(|\bar f|)}{\sqrt{\textsf{Var}_\pi(f)}},
\end{multline*}
where the last inequality in the last display follows from the observation that $\min(1,x^2)\leq |x|$ for all $x\in\rset$. Hence if $\pi(|\bar f|)\leq \sqrt{\textsf{Var}_\pi(f)}/2$ we have
\[\int\int \abs{f^2(y) - f^2(x)}\pi(\rmd x)\pi(\rmd y)  \geq \textsf{Var}_\pi(f),\]
and we use this with (\ref{proof:cheeger:eq:3}) to deduce that
\[\sqrt{\e_K(f,f)} \geq \frac{\Phi_{\epsilon'}(K)}{\sqrt{8}}\times \left(\sqrt{\textsf{Var}_\pi(f) } -\frac{ 16\epsilon'}{\sqrt{\textsf{Var}_\pi(f)}}\right).\]
Since  $\textsf{Var}_\pi(f) >\epsilon$,
\begin{multline*}
\sqrt{\textsf{Var}_\pi(f) } -\frac{ 16\epsilon'}{\sqrt{\textsf{Var}_\pi(f) }} \geq \sqrt{\textsf{Var}_\pi(f) - \frac{\epsilon}{2} + \frac{\epsilon}{2}} -\frac{\epsilon^{1/2}}{2} \\
\geq \frac{\sqrt{\textsf{Var}_\pi(f)  -\frac{\epsilon}{2}}}{\sqrt{2}} + \frac{\epsilon^{1/2}}{2} -\frac{\epsilon^{1/2}}{2}
\geq \frac{\sqrt{\textsf{Var}_\pi(f)  -\frac{\epsilon}{2}}}{\sqrt{2}}.
\end{multline*}
In conclusion, if $\textsf{Var}_\pi(f)  > \epsilon$ we have
\begin{equation}\label{proof:cheeger:eq:4}
\sqrt{\e_K(f,f)} \geq \frac{\Phi_{\epsilon'}(K)}{\sqrt{16}}\left(\sqrt{\pi(f^2) - \epsilon/2}\right).
\end{equation}
We combine the last display and (\ref{proof:cheeger:eq:2}) to conclude that for all $f\in L^2(\pi)$ such that $\|f\|_\infty\leq 1$, and $\textsf{Var}_\pi(f) >\epsilon$, we have
\[\frac{\e_K(f,f)}{\pi(f^2) - \epsilon/2} \geq \frac{\Phi_{\epsilon'}^2(K)}{16},\]
which implies the lower bound.

\subsection{Proof of Proposition \ref{lb:cp}}\label{sec:proof:proplbcp}
Take a function $f:\;\Xset\to\rset$ such that $\|f\|_\infty \leq 1$, and $\textsf{Var}_\pi(f)>\zeta$. We have
\begin{multline*}
2\textsf{Var}_\pi(f) = \sum_{x\in\Xset_0}\sum_{y\in\Xset_0}(f(y)-f(x))^2\pi(x)\pi(y)  + 2\sum_{x\in\Xset_0}\sum_{y\in\Xset\setminus\Xset_0}(f(y)-f(x))^2\pi(x)\pi(y)\\
+ \sum_{x\in\Xset\setminus \Xset_0}\sum_{x\in\Xset\setminus \Xset_0}(f(y)-f(x))^2\pi(x)\pi(y)\\
\leq\sum_{x\in\Xset_0}\sum_{y\in\Xset_0}(f(y)-f(x))^2\pi(x)\pi(y) + 8\pi(\Xset\setminus\Xset_0). \end{multline*}
Using $\pi(\Xset_0)\geq 1 - (\zeta/8)$, we get 
\[2\left(\textsf{Var}_\pi(f)-\frac{\zeta}{2}\right) \leq \sum_{x\in\Xset_0}\sum_{y\in\Xset_0}(f(y)-f(x))^2\pi(x)\pi(y).\]
Hence
\[\frac{\e(f,f)}{\textsf{Var}_\pi(f)-\frac{\zeta}{2}} \geq \frac{\sum_{x\in\Xset_0}\sum_{y\in\Xset_0}(f(y)-f(x))^2\pi(x)K(x,y)}{\sum_{x\in\Xset_0}\sum_{y\in\Xset_0}(f(y)-f(x))^2\pi(x)\pi(y)}.\]
For $x\neq y$ in $\Xset_0$, let $\gamma_{xy}$ be a canonical path linking $x,y$ so that we can write, using the Cauchy-Schwarz inequality:
\begin{multline*}
(f(y)- f(x))^2 = \left(\sum_{e\in \gamma_{xy}}f(e_-) - f(e_+)\right)^2 \\
\leq \sum_{e\in \gamma_{xy}}\; \frac{|\gamma_{xy}|}{\pi(e_-)K(e_-,e_+)}\times  \pi(e_-)K(e_-,e_+)(f(e_-) - f(e_+))^2.\end{multline*}
It follows that
\begin{multline*}
\sum_{x\neq y}\left(f(y)- f(x)\right)^2\pi(x) \pi(y)\\
\leq \sum_{x\neq y} \sum_{e\in \gamma_{xy}}\frac{\pi(x)\pi(y)|\gamma_{xy}|}{\pi(e_-)K(e_-,e_+)}  \left\{\pi(e_-)K(e_-,e_+)(f(e_-) - f(e_+))^2\right\}\\
= \sum_{e\in\e_0} \;\sum_{\gamma_{xy}\ni e} \;\left\{\frac{\pi(x)\pi(y)|\gamma_{xy}|}{\pi(e_-)K(e_-,e_+)}\right\} \left\{\pi(e_-)K(e_-,e_+)(f(e_-) - f(e_+))^2\right\} \\
\leq \mathsf{m}(\Xset_0) \times \sum_{e\in\e_0} \pi(e_-)K(e_-,e_+)(f(e_-) - f(e_+))^2 \\
\leq \mathsf{m}(\Xset_0) \times \sum_{x\in\Xset_0}\sum_{y\in\Xset_0}\left(f(y)- f(x)\right)^2\pi(x) K(x,y).
\end{multline*}
As a result we conclude that for all $f:\;\Xset\to\rset$, with $\|f\|_\infty\leq 1$,  and $\textsf{Var}_\pi(f)>\zeta$
\[\frac{\e(f,f)}{\textsf{Var}_\pi(f)-\frac{\zeta}{2}} \geq \frac{1}{\mathsf{m}(\Xset_0)}.\]

\bibliographystyle{ims}
\bibliography{document}

\end{document}